\newtheorem{theorem}{Theorem}
\newtheorem{definition}{Definition}
\newtheorem{lemma}{Lemma}
\newtheorem{example}{Example}
\newtheorem{proposition}{Proposition}
\begin{document}

\thispagestyle{plain}
\begin{center}
    \Large
    \textbf{Rationalizable Behavior in the Hotelling Model with Waiting Costs \footnote{ I would like to acknowledge Andrés Perea for his valuable contributions as my supervisor. His guidance, availability
for discussions, and willingness to provide constructive criticism have been greatly appreciated} }
        
   \vspace{0.4cm}
   \large

   \vspace{0.4cm}
    \text{Joep van Sloun} \footnote{Department of Quantitative Economics, School of Business and Economics, Maastricht University, 6200 MD
Maastricht, THE NETHERLANDS, Email: j.vansloun@maastrichtuniversity.nl
}
       \large

    \vspace{0.4cm}
    \text{\today}
    \vspace{0.9cm}

    \end{center}

\begin{abstract}
This paper revisits the Hotelling model with waiting costs \cite{kohlberg1983equilibrium}, focusing on two specific settings where pure Nash equilibria do not exist: the asymmetric model with two firms and the symmetric model with three firms. In the asymmetric two-firm model, we show that the weaker concept of point rationalizability has strong predictive power, as it selects exactly two locations for both firms. As the two firms become more similar in their efficiency in handling queues of consumers, the two point rationalizable locations converge towards the center of the line. In the symmetric three-firm model, the set of point rationalizable choices forms an interval. This interval is shrinking in the inefficiency levels of the firms in handling queues of consumers. 
\end{abstract}
\bigskip

  \noindent\textbf{Keywords:} Hotelling model, network externalities, asymmetry, rationalizability  \\
            \vspace{0in}\\
            \noindent\textbf{JEL Codes:} C72, L11 \\

\newpage
\section{Introduction}
The influential paper by \citeA{hotbllino1929stability} explores spatial competition among firms. This model has been widely utilized to analyze various economic scenarios, including political competition \cite{downs1957economic}. Several variations of the original model exist, incorporating factors such as more than two firms, quality differences between products, and pricing behaviors. In the original model, two firms must simultaneously choose a location on a line, which represents the market. Consumers are uniformly distributed along this line and incur transportation costs when traveling to buy a product from one of the firms.
\newline
\newline
A key assumption in the classical model is that consumers will choose the nearest firm, given the cost of traveling to obtain the product. \citeA{kohlberg1983equilibrium} observed a concerning aspect of the original Hotelling model: the market shares of the firms are discontinuous in the locations, meaning that a firm could drastically increase or decrease its market share by making a small change in its chosen location. To address this issue, \citeA{kohlberg1983equilibrium} proposed a variation of the original model. In this model, consumers not only care about travel costs but also about the (average) waiting time for service from the firm, assumed to be an increasing function of the market share of the firm. He demonstrated that the inclusion of this extra cost for consumers turns the market shares of the firms into continuous functions of their locations. Lastly, he assumes that each firm is equally efficient at handling queues of consumers.
\newline
\newline
\citeA{kohlberg1983equilibrium} claimed that pure Nash equilibria only exist when the model consists of two firms. With two firms, the unique pure Nash equilibrium is where both firms locate at the center of the line. However, \citeA{peters2018hotelling} showed that this claim is not true and proved that under certain conditions, pure Nash equilibria also exist in Kohlberg's model for 4 firms and 6 firms. Nevertheless, \citeA{peters2018hotelling} verified that the claim holds for 3 and 5 firms. Furthermore, they relaxed the assumption of firms being equally efficient at dealing with queues of consumers. They showed that with two asymmetric firms in terms of their inefficiency level, no pure Nash equilibrium exists.
\newline
\newline
 This paper focuses on two variations of the Hotelling model with waiting costs. The first variation is of \citeA{peters2018hotelling}, where two asymmetric firms compete for consumers. The second is of \citeA{kohlberg1983equilibrium}, with three symmetric firms. Since no Nash equilibrium exists for these two settings, we propose to use a different solution concept called point rationalizability. \citeA{pearce1984rationalizable} and \citeA{bernheim1984rationalizable} criticized Nash equilibrium and independently developed rationalizability as an alternative solution concept. Their main critique of Nash equilibrium was that it can be too restrictive in its assumptions. This could partially explain why no solution is found in the Hotelling model with waiting costs, even for basic cases such as two asymmetric firms or three symmetric firms. Rationalizability requires that each player believes his opponents choose rationally, believes that his opponents believe that the other players choose rationally, and so on. Nash equilibrium additionally imposes a correct beliefs assumption, stating that each player must believe his opponents are correct about his own beliefs and that his opponents share his beliefs about other players \cite{aumann1995epistemic, perea2007one,spohn1982make}. However, rationalizability does not impose this constraint, making it a more natural and permissive concept compared to Nash equilibrium.
\newline
\newline
This paper characterizes the set of point rationalizable choices, as opposed to the set of rationalizable choices. The primary distinction between rationalizability and point rationalizability lies in the beliefs held by the firms. Point rationalizability solely considers point beliefs, where each belief assigns probability 1 to exactly one of the opponent's pure choices. In contrast, rationalizability encompasses probabilistic beliefs as well. In the context of variations of the Hotelling model, much emphasis has been placed on pure Nash equilibria, where probabilistic beliefs are not taken into account. Given the aim to compare our results with pure Nash equilibria or their absence, point rationalizability emerges as the preferred solution concept over rationalizability.
\newline
\newline
 The first main result of this paper is a characterization of the set of point rationalizable locations for two firms in the asymmetric Hotelling model with waiting costs. We show that this set only consists of two locations for both firms. These two locations are further from the center as their difference in inefficiency level increases, which implies that a greater difference in inefficiency levels between the firms can lead to a greater level of differentiation. A larger inefficiency gap incentivizes the less efficient firm to locate further away, targeting consumers who prefer it due to lower transportation costs, despite its inefficiency. The second main result is a characterization of the set of point rationalizable locations of each firm in the symmetric Hotelling model with three firms present. In this setting, the set of point rationalizable choices of a firm is an interval that is symmetric around the center of the line. Furthermore, the greater the inefficiency level of the firms, the smaller the interval becomes. Higher inefficiency levels lead consumers to care more about waiting costs than transportation costs, reducing the advantage of clustering near competitors and prompting firms to locate more centrally to capture a broader market.
\newline
\newline
The structure of the paper is as follows: Section 2 formally introduces the Hotelling model with waiting costs. Section 3 defines point rationalizability. The set of rationalizable locations for two firms at any inefficiency level, is covered in Section 4. Section 5 covers the case with three firms and symmetric inefficiency levels. Section 6 consists of some concluding remarks. All proofs are collected in the appendix.

\section{The Hotelling Model with Waiting Costs}
Consider a market with homogeneous products and regulated prices where $n \geq 2$ firms are competing for market shares on a line segment $[0,1]$. We assume that consumers are uniformly distributed on the line segment. Each firm $i \in N$, where $N = \{1,...,n\}$, simultaneously chooses a location $c_i \in C_i=[0,1]$. After the firms have chosen a location on the line, each consumer will buy one product, from one firm. If a consumer $x \in [0,1]$ buys from firm $i$, positioned at $c_i \in [0,1]$, who has a market share of $s_i(c_i,c_{-i})$, where $ c_{-i} = (c_j)_{j \neq i}$, then the total cost of this consumer is
$$\lvert c_i - x  \rvert + a_i \cdot s_i(c_i,c_{-i}).$$
Similar to the classical model of \citeA{hotbllino1929stability}, the first term corresponds to the distance that consumer $x$ has to travel along the line to arrive at the location $c_i$. The second term is due to \citeA{kohlberg1983equilibrium}, and corresponds to additional costs it has to pay, which depend on the market share of firm $i$. For example, the higher the market share of firm $i$, the greater the queue in front of the store, incurring waiting costs for consumer $x$. The coefficient $a_i >0$ corresponds to the inefficiency level of firm $i$. The lower the value of $a_i$, the more efficient firm $i$ is at handling queues of consumers and reducing waiting times. The sum of all markets shares should add up to one, so we have that
$$\sum_{i=1}^{n}s_i(c_i,c_{-i})=1.$$
A discerning reader may question what factors determine market shares initially. Market shares are determined by a choice function, which assigns a firm to each consumer on the line based on the locations chosen by all firms. This choice function should possess the property that, once assignments are made, no consumer has an incentive to deviate from their designated firm. \citeA{peters2018hotelling} showed the existence of a unique cost-minimizing choice function that assigns intervals of consumers to firms in a monotonic manner. To elaborate on this, consider a setting where the locations are chosen by the firms such that $c_1 \leq c_2 \leq ... \leq c_n$. In this context, there exist unique values $0 < x_1 < x_2 < ... < x_{n-1} < 1$ such that all consumers in $[0, x_1]$ buy from firm 1, all consumers in $(x_1, x_2]$ buy from firm 2, and so on until all consumers in $(x_{n-1}, 1]$ buy from firm $n$. This specific choice function aligns with the approach adopted in this paper and is also implied by \citeA{kohlberg1983equilibrium}. Let $x_0=0$ and $x_n=1$. To compute the values of $x_1$ until $x_{n-1}$, the following system of equations has to be solved, which has a unique solution. For $i=1,...n-1$,
\begin{equation}\label{BIGN}
\lvert c_i-x_i \rvert + a_i \cdot (x_i-x_{i-1})=\lvert c_{i+1}-x_i \rvert +a_{i+1} \cdot (x_{i+1}-x_i).
\end{equation}
For instance, with three firms and $c_1 \leq c_2 \leq c_3$, this cost-minimizing choice function assigns firm 1 to consumers in the interval $[0, x_1]$, firm 2 to consumers in the interval $(x_1,x_2]$, with the remaining interval of consumers $(x_2,1]$ assigned to firm 3. Then, we have that $s_1(c_1,c_2,c_3)=x_1$, $s_2(c_2,c_1,c_3)=x_2-x_1$ and $s_3(c_3,c_1,c_2)=1-x_2$. 

\section{Point Rationalizability}
In this section, we define the concept of point rationalizability \cite{bernheim1984rationalizable}. 
\newline
\newline 
Each firm $i \in N$ can motivate his choice (location) by forming a belief about his opponent's choice in $C_j$. We only consider point beliefs. Hence, firm $i$ believes with probability 1 that his opponents choose some choice combination $c_{-i} \in C_{-i}=\prod_{j \neq i} C_j$. Thus, in our model, a choice is optimal for a firm if it maximizes his market share for some point belief. This is equivalent to a choice being optimal for a firm if this choice maximizes his market share given that his opponents choose some choice combination $c_{-i}$. This leads to the following definition:
\begin{definition}
 A choice $c_i\in C_i$ is optimal for firm $i$ given a choice combination $c_{-i}$ if $\forall \ c_i^{*}\in C_i$, 

$$s_i(c_i,c_{-i}) \geq s_i(c_i^{*},c_{-i}).$$

If $c_i \in A_i \subseteq C_i$ and the above inequality holds for every $c_i^{*}\in A_i$, then $c_i$ is optimal in $A_i$ given $c_{-i}$.
\end{definition}
Next, we define an iterative procedure to characterize the set of point rationalizable choices for both firms. The following inductive procedure resembles the method introduced by \citeA{pearce1984rationalizable} but is adapted for point beliefs in pure choices.
\begin{definition} \label{set of point rationalizable choices}
Let $P_i(0)=C_i$ for all $i \in N$. Then $P_i(k)$ is inductively defined for $k=1,2,...$ by $P_i(k)=\{c_i \in P_i(k-1):\ \text{there exists}$ a $c_{-i}$ in $\prod_{j \neq i} P_{j}(k-1)$ such that $c_i$ is optimal in $P_i(k-1)$ given $c_{-i} \}$. The set of point rationalizable choices for firm $i$ is then $P_i={\bigcap}_{k=1}^\infty P_i(k)$. 
\end{definition}

\section{Point Rationalizable Locations with Two Firms}
In this section, we restrict our analysis to a setting with two firms. Given $c_1 \leq c_2$, equation (\ref{BIGN}) simplifies to 
$$\lvert c_1-x_1 \rvert + a_1 \cdot x_1=\lvert c_2-x_1 \rvert +a_2 \cdot (1-x_1).$$
This equation can be solved for $x_1$. Although $x_1$ depends on $c_1$ and $c_2$, we write $x_1$ instead of $x_1(c_1,c_2)$ for ease of notation. In this case, firm 1's market share is $x_1$ and firm 2's market share is $1-x_1$. Conversely, if $c_2 \leq c_1$, equation (\ref{BIGN}) simplifies to  
$$\lvert c_2-x_1 \rvert + a_2 \cdot x_1=\lvert c_1-x_1 \rvert +a_1 \cdot (1-x_1).$$
Here, firm 2's market share is $x_1$ and firm 1's market share is $1-x_1$. After defining all relevant concepts, we make use of a Lemma by \citeA{peters2018hotelling}, which helps us characterize the choices that are optimal for firm $i$ given certain point beliefs.
\begin{lemma}\label{marc1}
    If $c_i < c_j$ is optimal for firm $i$ given $c_j$, then $c_i=x_1$. Similarly, if $c_i>c_j$ is optimal for firm i given $c_j$, then $c_i=1-x_1$.
\end{lemma}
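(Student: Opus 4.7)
The plan is to analyze firm $i$'s market share $s_i(\cdot, c_j)$ as a function of its own location, with $c_j$ fixed, and show that restricted to $c_i \leq c_j$ it is strictly quasi-concave with a unique interior peak at $c_i = x_1$. The statement for $c_i > c_j$ will then follow by the symmetric reflection $c \mapsto 1 - c$, which maps the left-firm problem into the right-firm problem.

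To carry out the first step, I would substitute $c_i < c_j$ into the simplified boundary equation from the paragraph preceding the lemma and split on the sign of $x_1 - c_i$. Sub-case (A): if $c_i \leq x_1 \leq c_j$, resolving the absolute values gives $x_1 = (c_i + c_j + a_j)/(2 + a_i + a_j)$. Sub-case (B): if $x_1 < c_i \leq c_j$, resolving gives $x_1 = (c_j - c_i + a_j)/(a_i + a_j)$. A direct calculation shows that the two expressions coincide exactly at the unique fixed point $c_i^\ast = (c_j + a_j)/(1 + a_i + a_j)$, where $c_i^\ast = x_1$; to the left of $c_i^\ast$ we are in sub-case (A) and to the right in sub-case (B).

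Differentiating $s_i = x_1$ with respect to $c_i$ within each sub-case then gives $\partial s_i / \partial c_i = 1/(2 + a_i + a_j) > 0$ in (A) and $\partial s_i / \partial c_i = -1/(a_i + a_j) < 0$ in (B). Consequently $s_i(\cdot, c_j)$ is strictly increasing on $[0, c_i^\ast]$ and strictly decreasing on $[c_i^\ast, c_j]$, so its unique maximizer in $[0, c_j]$ is $c_i^\ast$, at which $c_i^\ast = x_1$. Since the hypothesis of the lemma says the global maximizer over $[0,1]$ happens to satisfy $c_i < c_j$, that maximizer must be this unique peak, so $c_i = x_1$. The case $c_i > c_j$ is handled identically after swapping the roles of the two firms and applying the reflection about $1/2$, yielding the analogous conclusion stated in the lemma.

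The main obstacle is the careful absolute-value bookkeeping that splits the problem into sub-cases (A) and (B) and the verification that the two piecewise branches of $x_1$ glue together continuously and agree only at $c_i^\ast$, so that $s_i(\cdot,c_j)$ has a genuine unique peak rather than a plateau or multiple interior critical points. A secondary subtlety is the degenerate parameter regime (for instance $c_j \leq a_j/(a_i + a_j)$) in which $c_i^\ast$ would fall outside the interior of $[0, c_j)$ and no optimum strictly below $c_j$ exists; however, the lemma's conditional phrasing — supposing that the overall optimum actually satisfies $c_i < c_j$ — already rules out this degeneracy, so nothing further needs to be said there.
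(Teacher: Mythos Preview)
Your proof is correct and takes a more computational route than the paper's. The paper does not give a self-contained proof of this lemma but cites \cite{peters2018hotelling} and then offers an intuitive argument in the spirit of Kohlberg: assuming $c_i<x_1$ and repositioning slightly to the right, one argues by contradiction (using only the indifference condition at the boundary consumer) that the new $x_1'$ must exceed $x_1$; the mirror argument handles $x_1<c_i<c_j$. That argument never solves for $x_1$ explicitly and therefore carries over verbatim to nonlinear waiting-cost functions. Your approach, by contrast, resolves the absolute values, obtains $x_1$ as a piecewise-linear function of $c_i$, and reads off the unique peak from the signs of the two slopes. This is less general but has the advantage of delivering the closed-form best reply $c_i^{\ast}=(c_j+a_j)/(1+a_i+a_j)$ as a by-product---which the paper then computes separately in the paragraph following the lemma, so your route actually merges the lemma and the subsequent best-reply computation into one step.

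One small remark: your split into (A) and (B) tacitly assumes $x_1\leq c_j$ in sub-case~(A). The residual possibility $c_i<c_j<x_1$ occurs precisely when $c_j<a_j/(a_i+a_j)$, which is exactly the degenerate regime you already flag at the end and correctly dismiss via the lemma's conditional hypothesis, so nothing is missing.
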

To provide intuition for this result, consider the following argument, similar to \citeA{kohlberg1983equilibrium}. Suppose that $c_1 < x_1$, where the latter represents firm 1's market share, as depicted in Figure 1a. Now, let firm 1 reposition slightly to the right, so $c_1' > c_1$, resulting in a new market share for firm 1, denoted as $x_1'$. We will now show that $x_1' > x_1$. By contradiction, assume that $x_1' \leq x_1$. Consider a consumer at $x_1'$. This consumer is indifferent between buying from firm 1 and firm 2, despite firm 1 being strictly closer to him than before and having a lower or equal market share than before, both of which reduce his costs. Hence, the only possible reason for this consumer to be indifferent between firm 1 and 2 is if the market share of firm 2 has strictly decreased, which is a contradiction, as we started by assuming that $x_1' \leq x_1$. Therefore, increasing $c_1$ as long as $c_1 < x_1$ leads to a greater market share for firm 1.
\newline
\newline
If $ x_1 < c_1 < c_2$, then all consumers between $0$ and $x_1$ will buy from firm 1, and all consumers between $x_1$ and $1$ will buy from firm 2. This scenario has been depicted in Figure 1b. Now let firm 1 locate slightly to the left, to $c_1' < c_1$, resulting in a new market share for firm 1, denoted as $x_1'$. We will now show that $x_1' > x_1$. By contradiction, assume that $x_1' \leq x_1$. Consider a consumer at $x_1'$. This consumer is indifferent between buying from firm 1 and firm 2, despite firm 1 being strictly closer to him than before and having a lower or equal market share than before. Therefore, the same argument as before implies that the market share of firm 2 has strictly decreased, which is a contradiction as we started by assuming that $x_1' \leq x_1$. Hence, decreasing $c_1$ as long as $c_1 > x_1$ leads to a greater market share for firm 1.
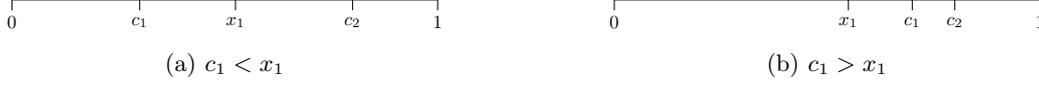
\begin{figure}
     \centering
     \begin{subfigure}[b]{0.48\textwidth}
        \centering
     \scalebox{0.7} {
    \begin{tikzpicture}
[scale=0.8]
   \draw[thick] (0,0) -- (10,0);
\draw (0,-0.25) node [anchor=north] {$0$};
\draw[black] (0,0) -- (0, -0.25);
             \draw[black] (10,0) -- (10, -0.25);
             \draw (10,-0.25) node [anchor=north] {$1$};
\draw[black] (3,0) -- (3, -0.25);
 \draw (3,-0.25) node [anchor=north] {$c_1$};
\draw[black] (8,0) -- (8, -0.25);
 \draw (8,-0.25) node [anchor=north] {$c_2$};
 \draw[black] (5.25,0) -- (5.25, -0.25);
 \draw (5.25,-0.25) node [anchor=north] {$x_1$};

\end{tikzpicture}
}
   \caption{$c_1 < x_1$}
    \label{fig:my_label}
     \end{subfigure}
     \hfill
     \begin{subfigure}[b]{0.48\textwidth}
         \centering
     \scalebox{0.7} {
   \begin{tikzpicture}
[scale=0.8]
    \draw[thick] (0,0) -- (10,0);
\draw (0,-0.25) node [anchor=north] {$0$};
\draw[black] (0,0) -- (0, -0.25);
             \draw[black] (10,0) -- (10, -0.25);
             \draw (10,-0.25) node [anchor=north] {$1$};

\draw[black] (7,0) -- (7, -0.25);
 \draw (7,-0.25) node [anchor=north] {$c_1$};
\draw[black] (8,0) -- (8, -0.25);
 \draw (8,-0.25) node [anchor=north] {$c_2$};
 \draw[black] (5.5,0) -- (5.5, -0.25);
 \draw (5.5,-0.25) node [anchor=north] {$x_1$};

\end{tikzpicture}
}
   \caption{$c_1 > x_1$}
    \label{fig:my_label}
     \end{subfigure}
        \caption{Intuition for Lemma 1}
        \label{fig:three graphs}
\end{figure}
\newline
\newline
For the analysis that follows, we assume that $a_1 < a_2$, meaning that firm 1 is more efficient than firm 2. Let us first consider cases where $c_1 < c_2$ is optimal for firm 1 given $c_2$. Then Lemma \ref{marc1} implies that $c_1 = x_1$ and equality (\ref{BIGN}) simplifies to
$$a_1 \cdot c_1 = c_2-c_1+a_2(1-c_1),$$
which yields $c_1= s_1(c_1,c_2) = \frac{c_2+a_2}{1+a_1+a_2}$. In this case, where $c_1 < c_2$, we need $c_1=\frac{c_2+a_2}{1+a_1+a_2} < c_2$, implying $c_2 > \frac{a_2}{a_1+a_2}$. Similarly, if $c_1 > c_2$ is optimal for firm 1 given $c_2$, we obtain $c_1= \frac{c_2+a_1}{1+a_1+a_2}$, $s_1(c_1,c_2)=1-c_1=\frac{1+a_2-c_2}{1+a_1+a_2}$, and $c_2 < \frac{a_1}{a_1+a_2}$.
\newline
\newline
Lastly, if $c_1=c_2$, then equality (\ref{BIGN}) simplifies to
$$\lvert c_1 - x_1  \rvert + a_1 \cdot x_1=\lvert c_1 - x_1  \rvert + a_2 \cdot (1-x_1),$$
and hence
$$ a_1 \cdot x_1= a_2 \cdot (1-x_1).$$ 
This leads to $s_1(c_1,c_2)=x_1=\frac{a_2}{a_1+a_2}$. As a result, no matter what location firm 2 chooses, firm 1 can obtain a market share of $\frac{a_2}{a_1+a_2}$ by choosing $c_1=c_2$. Firm 1's choice $c_1=c_2$ is optimal given $c_2$ if this gives him a market share that is at least as high as positioning to the left of $c_2$ or to the right of $c_2$. This implies that $\frac{a_2}{a_1+a_2} \geq \frac{c_2+a_2}{1+a_1+a_2}$ and $\frac{a_2}{a_1+a_2} \geq \frac{1+a_2-c_2}{1+a_1+a_2}$. These two inequalities hold if $c_2 \in [\frac{a_1}{a_1+a_2},\frac{a_2}{a_1+a_2}]$. Similarly, firm 1's choice $c_1=\frac{c_2+a_1}{1+a_1+a_2}$ is optimal for firm 1 given $c_2$ if $c_2\in[0,\frac{a_1}{a_1+a_2}]$, and $c_1=\frac{c_2+a_2}{1+a_1+a_2}$ is optimal for firm 1 given $c_2$ if $c_2\in [\frac{a_2}{a_1+a_2},1]$. This leads to the following increasing reaction function of firm 1, depicted in Figure 2:

\begin{figure}[t]
     \centering
     \scalebox{0.8} {
    \begin{tikzpicture}
[scale=0.8]
   \draw[thick] (0,0) -- (10,0);
\draw[thick] (0,0) -- (0,10);
\draw (0,0) node [anchor=east] {$0$};
\draw (-0.7,8.5) node [anchor=east] {$c_1$};

\draw (9,-0.7) node [anchor=north] {$c_2$};

\draw[ultra thick] (0,3.03) -- (4.348,4.438);
\draw[ultra thick]  (4.348,4.438) -- (5.652,5.652);
\draw[ultra thick] (5.652,5.652) -- (10,6.97);
\draw[dotted]  (4.348,4.438) --  (4.348,0);
\draw[dotted] (5.652,5.652) -- (5.652,0);
\draw[dotted]  (10,0) --  (10,6.97);

\draw[dotted] (5.652,5.652) -- (0,5.652);
\draw[dotted]  (4.348,4.438) --  (0,4.348);
\draw[dotted]  (0,6.97) --  (10,6.97);

\draw[black](0,3.03) -- (-0.25,3.03);
\draw (-0.25,3.03) node [anchor=east] {$\frac{a_1}{1+a_1+a_2}$};
\draw[black](0,4.348) -- (-0.25,4.348);
\draw (-0.25,4.348) node [anchor=east] {$\frac{a_1}{a_1+a_2}$};
\draw[black](4.348,0) -- (4.348,-0.25);
\draw (4.348,-0.25) node [anchor=north] {$\frac{a_1}{a_1+a_2}$};
\draw[black](0,5.652) -- (-0.25,5.652);
\draw (-0.25,5.652) node [anchor=east] {$\frac{a_2}{a_2+a_2}$};
\draw[black](5.652,0) -- (5.652,-0.25);
\draw (5.652,-0.25) node [anchor=north] {$\frac{a_2}{a_1+a_2}$};
\draw[black](0,6.97) -- (-0.25,6.97);
\draw (-0.25,6.97) node [anchor=east] {$\frac{1+a_2}{1+a_1+a_2}$};

           \draw[black] (0,0) -- (0, -0.25);
             \draw[black] (10,0) -- (10, -0.25);
             \draw (10,-0.25) node [anchor=north] {$1$};

             \draw[black] (0,10) -- (-0.25,10);
             \draw (-0.25,10) node [anchor=east] {$1$};

\end{tikzpicture}
}
   \caption{Reaction function of firm 1}
    \label{fig:my_label}
    
\end{figure}
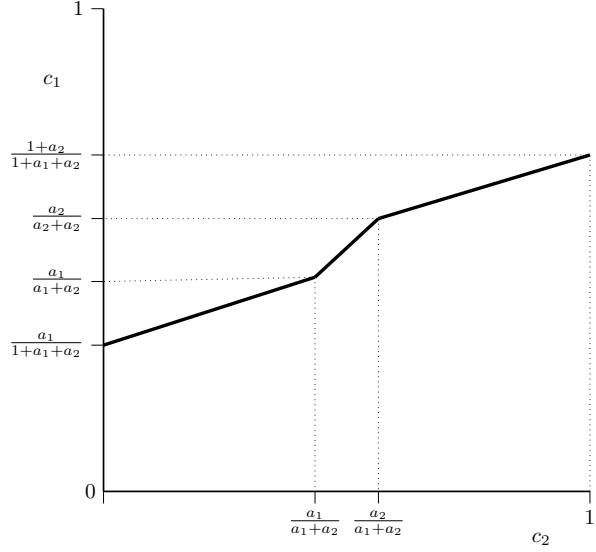

 \[   
c_1(c_2)=
     \begin{cases}
      \frac{c_2+a_1}{1+a_1+a_2}  & \text{if} \ c_2 \in [0,\frac{a_1}{a_1+a_2}) \\
     c_2  & \text{if} \  c_2 \in [\frac{a_1}{a_1+a_2},\frac{a_2}{a_1+a_2}]  \\
      \frac{c_2+a_2}{1+a_1+a_2}  & \text{if} \  c_2 \in (\frac{a_2}{a_1+a_2},1] \\
     \end{cases}
\]

Hence, the set of choices of firm 1 that survive round 1 of the iterative procedure are $P_1^1=[\frac{a_1}{1+a_1+a_2},\frac{1+a_2}{1+a_1+a_2}]$.
\begin{figure}[t]
     \centering
     \begin{subfigure}[b]{0.48\textwidth}
         \centering
     \scalebox{0.8} {
    \begin{tikzpicture}
[scale=0.8]
   \draw[thick] (0,0) -- (10,0);
\draw[thick] (0,0) -- (0,10);
\draw (0,0) node [anchor=east] {$0$};
\draw (-0.7,8.5) node [anchor=east] {$c_2$};

\draw (9,-0.7) node [anchor=north] {$c_1$};

\draw[ultra thick] (0,3.939) -- (5,5.455);
\draw[ultra thick]  (5,4.545) -- (10,6.061);
\draw[dotted]  (5,5.455) -- (5,0);
\draw[dotted]  (5,5.455) -- (0,5.455);
\draw[dotted]  (5,4.545) -- (0,4.545);
\draw[dotted]  (10,6.061) -- (0,6.061);

  \draw[black] (0,0) -- (0, -0.25);
             \draw[black] (10,0) -- (10, -0.25);
             \draw (10,-0.25) node [anchor=north] {$1$};
             \draw[black] (0,10) -- (-0.25,10);
             \draw (-0.25,10) node [anchor=east] {$1$};
             
              \draw[black] (0,3.939) -- (-0.25,3.939);
             \draw (-0.25,3.939) node [anchor=east] {$\frac{a_2}{1+a_1+a_2}$};
             \draw[black] (0,5.455) -- (-0.25,5.455);
             \draw (-0.25,5.455) node [anchor=east] {$\frac{\frac{1}{2}+a_2}{1+a_1+a_2}$};
              \draw[black] (0,4.545) -- (-0.25,4.545);
             \draw (-0.25,4.545) node [anchor=east] {$\frac{\frac{1}{2}+a_1}{1+a_1+a_2}$};
             \draw[black] (0,6.061) -- (-0.25,6.061);
             \draw (-0.25,6.061) node [anchor=east] {$\frac{1+a_1}{1+a_1+a_2}$};
              \draw[black] (5,0) -- (5,-0.25);
             \draw (5,-0.25) node [anchor=north] {$\frac{1}{2}$};

\end{tikzpicture}
}
   \caption{$a_2-a_1 \leq \frac{1}{2}$}
    \label{fig:my_label}
     \end{subfigure}
     \hfill
     \begin{subfigure}[b]{0.48\textwidth}
         \centering
     \scalebox{0.8} {
      \begin{tikzpicture}
[scale=0.8]
   \draw[thick] (0,0) -- (10,0);
\draw[thick] (0,0) -- (0,10);
\draw (0,0) node [anchor=east] {$0$};
\draw (-0.7,8.5) node [anchor=east] {$c_2$};

\draw (9,-0.7) node [anchor=north] {$c_1$};

\draw[ultra thick] (0,4.595) -- (5,5.946);
\draw[ultra thick]  (5,4.054) -- (10,5.405);
\draw[dotted]  (5,5.946) -- (5,0);
\draw[dotted]  (5,5.946) -- (0,5.946);
\draw[dotted]  (5,4.054) -- (0,4.054);
\draw[dotted]  (10,5.405) -- (0,5.405);

 \draw[black] (0,4.595) -- (-0.25,4.595);
             \draw (-0.25,4.595) node [anchor=east] {$\frac{a_2}{1+a_1+a_2}$};
             \draw[black] (0,5.946) -- (-0.25,5.946);
             \draw (-0.25,5.946) node [anchor=east] {$\frac{\frac{1}{2}+a_2}{1+a_1+a_2}$};
              \draw[black] (0,4.054) -- (-0.25,4.054);
             \draw (-0.25,4.054) node [anchor=east] {$\frac{\frac{1}{2}+a_1}{1+a_1+a_2}$};
             \draw[black] (0,5.405) -- (-0.25,5.405);
             \draw (-0.25,5.405) node [anchor=east] {$\frac{1+a_1}{1+a_1+a_2}$};

  \draw[black] (0,0) -- (0, -0.25);
            
             \draw[black] (10,0) -- (10, -0.25);
             \draw (10,-0.25) node [anchor=north] {$1$};
             \draw[black] (0,10) -- (-0.25,10);
             \draw (-0.25,10) node [anchor=east] {$1$};            
             \draw[black] (5,0) -- (5,-0.25);
             \draw (5,-0.25) node [anchor=north] {$\frac{1}{2}$};

\end{tikzpicture}
    
}
   \caption{$\frac{1}{2} \leq  a_2 -a_1 \leq 1$}
    \label{fig:my_label}
     \end{subfigure}
        \caption{Reaction correspondence of firm 2}
        \label{fig:three graphs}
\end{figure}
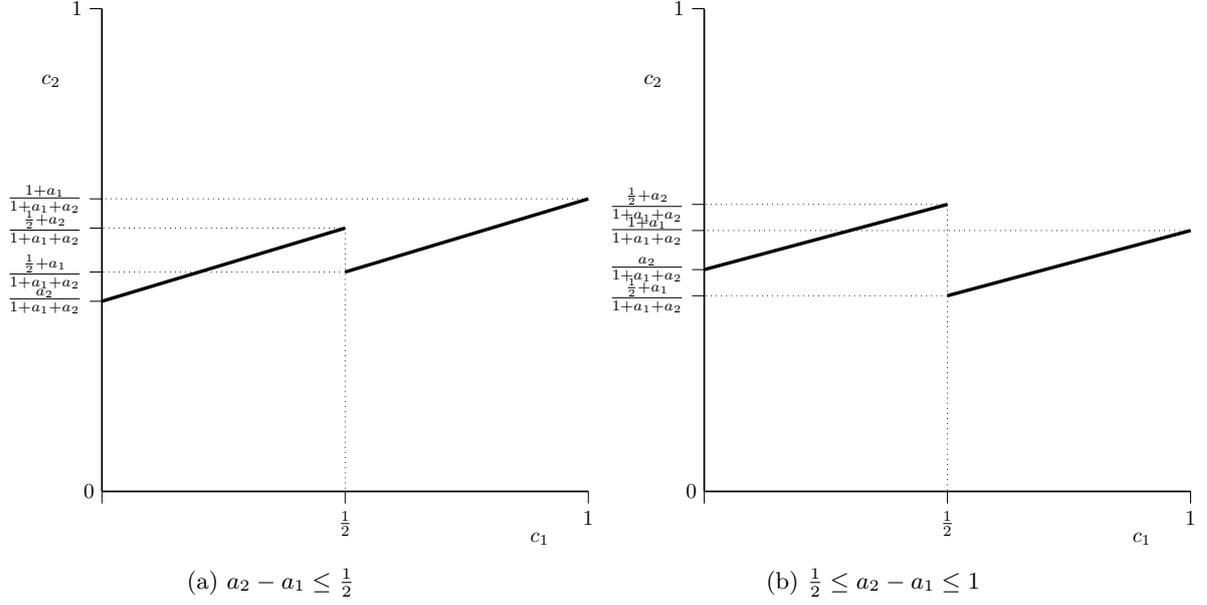
\newline
\newline
Now, consider firm 2. If $c_2 < c_1$ is optimal for firm 2 given $c_1$, then Lemma \ref{marc1} implies that $c_2 = x_1$, and equality (\ref{BIGN}) simplifies to 
$$a_2 \cdot c_2= c_1 - c_2 + a_1 \cdot (1-c_2)$$
Solving for $c_2$ leads to $c_2 = s_2(c_1,c_2) = \frac{c_1 + a_1}{1 + a_1 + a_2}$. In this case, where $c_2 < c_1$, we also have $\frac{c_1 + a_1}{1 + a_1 + a_2} < c_1$, implying $c_1 > \frac{a_1}{a_1 + a_2}$. Similarly, if $c_2 > c_1$ is optimal for firm 2, then we obtain $c_2 = \frac{c_1 + a_2}{1 + a_1 + a_2}$, $s_2(c_1,c_2) = 1 - c_2 = \frac{1 + a_1 - c_1}{1 + a_1 + a_2}$, and $c_1 < \frac{a_2}{a_1 + a_2}$.
\newline
\newline
Next, note that, similarly to the analysis for firm 1, if $c_2=c_1$, then $s_2(c_1,c_2)=\frac{a_1}{a_1+a_2}$. However, choosing $c_2=c_1$ is never an optimal choice for firm 2 given $c_1$. In other words, choosing either some $c_2 < c_1$ or $c_2 > c_1$ always yields strictly greater market share. Specifically, if $c_1 \leq \frac{1}{2}$, then choosing $c_2=\frac{c_1+a_2}{1+a_1+a_2}$ yields a strictly greater market share than choosing $c_2=c_1$. Conversely, if $c_1 \geq \frac{1}{2}$, then choosing $c_2=\frac{c_1+a_1}{1+a_1+a_2}$ yields a strictly greater market share than choosing $c_2=c_1$.
\newline
\newline
As $c_2=c_1$ is never optimal for firm 2 given $c_1$, it will either choose to position to the left of firm 1, which leads to a market share of $s_2(c_1,c_2)=c_2=x_1=\frac{c_1+a_2}{1+a_1+a_2}$, or to the right of firm 1, which leads to a market share of $s_2(c_1,c_2)=\frac{1+a_1-c_1}{1+a_1+a_2}$. We can observe that choosing $c_2 = \frac{c_1+a_1}{1+a_1+a_2}$, which is to the left of firm 1, is optimal for firm 2 given $c_1$ if $\frac{c_1+a_1}{1+a_1+a_2} \geq \frac{1+a_1-c_1}{a+a_1+a_2}$, implying $c_1 \geq \frac{1}{2}$. Similarly, $c_2 = \frac{c_1+a_2}{1+a_1+a_2}$ is optimal for firm 2 given $c_1$ if $c_1 \leq \frac{1}{2}$. This leads to the following reaction correspondence depicted below.

\[   
c_2(c_1)=
     \begin{cases}
      \frac{c_1+a_2}{1+a_1+a_2}  & \text{if} \ c_1 \in [0,\frac{1}{2}) \\
       \frac{\frac{1}{2}+a_2}{1+a_1+a_2} \ \text{or} \ \frac{\frac{1}{2}+a_1}{1+a_1+a_2}  & \text{if} \ c_1 =\frac{1}{2} \\
     \frac{c_1+a_1}{1+a_1+a_2}  & \text{if} \ c_1 \in (\frac{1}{2},1] \\
    
     \end{cases}
\]
If the difference in inefficiency levels is small $(a_2 -a_1 \leq \frac{1}{2})$, then the choices of firm 2 that survive round 1 of the iterative procedure are in $P_2^1=[\frac{a_2}{1+a_1+a_2},\frac{1+a_1}{1+a_1+a_2}]$. For intermediate levels $(\frac{1}{2} \leq  a_2 -a_1 \leq 1)$, it is $P_2^1=[\frac{\frac{1}{2} + a_1}{1+a_1+a_2},\frac{\frac{1}{2} + a_2}{1+a_1+a_2}]$, and for large levels $(a_2 - a_1 > 1)$, it is $P_2^1=[\frac{\frac{1}{2} + a_1}{1+a_1+a_2},\frac{1+a_1}{1+a_1+a_2}] \cup [\frac{a_2}{1+a_1+a_2},\frac{\frac{1}{2} + a_2}{1+a_1+a_2}]$. Figures 3a, 3b, and 4a depict this.
\newline
\newline
It has been shown by \citeA{peters2018hotelling} that if $a_1 \neq a_2$, then no pure Nash equilibrium exists. Visually, this is also an immediate consequence when we look at the reaction function of firm 1 and the reaction correspondence of firm 2 together in one Figure. Figure 4b shows that there is no point where these intersect.
\newline
\newline
\begin{figure}
     \centering
     \begin{subfigure}[b]{0.48\textwidth}
        \centering
     \scalebox{0.7} {
      \begin{tikzpicture}
[scale=0.8]
   \draw[thick] (0,0) -- (10,0);
\draw[thick] (0,0) -- (0,10);
\draw (0,0) node [anchor=east] {$0$};
\draw (-0.7,8.5) node [anchor=east] {$c_2$};

\draw (9,-0.7) node [anchor=north] {$c_1$};

\draw[ultra thick] (0,6) -- (5,7);
\draw[ultra thick]  (5,3) -- (10,4);
\draw[dotted]  (5,7) -- (5,0);
\draw[dotted]  (5,7) -- (0,7);
\draw[dotted]  (5,3) -- (0,3);
\draw[dotted]  (10,4) -- (0,4);

  \draw[black] (0,0) -- (0, -0.25);
            
             \draw[black] (10,0) -- (10, -0.25);
             \draw (10,-0.25) node [anchor=north] {$1$};
             \draw[black] (0,10) -- (-0.25,10);
             \draw (-0.25,10) node [anchor=east] {$1$};

              \draw[black] (0,6) -- (-0.25,6);
             \draw (-0.25,6) node [anchor=east] {$\frac{a_2}{1+a_1+a_2}$};
             \draw[black] (0,7) -- (-0.25,7);
             \draw (-0.25,7) node [anchor=east] {$\frac{\frac{1}{2}+a_2}{1+a_1+a_2}$};
              \draw[black] (0,3) -- (-0.25,3);
             \draw (-0.25,3) node [anchor=east] {$\frac{\frac{1}{2}+a_1}{1+a_1+a_2}$};
             \draw[black] (0,4) -- (-0.25,4);
             \draw (-0.25,4) node [anchor=east] {$\frac{1+a_1}{1+a_1+a_2}$};
             \draw[black] (0,5) -- (-0.25,5);
             \draw (-0.25,5) node [anchor=east] {$\frac{1}{2}$};
             \draw[black] (5,0) -- (5,-0.25);
             \draw (5,-0.25) node [anchor=north] {$\frac{1}{2}$};

\end{tikzpicture}
}
   \caption{$ a_2 - a_1 > 1$}
    \label{fig:my_label}
     \end{subfigure}
     \hfill
     \begin{subfigure}[b]{0.48\textwidth}
         \centering
     \scalebox{0.7} {
   \begin{tikzpicture}
[scale=0.8]
   \draw[thick] (0,0) -- (10,0);
\draw[thick] (0,0) -- (0,10);
\draw (0,0) node [anchor=east] {$0$};
\draw (-0.7,8.5) node [anchor=east] {$c_2$};

\draw (9,-0.7) node [anchor=north] {$c_1$};

\draw[ultra thick] (0,3.939) -- (5,5.455);
\draw[ultra thick]  (5,4.545) -- (10,6.061);

\draw[ultra thick]  (3.03,0) -- (4.348,4.348);
\draw[ultra thick]  (4.348,4.348) -- (5.652,5.652);
\draw[ultra thick]  (5.652,5.652) -- (6.97,10);

           \draw[black] (0,0) -- (0, -0.25);
             \draw[black] (10,0) -- (10, -0.25);
             \draw (10,-0.25) node [anchor=north] {$1$};

             \draw[black] (0,10) -- (-0.25,10);
             \draw (-0.25,10) node [anchor=east] {$1$};

\end{tikzpicture}
}
   \caption{No Nash equilibrium}
    \label{fig:my_label}
     \end{subfigure}
        \caption{Reaction correspondence of firm 2}
        \label{fig:three graphs}
\end{figure}
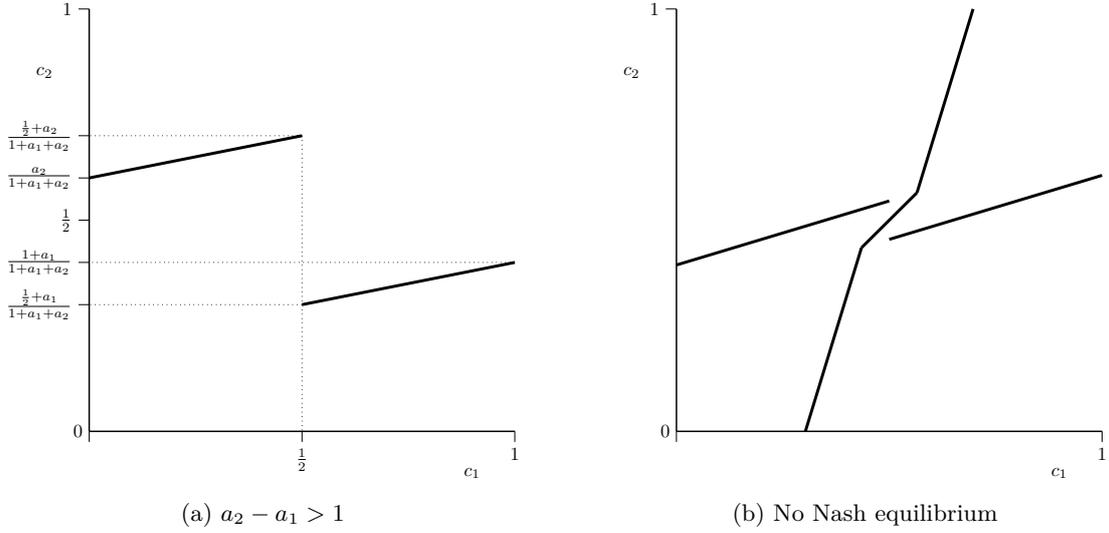
The next theorem characterizes the set of point rationalizable choices for both firms. 
\begin{theorem}
Let there be two firms, where $a_1 < a_2$. The set of point rationalizable choices of both firms is given by $P_1=P_2=\{\frac{a_1+1}{a_1+a_2+2},\frac{a_2+1}{a_1+a_2+2}\}.$
\end{theorem}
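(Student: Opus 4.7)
The plan is to show that $\{p_1,p_2\}$ with $p_i = \tfrac{a_i+1}{a_1+a_2+2}$ is preserved by every round of the iteration in Definition 2, and that no other choice survives in the limit.

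\textbf{Invariance.} Write $\phi_i$ for firm $i$'s reaction map $c_i(c_{-i})$ computed in the preceding analysis. A short algebra check using $a_1 < a_2$ shows that $\{p_1,p_2\} \subseteq [\tfrac{a_1}{a_1+a_2}, \tfrac{a_2}{a_1+a_2}]$, the identity middle piece of $\phi_1$, so $\phi_1(p_1) = p_1$ and $\phi_1(p_2) = p_2$. Moreover $p_1 < \tfrac{1}{2} < p_2$, so substituting into the two outer pieces of $\phi_2$ yields $\phi_2(p_1) = \tfrac{p_1 + a_2}{1+a_1+a_2} = p_2$ and $\phi_2(p_2) = \tfrac{p_2 + a_1}{1 + a_1 + a_2} = p_1$. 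Consequently $p_1$ is firm $1$'s best response to the belief $c_2 = p_1$, $p_2$ is firm $1$'s best response to $c_2 = p_2$, and $p_2, p_1$ are firm $2$'s best responses to $c_1 = p_1, p_2$ respectively. A routine induction on $k$ then gives $\{p_1,p_2\} \subseteq P_i(k)$ for all $k \geq 0$ and $i \in \{1,2\}$.

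\textbf{Elimination.} For the reverse inclusion I would work with the cleaner recursion $P_i(k) = \phi_i(P_{-i}(k-1))$, justified because firm $i$'s pointwise best response is unique (except possibly at $c_1 = \tfrac{1}{2}$ for firm $2$) and always lies inside $P_i(k-1)$, so restricted and unrestricted optimality coincide. The key observation is that every linear piece of $\phi_2$ and every non-identity linear piece of $\phi_1$ has slope $\tfrac{1}{1 + a_1 + a_2} < 1$, driving a geometric contraction wherever the map is differentiable. Tracking the extreme points $\min P_i(k)$ and $\max P_i(k)$ round by round, I would show that $\min P_1(k) \to p_1$ and $\max P_1(k) \to p_2$ (and analogously for firm $2$): while an endpoint sits outside the identity region of $\phi_1$ it is contracted by $\phi_1$ itself; once it enters the identity region the outer pieces of $\phi_2$ take over, pulling the left endpoint toward $p_2$ and the right endpoint toward $p_1$, after which $\phi_1$ leaves them in place. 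Combined with the invariance step, this yields $P_i = \bigcap_k P_i(k) = \{p_1, p_2\}$.

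\textbf{Main obstacle.} The hard part is that $\phi_2$ is discontinuous at $c_1 = \tfrac{1}{2}$: the image $\phi_2([l,r])$ of an interval spanning $\tfrac{1}{2}$ is either a single interval or a disjoint union of two intervals, according to whether $r - l \geq a_2 - a_1$. Consequently $P_2(k)$ (and therefore $P_1(k)$) may split into two components, and the shape of $P_2(1)$ itself already depends on the three regimes $a_2 - a_1 < \tfrac{1}{2}$, $\tfrac{1}{2} \leq a_2 - a_1 \leq 1$, and $a_2 - a_1 > 1$ displayed in the text. Correctly bookkeeping the endpoints of the (possibly two) components in each regime is the technical core of the argument; in every regime the contraction factor $\tfrac{1}{1 + a_1 + a_2}$ forces the components to concentrate exactly on $\{p_1, p_2\}$ in the limit.
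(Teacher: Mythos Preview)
Your plan is structurally the same as the paper's: both arguments iterate the reaction maps, exploit the contraction factor $\tfrac{1}{1+a_1+a_2}$ on the non-identity pieces, and must contend with the split of $P_2(k)$ into two components caused by the jump of $\phi_2$ at $c_1=\tfrac{1}{2}$. Your invariance step (verifying $\phi_1(p_i)=p_i$ and $\phi_2(p_1)=p_2$, $\phi_2(p_2)=p_1$) is correct and is a clean addition the paper does not state separately; the paper instead recovers $\{p_1,p_2\}\subseteq P_i$ only at the end of the convergence analysis.

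The gap is that you stop exactly where the work begins. You write ``I would show that $\min P_1(k)\to p_1$ and $\max P_1(k)\to p_2$'' and then concede that ``correctly bookkeeping the endpoints of the (possibly two) components in each regime is the technical core of the argument.'' That bookkeeping is precisely what the paper carries out in Lemmas~2--9: it proves symmetry of $P_i^k$, shows that eventually $\min P_2^{k}\geq \tfrac{a_1}{a_1+a_2}$ so that $P_1^{k+1}=P_2^{k}$ thereafter, then pushes $\min P_1^{k}$ past the thresholds $\tfrac{1}{2}-(a_2-a_1)$ and $\tfrac{1}{2}-\tfrac{a_2-a_1}{2}$ so that $P_2^{k}$ provably splits into two intervals, and finally writes down the four-endpoint recursion and solves for the common limit. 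Tracking only $\min$ and $\max$, as you propose in the Elimination paragraph, is not enough once the split occurs: you need all four endpoints $d^{l},d^{r},u^{l},u^{r}$, and the recursion couples them across two rounds (e.g.\ $d^{l,k+4}$ depends on $d^{l,k}$ via $u^{l,k+2}$). Your outline is right, but a complete proof must actually execute this part.

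A smaller point: the identity $P_i(k)=\phi_i(P_{-i}(k-1))$ you invoke requires that the unrestricted best response to any $c_{-i}\in P_{-i}(k-1)$ already lies in $P_i(k-1)$. This is true here (and the paper uses it implicitly), but it deserves a one-line justification rather than being asserted.
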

The set of point rationalizable choices consists of only two locations. This suggests a high predictive power for this solution concept. If the set of point rationalizable choices only had one option for each firm, then it would follow that these choices together would form a Nash equilibrium. From \citeA{peters2018hotelling}, we know that this cannot be the case. Hence, in some sense, this is the most predictive solution we could have hoped for.
\newline
\newline
We observe that as the inefficiency levels of the firms become more similar, the two point rationalizable choices are closer to the center of the line. Conversely, the greater the difference in inefficiency levels of the firms, the further away the two point rationalizable choices are from the center of the line. The greater the difference in inefficiency levels between the firms, the more incentive the less efficient firm has to choose a location further from the more efficient firm. This is because a larger inefficiency gap makes consumers increasingly prefer the more efficient firm. By increasing the distance from the more efficient firm, the less efficient firm can target a segment of consumers who will choose it due to lower transportation costs, despite its greater inefficiency.
\newline
\newline
The outcome of the game depends on which of the two locations both firms choose if they opt for rationalizable locations. Note that firm $1$'s most favorable outcome, when he chooses $c_1= \frac{a_1+1}{a_1+a_2+2}$, is that firm 2 also chooses this location. His least favorable outcome is that firm 2 then chooses $c_2= \frac{a_2+1}{a_1+a_2+2}$. In other words, given that both firms choose a point rationalizable location, firm 1 prefers the minimum differentiation outcome, and firm 2 prefers the maximum differentiation outcome.
\newline
\newline
The following example is added to provide some intuition about the first main result.

\begin{example}
Consider firm 1 and firm 2, where $a_1=1$ and $a_2=3$. The difference between the inefficiency levels is large ($a_2 > a_1+1$), and the reaction function/correspondence of both firms then imply that $P_1^1=[\frac{1}{5},\frac{4}{5}]$ and $P_2^1=[\frac{3}{10},\frac{4}{10}] \cup [\frac{6}{10},\frac{7}{10}]$. Figure 5a visually represents this. Next, from the reaction function of firm $1$, it can be seen that he will want to choose $c_1=c_2$ whenever $c_2 \in [\frac{a_1}{a_1+a_2},\frac{a_2}{a_1+a_2}]=[\frac{1}{4},\frac{3}{4}]$. As $P_2^1 \subseteq [\frac{1}{4},\frac{3}{4}]$, this implies that $P_1^2= P_2^1$. Similarly for firm 2, applying the reaction correspondence implies that $P_2^2=[\frac{30}{100},\frac{36}{100}] \cup [\frac{64}{100},\frac{70}{100}]$. This has been depicted by figure 5b. By the same reasoning as before, we again have that $P_1^3=P_2^2$.
\newline
\newline
Note that to calculate $P_2^3$, we are interested in $P_1^2=P_2^1$. As $P_2^1=[\frac{3}{10},\frac{4}{10}] \cup [\frac{6}{10},\frac{7}{10}]$, applying the reaction correspondence of firm 2 leads to $P_2^3=[\frac{32}{100},\frac{34}{100}] \cup [\frac{66}{100},\frac{68}{100}]$. Hence, for round $k \geq 3$, we observe an interesting pattern: $P_1^k=P_2^{k-1}$ and $P_2^k$ can be calculated based on $P_1^{k-1}=P_2^{k-2}$. Continuing in this fashion eventually leads to $P_1=P_2=\{{\frac{1}{3},\frac{2}{3}}\}$.    
\end{example}

\begin{figure}
     \centering
     \begin{subfigure}[b]{0.48\textwidth}
        \centering
     \scalebox{0.7} {
      \begin{tikzpicture}
[scale=0.8]
  \draw[thick] (0,0) -- (10,0);
\draw[thick] (0,0) -- (0,10);
\draw (0,0) node [anchor=east] {$0$};
\draw (-0.7,8.5) node [anchor=east] {$c_2$};

\draw (9,-0.7) node [anchor=north] {$c_1$};

\draw[dotted]  (2,0) -- (2,10);
\draw[dotted]  (8,0) -- (8,10);
 
             \draw (5,-0.3) node [anchor=north] {$P_1^1$};
\draw [<->] (2,-0.2) -- (8,-0.2);
\draw (2,-0.3) node [anchor=north] {$0.2$};
\draw (8,-0.3) node [anchor=north] {$0.8$};

\draw[ultra thick] (0,6) -- (5,7);
\draw[ultra thick]  (5,3) -- (10,4);

\draw[ultra thick]  (2,0) -- (2.5,2.5);
\draw[ultra thick]  (2.5,2.5) -- (7.5,7.5);
\draw[ultra thick]  (7.5,7.5) -- (8,10);

           \draw[black] (0,0) -- (0, -0.25);
             \draw[black] (10,0) -- (10, -0.25);
             \draw (10,-0.25) node [anchor=north] {$1$};

             \draw[black] (0,10) -- (-0.25,10);
             \draw (-0.25,10) node [anchor=east] {$1$};
\draw[dotted]  (0,3)  -- (10,3);
\draw[dotted]  (0,4) -- (10,4);
\draw[dotted]  (0,6)  -- (10,6);
\draw[dotted]  (0,7) -- (10,7);
 
             \draw (-0.2,5) node [anchor=east] {$P_2^1$};
\draw [<->] (-0.2,3) -- (-0.2,4);
\draw [<->] (-0.2,6) -- (-0.2,7);
\draw (-0.2,3) node [anchor=east] {$0.3$};
\draw (-0.2,4) node [anchor=east] {$0.4$};
\draw (-0.2,6) node [anchor=east] {$0.6$};
\draw (-0.2,7) node [anchor=east] {$0.7$};

\end{tikzpicture}
}
   \caption{Round 1}
    \label{fig:my_label}
     \end{subfigure}
     \hfill
     \begin{subfigure}[b]{0.48\textwidth}
         \centering
     \scalebox{0.7} {
   \begin{tikzpicture}
[scale=0.8]
  \draw[thick] (0,0) -- (10,0);
\draw[thick] (0,0) -- (0,10);
\draw (0,0) node [anchor=east] {$0$};
\draw (-0.7,8.5) node [anchor=east] {$c_2$};

\draw (9,-0.7) node [anchor=north] {$c_1$};

\draw[dotted]  (3,0) -- (3,10);
\draw[dotted]  (4,0) -- (4,10);
\draw[dotted]  (6,0) -- (6,10);
\draw[dotted]  (7,0) -- (7,10);
 
             \draw (5,-0.3) node [anchor=north] {$P_1^2$};
\draw [<->] (3,-0.25) -- (4,-0.25);
\draw [<->] (6,-0.25) -- (7,-0.25);
\draw (3,-0.3) node [anchor=north] {$0.3$};
\draw (4,-0.3) node [anchor=north] {$0.4$};
\draw (6,-0.3) node [anchor=north] {$0.6$};
\draw (7,-0.3) node [anchor=north] {$0.7$};

\draw[ultra thick] (2,6.4) -- (5,7);
\draw[ultra thick]  (5,3) -- (8,3.6);

\draw[ultra thick]  (3,3) -- (4,4);
\draw[ultra thick]  (6,6) -- (7,7);

           \draw[black] (0,0) -- (0, -0.25);
             \draw[black] (10,0) -- (10, -0.25);
             \draw (10,-0.25) node [anchor=north] {$1$};

             \draw[black] (0,10) -- (-0.25,10);
             \draw (-0.25,10) node [anchor=east] {$1$};

\draw[dotted]  (0,3)  -- (10,3);
\draw[dotted]  (0,3.6) -- (10,3.6);
\draw[dotted]  (0,6.4)  -- (10,6.4);
\draw[dotted]  (0,7) -- (10,7);
 
             \draw (-0.2,5) node [anchor=east] {$P_2^2$};
\draw [<->] (-0.2,3) -- (-0.2,3.6);
\draw [<->] (-0.2,6.4) -- (-0.2,7);
\draw (-0.2,3) node [anchor=east] {$0.3$};
\draw (-0.2,3.6) node [anchor=east] {$0.36$};
\draw (-0.2,6.4) node [anchor=east] {$0.64$};
\draw (-0.2,7) node [anchor=east] {$0.7$};

\end{tikzpicture}
}
   \caption{Round 2}
    \label{fig:my_label}
     \end{subfigure}
        \caption{Example of iterative procedure where $a_1=1$ and $a_2=3$}
        \label{fig:three graphs}
\end{figure}

If firm 1 and 2 have two identical inefficiency coefficients, meaning that $a_1=a_2=a$, then the analysis simplifies, and both firms have the following reaction function:
$$c_i(c_j)= \frac{c_j+a}{1+2a} \ \text{for all} \ c_j \in [0,1]$$
This leads us to the following result in the case of the firms having the same ineffiency level. 
\begin{proposition}
Let there be two firms, and let $a_1=a_2$. Then the set of point rationalizable choices is $P_1=P_2=\{\frac{1}{2}\}$.    
\end{proposition}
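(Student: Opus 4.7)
The plan is to exploit the fact that with $a_1 = a_2 = a$ the common reaction function $c_i(c_j) = \frac{c_j + a}{1+2a}$ is an affine contraction of $[0,1]$ with slope $\frac{1}{1+2a} \in (0,1)$ and unique fixed point $c^{\ast} = \frac{1}{2}$. The iterative elimination procedure of Definition \ref{set of point rationalizable choices} will therefore drive the sets $P_i^k$ to the singleton $\{\tfrac{1}{2}\}$ geometrically.

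First I would compute $P_i^1$ directly. Since the reaction function is continuous and strictly increasing on $[0,1]$, its image equals the closed interval $\bigl[\tfrac{a}{1+2a},\tfrac{1+a}{1+2a}\bigr]$, which is symmetric about $\tfrac{1}{2}$ and has width $\tfrac{1}{1+2a}$. By symmetry of the two firms, $P_1^1 = P_2^1$ equals this interval.

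Next I would prove by induction on $k \geq 1$ that $P_1^k = P_2^k$ is a closed interval, symmetric about $\tfrac{1}{2}$, of width $(1+2a)^{-k}$. For the inductive step, given $P_j^{k-1} = [L_{k-1},U_{k-1}]$, the image of $P_j^{k-1}$ under the reaction function is $\bigl[\tfrac{L_{k-1}+a}{1+2a},\tfrac{U_{k-1}+a}{1+2a}\bigr]$, again symmetric about $\tfrac{1}{2}$, with width contracted by the factor $\tfrac{1}{1+2a}$. The delicate point, which I expect to be the main obstacle, is verifying that for every $c_j \in P_j^{k-1}$ the unconstrained best response $\tfrac{c_j+a}{1+2a}$ already lies in $P_i^{k-1}$, so that maximizing in $P_i^{k-1}$ yields the same maximizer as maximizing in $C_i$. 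This follows from the inductive hypothesis: both $P_i^{k-1}$ and the image of $P_j^{k-1}$ under the reaction function are intervals centered at $\tfrac{1}{2}$, and the image is strictly narrower since $(1+2a)^{-k} < (1+2a)^{-(k-1)}$, so the image is contained in $P_i^{k-1}$.

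Finally, since the intervals $P_i^k$ are nested, each contains the fixed point $\tfrac{1}{2}$, and their widths $(1+2a)^{-k}$ tend to $0$, the intersection $P_i = \bigcap_{k \geq 1} P_i^k$ equals $\{\tfrac{1}{2}\}$ for both firms, which is the claim.
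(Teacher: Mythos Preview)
Your proposal is correct and follows essentially the same approach as the paper: both argue by induction that $P_1^k=P_2^k$ is a closed interval obtained by applying the common affine reaction map $c_j\mapsto\frac{c_j+a}{1+2a}$ to the previous-round interval, and then pass to the limit to obtain the fixed point $\tfrac{1}{2}$. If anything, you are slightly more careful than the paper in explicitly checking that the unconstrained best response already lies in $P_i^{k-1}$, so that optimizing in $P_i^{k-1}$ and in $C_i$ coincide; the paper's proof takes this for granted.
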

It has been shown by \citeA{peters2018hotelling} and \citeA{kohlberg1983equilibrium} that there exists a unique pure Nash equilibrium where both firms choose to locate at the center of the line.

\section{Three Symmetric Firms}
Consider three symmetric firms with the same inefficiency coefficient. That is, we assume that $a_1=a_2=a_3=a$. In this section, we take the perspective of a firm $i \in N$, who considers his opponent firms $l$ and $r$. As $a_l=a_r$, we observe that if a choice $c_i$ is optimal for some belief $(c_l,c_r)$, then it is also optimal for the belief $(c_r,c_l)$. Hence, in each round of the iterative procedure, it is sufficient to only consider beliefs $(c_l,c_r)$ that satisfy $c_l \leq c_r$.
\newline
\newline
If firm $i$ chooses $c_i \leq c_l$, then the system of equations (\ref{BIGN}) can be simplified to
$$\lvert c_i -x_1 \rvert +a x_1 = \lvert c_l-x_1 \rvert +a(x_2-x_1) \  \text{and} \ \lvert c_l -x_2 \rvert +a(x_2-x_1) = \lvert c_r-x_2 \rvert +a(1-x_2)$$
and firm $i$'s market share is $x_1$. If firm $i$ chooses $c_l \leq c_i \leq c_r$, then the system of equations (\ref{BIGN}) can be simplified to
$$\lvert c_l -x_1 \rvert +a x_1 = \lvert c_i-x_1 \rvert +a(x_2-x_1) \  \text{and} \ \lvert c_i -x_2 \rvert +a(x_2-x_1) = \lvert c_r-x_2 \rvert +a(1-x_2)$$
and firm $i$'s market share is $x_2-x_1$. Lastly, if firm $i$ chooses $c_i \geq c_r$, then the system of equations (\ref{BIGN}) can be simplified to
$$\lvert c_l -x_1 \rvert +a x_1 = \lvert c_r-x_1 \rvert +a(x_2-x_1) \  \text{and} \ \lvert c_r -x_2 \rvert +a(x_2-x_1) = \lvert c_i-x_2 \rvert +a(1-x_2)$$
and firm $i$'s market share is $1-x_2$. Furthermore, \citeA{kohlberg1983equilibrium} has proven that $x_1$ and $x_2$ are continuous in $c_i$. 
\newline
\newline
Figure 6 is a visual representation of the reaction correspondence of a firm $i$, where firm $i$ believes that $1-c_r \leq c_l \leq c_r$. In Lemma \ref{29juli3}, we show that considering this set of beliefs is sufficient to characterize the set of point rationalizable choices of a firm. In region 1, both firms $l$ and $r$ select locations close to the right edge of the line. Firm $i$'s optimal choice is then to choose $c_i$ such that $c_i = x_1$. In fact, firms $l$ and $r$ are so close to the right edge of the line that we have $x_2 \leq c_l$. Similarly, in regions 2 and 3, the optimal choice $c_i$ of firm $i$ satisfies $c_i = x_1$, where $c_l \leq x_2 \leq c_r$ in region 2 and $x_2 \geq c_r$ in region 3. Finally, in region 4, firm $l$ selects a location close to the left side of the line, whereas firm $r$ selects a location close to the right side of the line. As a result, firm $i$'s optimal choice is to position itself between these two firms, where any choice in the interval $[c_i^*, c_i^{**}]$ is optimal, where $c_i^*$ satisfies $c_i^*=x_1$ and $c_i^{**}$ satisfies $c_i^{**}=x_2$.  \newline

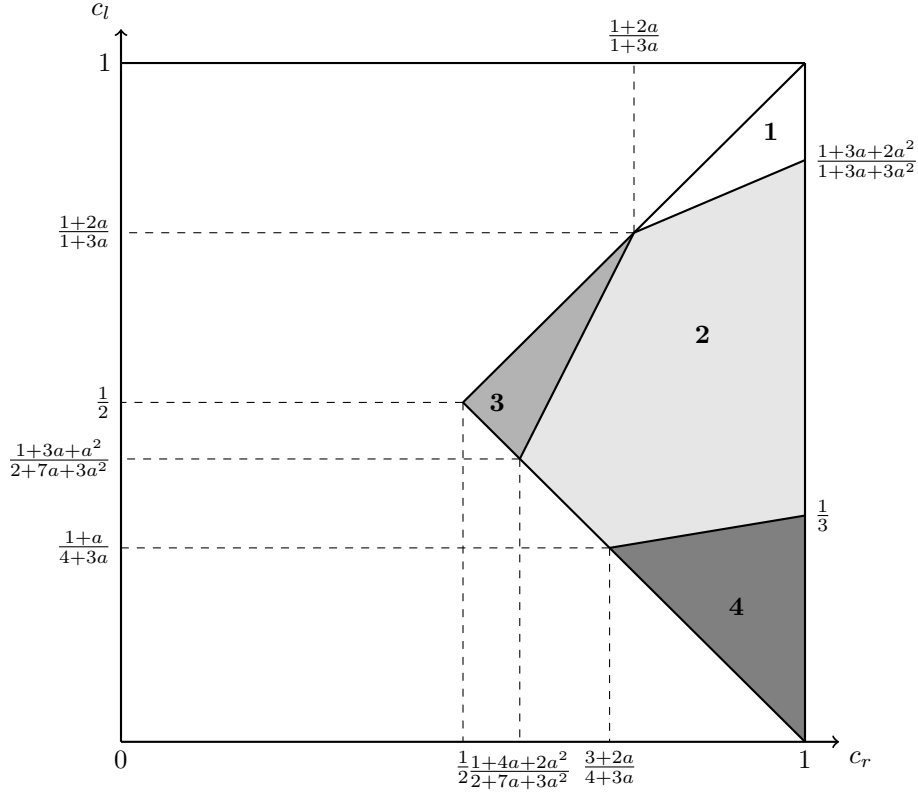
\begin{figure}[t]
    \centering
        \begin{tikzpicture} [scale=0.9]
        
        \fill[black!0!white]  (10,10) -- (10,8.57) -- (7.5,7.5);
        \fill[black!10!white]  (7.5,7.5) -- (10,8.57) -- (10,3.33333)--(7.143,2.857)--(5.83,4.167);
        \fill[black!30!white]  (5,5) -- (7.5,7.5) -- (5.83,4.167);
        \fill[black!50!white]  (7.143,2.857) -- (10,3.333333) -- (10,0);
        
        \draw[thick,->] (0,0) -- (10.5,0) node[anchor=north west] {$c_r$};
        \draw[thick,->] (0,0) -- (0,10.5) node[anchor= south east] {$c_l$};
        \draw[thick] (10,0) -- (5,5);
        \draw[thick] (5,5) -- (10,10);
        \draw[thick] (10,0) -- (10,10);
        \draw[thick] (0,10) -- (10,10);
        \draw[dashed] (5,5) -- (5,0);
        \draw[dashed] (5,5) -- (0,5);

        \draw[thick] (7.5,7.5) -- (10,8.57);
        \draw[thick] (5.83,4.167) -- (7.5,7.5);
        \draw[thick] (7.143,2.857) -- (10,3.333333);

        \node at (0,0) [anchor=north] {0};
        \node at (10,0) [anchor=north] {1};
        \node at (0,10) [anchor=east] {1};
        \node at (5,0) [anchor=north] {$\frac{1}{2}$};
        \node at (0,5) [anchor=east] {$\frac{1}{2}$};

        \node at (9.5,9) {\textbf{1}};
        \node at (8.5,6) {\textbf{2}};
        \node at (5.5,5) {\textbf{3}};
        \node at (9,2) {\textbf{4}};

        \draw[dashed] (7.5,7.5) -- (7.5,10) node[anchor=south] {$\frac{1+2a}{1+3a}$};
        \draw[dashed] (7.5,7.5) -- (0,7.5) node[anchor=east] {$\frac{1+2a}{1+3a}$};
        \draw[dashed] (5.83,4.167) -- (5.83,0) node[anchor=north] {$\frac{1+4a+2a^2}{2+7a+3a^2}$};
         \draw[dashed] (5.83,4.167) -- (0,4.167) node[anchor=east] {$\frac{1+3a+a^2}{2+7a+3a^2}$};
         \draw[dashed] (7.143,2.857) -- (7.143,0) node[anchor=north] {$\frac{3+2a}{4+3a}$};
         \draw[dashed] (7.143,2.857) -- (0,2.857) node[anchor=east] {$\frac{1+a}{4+3a}$};
          \node at (10,8.57) [anchor=west] {$\frac{1+3a+2a^2}{1+3a+3a^2}$};
           \node at (10,3.33333) [anchor=west] {$\frac{1}{3}$};

        \end{tikzpicture}
\caption{reaction correspondence of firm $i$}
\label{a=1}
\begin{tabular}{r@{: }l r@{: }l}
\textbf{1} & $c_i=\frac{c_l+c_r+a}{2+3a}$ & \textbf{2} & $  c_i=\frac{3ac_l+2c_l+ac_r+a^2}{2+6a+3a^2}$\\
\textbf{3}& $c_i= \frac{a+3c_l-c_r}{2+3a}$ & \textbf{4}& $c_i \in [\frac{2c_l+2ac_l+ac_r+a^2}{2+5a+3a^2},\frac{ac_l+2ac_r+2c_r+2a+2a^2}{2+5a+3a^2}]$
\end{tabular}
\label{legend a=1}
\end{figure}
After characterizing the reaction correspondence of firm $i$, we can apply the iterative procedure to compute the set of point rationalizable choices for a firm. This leads us to the second main result of this paper.
\begin{theorem}
Let there be three firms, where $a_1=a_2=a_3=a$. The set of point rationalizable choices of each firm $i \in N$ is $P_i=[\frac{1+a}{4+3a},\frac{3+2a}{4+3a}]$.
\end{theorem}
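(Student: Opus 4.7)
The plan is to combine an iterative contraction argument with a fixed-point verification, leaning heavily on symmetry. First, observe that the model is invariant under the reflection $c \mapsto 1-c$ (which swaps the roles of the "left" and "right" opponents): if $c_i$ is optimal given the belief $(c_l, c_r)$, then $1 - c_i$ is optimal given $(1 - c_r, 1 - c_l)$. Thus $P_i$ is symmetric about $1/2$, and Lemma \ref{29juli3} lets me restrict to beliefs with $c_l \leq c_r$, so that the regional reaction correspondence in Figure \ref{a=1} applies. I would then conjecture inductively that $P_j(k) = [\alpha_k, 1 - \alpha_k]$ for every firm $j$, starting from $\alpha_0 = 0$, and propagate this structure through the iterative procedure.

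For the inductive step, assume $P_j(k-1) = [\alpha_{k-1}, 1 - \alpha_{k-1}]$ for each $j$. By symmetry, it suffices to find the maximum optimal $c_i$ as $(c_l, c_r)$ ranges over $[\alpha_{k-1}, 1 - \alpha_{k-1}]^2$ with $c_l \leq c_r$; the lower endpoint $\alpha_k$ is then one minus this maximum. I would compare the four regional formulas and argue that the maximum is attained in region 4 (the "between" case) at the extreme corner $(c_l, c_r) = (\alpha_{k-1}, 1 - \alpha_{k-1})$, using the upper endpoint formula $c_i^{**} = \frac{a c_l + (2a+2) c_r + 2a + 2a^2}{2 + 5a + 3a^2}$. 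Substituting the corner gives an affine recursion in $\alpha_{k-1}$ whose contraction factor is strictly less than $1$, so it converges monotonically to the unique fixed point $\alpha^{*} = \tfrac{1+a}{4+3a}$, yielding the inclusion $P_i \subseteq [\alpha^{*}, 1 - \alpha^{*}]$.

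For the reverse inclusion I would verify self-consistency: plugging the belief $(c_l, c_r) = (\alpha^{*}, 1 - \alpha^{*})$, which itself lies in the candidate interval, into the region-4 optimal interval $[c_i^{*}, c_i^{**}]$ should yield exactly $[\alpha^{*}, 1 - \alpha^{*}]$. Both endpoints reduce to $\alpha^{*}$ and $1 - \alpha^{*}$ after simplification using the identity $\alpha^{*}(4+3a) = 1+a$ (the algebra uses the factorization $6 + 19a + 19a^{2} + 6a^{3} = (1+a)(2+3a)(3+2a)$). Hence every point of $[\alpha^{*}, 1 - \alpha^{*}]$ survives every round, giving $P_i \supseteq [\alpha^{*}, 1 - \alpha^{*}]$, and the two inclusions combine to prove the theorem. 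The main obstacle will be the case analysis confirming that in each round the maximum is genuinely achieved in region 4, rather than in regions 1, 2, or 3, and at the extreme corner of the allowed domain; this requires pairwise comparisons of the four regional formulas along their common boundaries and a check that the corner $(\alpha_{k-1}, 1 - \alpha_{k-1})$ actually lies in region 4 throughout the iteration (i.e.\ that the constraint $1 - c_r \leq c_l$ remains respected). A secondary concern is justifying the inductive hypothesis that each $P_i(k)$ is a closed interval, which follows from the continuity of $x_1, x_2$ in the belief together with the connectedness of the domain, but should be stated explicitly.
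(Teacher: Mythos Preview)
Your overall architecture matches the paper's: induct on symmetric intervals $P_i(k)=[\alpha_k,1-\alpha_k]$, locate the extremal optimal choice via the region decomposition of Figure~\ref{a=1}, obtain a contractive recursion, and verify the fixed interval is self-supporting. The reverse inclusion (part~(b)) is handled correctly.

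The gap is in your identification of the maximizer in region~4. The upper endpoint
\[
c_i^{**}=\frac{ac_l+(2a+2)c_r+2a+2a^2}{2+5a+3a^2}
\]
is increasing in $c_l$, so for fixed $c_r$ you want $c_l$ as \emph{large} as region~4 allows, namely $c_l=\frac{a+c_r}{3+3a}$, not the lower corner $c_l=\alpha_{k-1}$. With $c_r=U^{k-1}:=1-\alpha_{k-1}$ and $c_l=\frac{a+U^{k-1}}{3+3a}$ one gets the paper's recursion
\[
U^k=\frac{(3+2a)(a+U^{k-1})}{3(1+a)^2},
\]
which dominates the value you would obtain at the corner $(\alpha_{k-1},1-\alpha_{k-1})$ strictly whenever $\alpha_{k-1}<\alpha^*$. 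Consequently your claimed $P_i(k)$ is strictly smaller than the true one, the inductive hypothesis fails at the next step, and the outer inclusion is not established. (It is a coincidence that both recursions share the same fixed point: at $\alpha^*=\tfrac{1+a}{4+3a}$ the region-4 upper boundary $c_l=\frac{a+c_r}{3+3a}$ with $c_r=1-\alpha^*$ collapses to $c_l=\alpha^*$, which is exactly why your fixed-point verification goes through.) Fix the maximizer, and also note that you must check $\frac{a+U^{k-1}}{3+3a}\in[\alpha_{k-1},1-\alpha_{k-1}]$, which is equivalent to $\alpha_{k-1}\le\alpha^*$ and is what keeps the induction alive.
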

The set found in Theorem 2 is decreasing in $a$. As $a$ approaches $0$, which implies very low waiting costs for the consumers, the set of point rationalizable choices approaches $[\frac{1}{4},\frac{3}{4}]$. This is in line with \citeA{van2023rationalizable}, where it was shown that the set of point rationalizable choices in the classical Hotelling model without waiting costs is $[\frac{1}{4},\frac{3}{4}]$. On the other hand, as $a$ approaches infinity, which implies extremely high waiting costs for the consumers, the set of point rationalizable choices approaches $[\frac{1}{3},\frac{2}{3}]$. Hence, a greater level of inefficiency of the firms reduces the potential for product differentiation. As the inefficiency levels of all firms increase, consumers prioritize waiting costs over transportation costs. This reduces the benefit of stealing customers by locating close to competitors. Consequently, firms are incentivized to locate more centrally to capture a broader market.
\newline
\newline
The next example illustrates how the iterative procedure is applied for three firms.

\begin{example}
Consider three firms, where $a_1 = a_2 = a_3 = 1$, and take the perspective of firm $i$. For every possible $(c_l, c_r)$ such that $1 - c_r \leq c_l \leq c_r$, we can compute the optimal choice or choices of firm $i$ by using the reaction correspondence. In the proof of Theorem 1, we show that the greatest choice that is optimal for a belief is obtained for the belief that selects the maximum values of $c_r$ and $c_l$ such that positioning between $c_l$ and $c_r$ is optimal for firm $i$. The reaction correspondence shows that this corresponds to $c_r = 1$ and $c_l = \frac{1}{3}$. This has been depicted in Figure 7(a). Any choice in $[\frac{1}{3}, \frac{5}{6}]$ is optimal for this belief. By symmetry, the choices in $[\frac{1}{6}, \frac{2}{3}]$ are then optimal for the belief where $c_l=0$ and $c_r=\frac{2}{3}$. Hence, we have that $P_i(1) = [\frac{1}{6}, \frac{5}{6}]$, and similarly for the other two firms. 
\newline
\newline
In the next round, we can again find the greatest choice that is optimal for a belief that survived round 1 of the procedure by considering the maximum values of $c_r$ and $c_l$ such that positioning between $c_l$ and $c_r$ is optimal. This corresponds to $c_r = \frac{5}{6}$ and $c_l = \frac{11}{36}$. This has been depicted in Figure 7(b). The reaction correspondence shows that any choice in $[\frac{11}{36}, \frac{55}{72}]$ is optimal for this belief. Using a similar symmetry argument as above, this implies that $P_i(2) = [\frac{17}{72}, \frac{55}{72}]$ for all $i \in N$. Continuing in this fashion leads to $P_1 = P_2 = P_3 = [\frac{2}{7}, \frac{5}{7}]$.
\end{example}

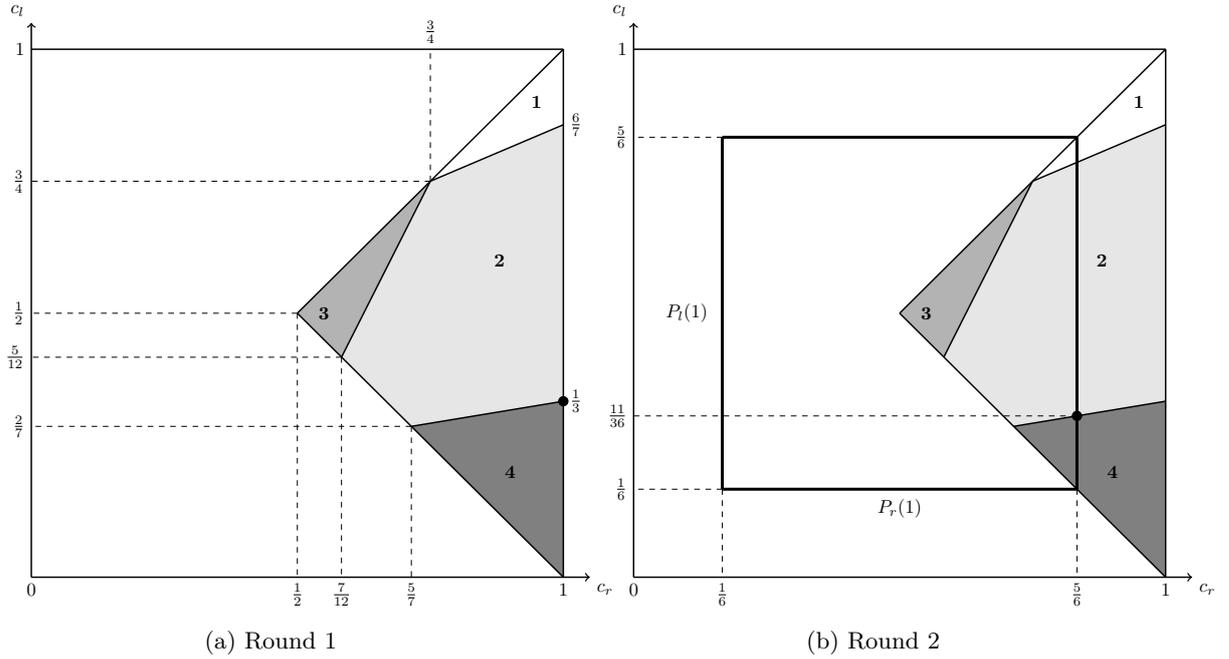
\begin{figure}
     \centering
     \begin{subfigure}[b]{0.48\textwidth}
        \centering
     \scalebox{0.7} {
      \begin{tikzpicture} [scale=1]
        
        \fill[black!0!white]  (10,10) -- (10,8.57) -- (7.5,7.5);
        \fill[black!10!white]  (7.5,7.5) -- (10,8.57) -- (10,3.33333)--(7.143,2.857)--(5.83,4.167);
        \fill[black!30!white]  (5,5) -- (7.5,7.5) -- (5.83,4.167);
        \fill[black!50!white]  (7.143,2.857) -- (10,3.333333) -- (10,0);
        
        \draw[thick,->] (0,0) -- (10.5,0) node[anchor=north west] {$c_r$};
        \draw[thick,->] (0,0) -- (0,10.5) node[anchor= south east] {$c_l$};
        \draw[thick] (10,0) -- (5,5);
        \draw[thick] (5,5) -- (10,10);
        \draw[thick] (10,0) -- (10,10);
        \draw[thick] (0,10) -- (10,10);
        \draw[dashed] (5,5) -- (5,0);
        \draw[dashed] (5,5) -- (0,5);

        \draw[thick] (7.5,7.5) -- (10,8.57);
        \draw[thick] (5.83,4.167) -- (7.5,7.5);
        \draw[thick] (7.143,2.857) -- (10,3.333333);

        \node at (0,0) [anchor=north] {0};
        \node at (10,0) [anchor=north] {1};
        \node at (0,10) [anchor=east] {1};
        \node at (5,0) [anchor=north] {$\frac{1}{2}$};
        \node at (0,5) [anchor=east] {$\frac{1}{2}$};

        \node at (9.5,9) {\textbf{1}};
        \node at (8.8,6) {\textbf{2}};
        \node at (5.5,5) {\textbf{3}};
        \node at (9,2) {\textbf{4}};

        \draw[dashed] (7.5,7.5) -- (7.5,10) node[anchor=south] {$\frac{3}{4}$};
        \draw[dashed] (7.5,7.5) -- (0,7.5) node[anchor=east] {$\frac{3}{4}$};
        \draw[dashed] (5.83,4.167) -- (5.83,0) node[anchor=north] {$\frac{7}{12}$};
         \draw[dashed] (5.83,4.167) -- (0,4.167) node[anchor=east] {$\frac{5}{12}$};
         \draw[dashed] (7.143,2.857) -- (7.143,0) node[anchor=north] {$\frac{5}{7}$};
         \draw[dashed] (7.143,2.857) -- (0,2.857) node[anchor=east] {$\frac{2}{7}$};
          \node at (10,8.57) [anchor=west] {$\frac{6}{7}$};
           \node at (10,3.33333) [anchor=west] {$\frac{1}{3}$};

          \filldraw (10,3.333) circle (2.5pt);

        \end{tikzpicture}
}
   \caption{Round 1}
    \label{fig:my_label}
     \end{subfigure}
     \hfill
     \begin{subfigure}[b]{0.48\textwidth}
         \centering
     \scalebox{0.7} {
   \begin{tikzpicture} [scale=1]
        
        \fill[black!0!white]  (10,10) -- (10,8.57) -- (7.5,7.5);
        \fill[black!10!white]  (7.5,7.5) -- (10,8.57) -- (10,3.33333)--(7.143,2.857)--(5.83,4.167);
        \fill[black!30!white]  (5,5) -- (7.5,7.5) -- (5.83,4.167);
        \fill[black!50!white]  (7.143,2.857) -- (10,3.333333) -- (10,0);
        
        \draw[thick,->] (0,0) -- (10.5,0) node[anchor=north west] {$c_r$};
        \draw[thick,->] (0,0) -- (0,10.5) node[anchor= south east] {$c_l$};
        \draw[thick] (10,0) -- (5,5);
        \draw[thick] (5,5) -- (10,10);
        \draw[thick] (10,0) -- (10,10);
        \draw[thick] (0,10) -- (10,10);

        \draw[thick] (7.5,7.5) -- (10,8.57);
        \draw[thick] (5.83,4.167) -- (7.5,7.5);
        \draw[thick] (7.143,2.857) -- (10,3.333333);

        \node at (0,0) [anchor=north] {0};
        \node at (10,0) [anchor=north] {1};
        \node at (0,10) [anchor=east] {1};

        \node at (9.5,9) {\textbf{1}};
        \node at (8.8,6) {\textbf{2}};
        \node at (5.5,5) {\textbf{3}};
        \node at (9,2) {\textbf{4}};

           \draw[ultra thick] (1.667,1.667) -- (1.667,8.3333);
           \draw[ultra thick] (1.667,1.667) -- (8.3333,1.667);
           \draw[ultra thick] (8.3333,1.667) -- (8.3333,8.3333);
           \draw[ultra thick] (1.667,8.3333) -- (8.3333,8.3333);

            \draw[dashed] (8.3333,1.667) -- (8.3333,0) node[anchor=north] {$\frac{5}{6}$};
                   \draw[dashed] (1.667,1.667) -- (1.667,0) node[anchor=north] {$\frac{1}{6}$};
                   \draw[dashed] (1.667,1.667) -- (0,1.667) node[anchor=east] {$\frac{1}{6}$};
                   \draw[dashed] (1.667,8.3333) -- (0,8.3333) node[anchor=east] {$\frac{5}{6}$};
            \draw[dashed] (8.3333,3.055) -- (0,3.055) node[anchor=east] {$\frac{11}{36}$};

           \filldraw (8.3333,3.055) circle (2.5pt);

 \node at (5,1.3) {\textbf{$P_r(1)$}};
  \node at (1,5) {\textbf{$P_l(1)$}};
          
        \end{tikzpicture}
}
   \caption{Round 2}
    \label{fig:my_label}
     \end{subfigure}
     \caption{Example of iterative procedure where $a=1$}
     \begin{tabular}{r@{: }l r@{: }l}
\textbf{1} & $\frac{c_l+c_r+1}{5}$ & \textbf{2} & $  \frac{5c_l+c_r+1}{11}$\\
\textbf{3}& $ \frac{1+3c_l-c_r}{5}$ & \textbf{4}& $ [\frac{4c_l+c_r+1}{10},\frac{c_l+4c_r+4}{10}]$
\end{tabular}
        \label{fig:three graphs}
        
\end{figure}

\section{Conclusion}
The first main result of this paper is a characterization of the set of point rationalizable choices for the Hotelling model with waiting costs involving two firms and any inefficiency level for both firms. This set consists of only two locations when the inefficiency levels of the firms are asymmetric and a single location when the firms share the same inefficiency level. Furthermore, we find that as the firms become more similar in terms of their inefficiency level, the two point rationalizable locations are positioned closer to the center of the line. Despite the lack of a pure Nash equilibrium solution for the former case, the use of point rationalizability does provide a solution with high predictive power, involving only two locations.
\newline
\newline
We also characterized the set of point rationalizable choices when three symmetric firms are present in the model. Instead of two points, the set of point rationalizable choices for a firm becomes the interval $[\frac{1+a}{4+3a}, \frac{3+2a}{4+3a}]$. As the inefficiency level $a$ of the firms increases, the interval becomes smaller. Despite the simple assumptions of the model, such as uniformly distributed consumers and linear costs, a different approach will have to be taken to find results for more than three firms.

\section{Appendix}
\subsection{Two firms and asymmetric inefficiency levels}
For convenience, let $\gamma=1+a_1+a_2$. Lemma 2 until 9 are needed to prove the first main result of the paper (Theorem 1). Lemma \ref{symmetry} shows that the set of choices that survives the iterative procedure in each round is symmetric.
\begin{lemma}\label{symmetry}
    $P_1^k$ and $P_2^k$ are symmetric for every $k$. That is, if a choice $c \in P_1^k$, then $1-c \in P_1^k$ and similarly for $P_2^k$.
\end{lemma}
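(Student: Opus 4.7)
The plan is to prove the lemma by induction on $k$, leveraging a single auxiliary fact: the Hotelling model with waiting costs is invariant under the spatial reflection $c \mapsto 1-c$ applied simultaneously to all firms. More precisely, I want to establish the identity
$$s_i(c_1,c_2) = s_i(1-c_1,1-c_2) \quad \text{for } i=1,2 \text{ and all } (c_1,c_2)\in[0,1]^2,$$
and then push the symmetry of $P_1^{k-1}, P_2^{k-1}$ through the definition of $P_i^k$.

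To establish the reflection identity, I would look at equation (\ref{BIGN}) with the profile $(c_1,c_2)$ and note that if $c_1 \leq c_2$ and $x_1$ is its (unique) solution, then applying $c \mapsto 1-c$ to both locations and setting $\widetilde{x}_1 := 1-x_1$ reverses the order of firms on the line and turns equation (\ref{BIGN}) into the analogous equation for the reflected profile $(1-c_2,1-c_1)$. Because consumers are uniformly distributed and distances are preserved under reflection, firm $i$'s allocated interval has the same length after reflection as before, so $s_i$ is unchanged. In the two-firm case this is in any event immediate from the closed-form expressions $s_1 = (c_2+a_2)/(1+a_1+a_2)$, $s_1 = (1+a_2-c_2)/(1+a_1+a_2)$ and $s_1=a_2/(a_1+a_2)$ derived in Section 4, each of which is invariant under $c_2 \mapsto 1-c_2$ combined with the corresponding change in which branch applies.

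With the identity in hand, the induction is straightforward. The base case $k=0$ holds since $P_i^0=[0,1]$. For the inductive step, assume $P_1^{k-1}$ and $P_2^{k-1}$ are symmetric and take any $c_1\in P_1^k$. Then $c_1\in P_1^{k-1}$ and there is $c_2\in P_2^{k-1}$ with $s_1(c_1,c_2)\geq s_1(c_1^{*},c_2)$ for every $c_1^{*}\in P_1^{k-1}$. By the inductive hypothesis, $1-c_1\in P_1^{k-1}$ and $1-c_2\in P_2^{k-1}$. For any $c_1^{**}\in P_1^{k-1}$, the inductive hypothesis also gives $1-c_1^{**}\in P_1^{k-1}$, so applying the reflection identity twice,
$$s_1(1-c_1,1-c_2) = s_1(c_1,c_2) \geq s_1(1-c_1^{**},c_2) = s_1(c_1^{**},1-c_2).$$
Hence $1-c_1$ is optimal in $P_1^{k-1}$ given the belief $1-c_2\in P_2^{k-1}$, so $1-c_1\in P_1^k$. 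The argument for $P_2^k$ is identical after swapping the roles of the firms.

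The only real obstacle is writing down the reflection identity cleanly; once that identity is available, everything else is a mechanical application of the inductive hypothesis to both the surviving choice $c_1$ and to the comparison choices $c_1^{*}\in P_1^{k-1}$. Since the proof of the identity in the two-firm setting reduces to inspecting the three explicit expressions for $s_1$ derived in Section 4 (and their firm-2 analogues), this obstacle is essentially a bookkeeping exercise rather than a substantive difficulty.
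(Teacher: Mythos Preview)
Your proposal is correct and follows essentially the same approach as the paper: establish the reflection identity $s_i(c_1,c_2)=s_i(1-c_1,1-c_2)$ and then push symmetry through the iterative procedure by induction on $k$. The paper states the reflection fact in a single sentence and leaves the inductive step terse, whereas you spell out the comparison against an arbitrary $c_1^{**}\in P_1^{k-1}$ explicitly; the arguments are otherwise identical.
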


\begin{proof}
Consider firm $i$'s reaction function/correspondence. Note that if a choice $c_i$ is optimal for firm $i$ given $c_j$, then the choice $1-c_i$ is optimal for firm 1 given $1-c_j$. This is because a choice $c_i$ given $c_j$ yields the same market share as a choice $1-c_i$ given $1-c_j$.   
\newline
\newline
Note that $P_1^0=P_2^0=[0,1]$. By induction, assume that $P_i^{k-1}$ and $P_j^{k-1}$ are symmetric, where $k \geq 1$. The choices of firm $i$ that survive round $k$ are the choices that are optimal for firm $i$ in $P_i^{k-1}$ given a $c_j \in P_j^{k-1}$. Consider any choice $c_i \in P_i^{k-1}$ that is optimal for firm $i$ in $P_i^{k-1}$ given $c_j \in P_j^{k-1}$. Then the choice $1-c_i \in P_i^{k-1}$ is optimal for firm $i$ in $P_i^{k-1}$ given $1-c_j \in P_j^{k-1}$. Hence, $P_i^k$ is also symmetric.
\end{proof}
The next Lemma uses the fact that firm 1's reaction function implies that he will choose exactly the same location of firm 2, if firm 2 positions too close to the center of the line.
\begin{lemma}\label{copylemma}
    If $P_2^{k'-1} \subseteq [\frac{a_1}{a_1+a_2},\frac{a_2}{a_1+a_2}]$, then $P_1^{k}=P_2^{k-1}$ for all $k \geq k'$.
\end{lemma}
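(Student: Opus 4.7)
The plan is to induct on $k\geq k'$, exploiting the fact already extracted from firm 1's reaction function that for every $c_2\in[\tfrac{a_1}{a_1+a_2},\tfrac{a_2}{a_1+a_2}]$ the unique globally optimal response of firm 1 over the entire strategy space $C_1=[0,1]$ is $c_1=c_2$. Since the hypothesis of the lemma places every surviving belief in $P_2^{k'-1}$ inside this copy region, firm 1's unconstrained best reply to any such belief is simply to mimic it, which is what makes $P_1^k$ collapse onto $P_2^{k-1}$.

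For the base case $k=k'$, the delicate inclusion is $P_2^{k'-1}\subseteq P_1^{k'}$: the definition of $P_1^{k'}$ insists that the mimicking choice actually lie in $P_1^{k'-1}$, which may have shrunk substantially by round $k'-1$. To secure this, I would first prove the auxiliary statement
\[
P_2^{j}\cap\left[\tfrac{a_1}{a_1+a_2},\tfrac{a_2}{a_1+a_2}\right]\subseteq P_1^{j+1}\qquad\text{for all }j\geq 0
\]
by a side induction on $j$. The case $j=0$ reduces to the algebraic comparison $[\tfrac{a_1}{a_1+a_2},\tfrac{a_2}{a_1+a_2}]\subseteq P_1^{1}=[\tfrac{a_1}{\gamma},\tfrac{1+a_2}{\gamma}]$. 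For the step, any $c$ in $P_2^{j}\cap[\tfrac{a_1}{a_1+a_2},\tfrac{a_2}{a_1+a_2}]$ also lies in $P_2^{j-1}\cap[\tfrac{a_1}{a_1+a_2},\tfrac{a_2}{a_1+a_2}]\subseteq P_1^{j}$ by the side-induction hypothesis, and then the point belief $c_2=c\in P_2^{j}$ makes $c_1=c$ the unique global best reply, which is therefore also optimal within the restricted set $P_1^{j}$, so $c\in P_1^{j+1}$. Applying this auxiliary claim with $j=k'-1$ together with the lemma's hypothesis yields $P_2^{k'-1}\subseteq P_1^{k'}\subseteq P_1^{k'-1}$.

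For the reverse inclusion, take any $c_1\in P_1^{k'}$ with justifying belief $c_2\in P_2^{k'-1}$. Since $c_2$ is in the copy region, the unique globally optimal response against $c_2$ is $c_2$ itself, and by the chain just derived $c_2$ lies in $P_1^{k'-1}$; hence the restricted optimum over $P_1^{k'-1}$ coincides with the unconstrained one, forcing $c_1=c_2\in P_2^{k'-1}$. The inductive step $k\mapsto k+1$ then mirrors the base case verbatim: $P_2^k\subseteq P_2^{k'-1}\subseteq[\tfrac{a_1}{a_1+a_2},\tfrac{a_2}{a_1+a_2}]$ still holds, and the availability condition $P_2^k\subseteq P_1^k$ comes for free from the outer inductive hypothesis $P_1^k=P_2^{k-1}$. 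The genuine difficulty of the proof is concentrated in the auxiliary inclusion $P_2^{k'-1}\subseteq P_1^{k'-1}$: without it, restricting firm 1's optimization to $P_1^{k'-1}$ could discard the unique copy optimum and break the equality.
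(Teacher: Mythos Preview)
Your proof is correct, and in fact more careful than the paper's own. The paper disposes of this lemma in a single line: ``From the reaction function of firm 1, given any choice $c_2 \in [\frac{a_1}{a_1+a_2},\frac{a_2}{a_1+a_2}]$, the optimal choice of firm 1 is $c_1=c_2$.'' That is, the paper treats the result as an immediate consequence of the middle branch of firm 1's reaction function and leaves the rest to the reader.

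You correctly spot what that one-liner sweeps under the rug: the iterative definition of $P_1^{k}$ requires the optimal choice to lie in $P_1^{k-1}$, so one must know that the copy-response $c_1=c_2$ is still \emph{available} in the restricted set. Your side induction $P_2^{j}\cap[\frac{a_1}{a_1+a_2},\frac{a_2}{a_1+a_2}]\subseteq P_1^{j+1}$ is exactly the right device to secure this, and your two-inclusion argument for $P_1^{k'}=P_2^{k'-1}$ together with the outer induction are clean. The paper's terse treatment is arguably incomplete on this point (or else relies tacitly on the interval structure established in the surrounding Lemmas 4--6, which guarantees $[\frac{a_1}{a_1+a_2},\frac{a_2}{a_1+a_2}]\subseteq P_1^{k-1}$ whenever needed). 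Your approach has the virtue of being self-contained: it does not depend on the ordering in which the other lemmas are proved.
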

\begin{proof}
    From the reaction function of firm 1, given any choice $c_2 \in  [\frac{a_1}{a_1+a_2},\frac{a_2}{a_1+a_2}]$, the optimal choice of firm 1 is $c_1=c_2$.
\end{proof}
The next two Lemmas characterize $P_i^k$, if it is given that $P_j^{k-1}$ is an interval.
\begin{lemma}\label{player1lemma}
    If $P_2^{k-1}$ is an interval that can be written as $[l_2^{k-1},1-l_2^{k-1}]$, then $P_1^{k}$ is an interval that can be written as either 

    $$P_1^{k}=[\frac{l_2^{k-1}+a_1}{\gamma},\frac{1-l_2^{k-1}+a_2}{\gamma}] \ \text{if} \ l_2^{k-1} < \frac{a_1}{a_1+a_2}$$ or
    $$P_1^{k}=[l_2^{k-1},1-l_2^{k-1}] \ \text{if} \ l_2^{k-1} \geq \frac{a_1}{a_1+a_2}.$$

\end{lemma}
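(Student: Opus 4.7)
The proof will follow directly from firm 1's reaction function derived in Section 4, which assigns to each $c_2 \in [0,1]$ a unique global maximizer $c_1(c_2)$ of $s_1(\cdot, c_2)$ on $C_1$. The plan is to compute the image $I := \{c_1(c_2) : c_2 \in P_2^{k-1}\}$ of the interval $P_2^{k-1} = [l_2^{k-1}, 1-l_2^{k-1}]$ under this reaction function, show it equals the claimed interval, and then argue that $P_1^k = I$.

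The reaction function is continuous (the piecewise formulas agree at the breakpoints $\frac{a_1}{a_1+a_2}$ and $\frac{a_2}{a_1+a_2}$) and non-decreasing in $c_2$, so the image of any interval is again an interval determined by evaluating at the endpoints. In the second case, $l_2^{k-1} \geq \frac{a_1}{a_1+a_2}$, symmetry places $P_2^{k-1} \subseteq [\frac{a_1}{a_1+a_2}, \frac{a_2}{a_1+a_2}]$, so the reaction function acts as the identity and $I = P_2^{k-1}$; this is essentially Lemma \ref{copylemma}. In the first case, $l_2^{k-1} < \frac{a_1}{a_1+a_2}$, symmetry places $1-l_2^{k-1} > \frac{a_2}{a_1+a_2}$, so $P_2^{k-1}$ spans all three regions of the reaction function. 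Evaluating at the endpoints gives $c_1(l_2^{k-1}) = \frac{l_2^{k-1}+a_1}{\gamma}$ and $c_1(1-l_2^{k-1}) = \frac{1-l_2^{k-1}+a_2}{\gamma}$, and by continuity the image is exactly the interval between them.

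To conclude $P_1^k = I$, I use that $c_1(c_2)$ is the \emph{unique} maximizer of $s_1(\cdot, c_2)$ on $C_1$. For the inclusion $I \subseteq P_1^k$: given any $c_1 \in I$, pick a $c_2 \in P_2^{k-1}$ with $c_1 = c_1(c_2)$; then $c_1$ beats every other point of $C_1 \supseteq P_1^{k-1}$ against $c_2$, so it is certainly optimal in $P_1^{k-1}$ given $c_2$, provided $c_1 \in P_1^{k-1}$. For the reverse inclusion: if $c_1 \in P_1^k$ is optimal in $P_1^{k-1}$ given some $c_2 \in P_2^{k-1}$, then since the unique $C_1$-maximizer $c_1(c_2)$ lies in $P_1^{k-1}$, no other point of $P_1^{k-1}$ can tie it, forcing $c_1 = c_1(c_2) \in I$.

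The one non-routine step is the containment $I \subseteq P_1^{k-1}$, on which both directions above depend. I would establish it by induction on $k$, tracking $P_1^{k-1}$ and $P_2^{k-1}$ simultaneously via this lemma and its companion for firm 2: since the reaction function is contractive (the slope $1/\gamma$ is strictly less than $1$ in the outer regions, and the identity in the middle region), each application shrinks the spread of the interval around $\tfrac{1}{2}$, so the new image is automatically contained in the previous round's interval for firm 1. This contraction together with symmetry (Lemma \ref{symmetry}) is the key structural fact; everything else is a direct manipulation of the piecewise formulas for $c_1(c_2)$.
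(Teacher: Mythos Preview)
Your approach matches the paper's: both compute $P_1^k$ as the image of the symmetric interval $P_2^{k-1}$ under firm~1's continuous, monotone reaction function, splitting into the two cases and invoking Lemma~\ref{copylemma} for the inner one. You are more careful than the paper (which simply asserts the image equals $P_1^k$) about why the global best response already lies in $P_1^{k-1}$; for that containment $I\subseteq P_1^{k-1}$, the cleanest justification is not contractiveness per se but monotonicity of $c_1(\cdot)$ combined with the definitional nesting $P_2^{k-1}\subseteq P_2^{k-2}$ and the inductive identification $P_1^{k-1}=c_1(P_2^{k-2})$, which gives $I=c_1(P_2^{k-1})\subseteq c_1(P_2^{k-2})=P_1^{k-1}$ directly.
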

\begin{proof}
    If  $[l_2^{k-1},1-l_2^{k-1}] \subseteq [\frac{a_1}{a_1+a_2},\frac{a_2}{a_1+a_2}]$, then Lemma \ref{copylemma} implies that $P_1^{k}=P_2^{k-1}$, which is an interval. If $l_2^{k-1} < \frac{a_1}{a_1+a_2}$, then $1-l_2^{k-1} > \frac{a_2}{a_1+a_2}$ and $P_1^k=[\frac{l_2^{k-1}+a_1}{\gamma},\frac{1-l_2^{k-1}+a_2}{\gamma}]$, which is an interval.
\end{proof}

\begin{lemma}\label{Player2Lemma}
     If $P_1^{k-1}$ is an interval that can be written as $[l_1^{k-1},1-l_1^{k-1}]$, then $P_2^{k}$ can be written as either 
\begin{equation}\label{1}
P_2^{k}=[\frac{l_1^{k-1}+a_2}{\gamma},\frac{1-l_1^{k-1}+a_1}{\gamma}] \ \text{if} \ l_1^{k-1} <\frac{1}{2} - (a_2-a_1) 
\end{equation}
   or
   \begin{equation}\label{2}
P_2^{k}=[\frac{\frac{1}{2}+a_1}{\gamma},\frac{\frac{1}{2}+a_2}{\gamma}] \ \text{if} \ l_1^{k-1} \in [\frac{1}{2} - (a_2-a_1),\frac{1}{2}- \frac{a_2-a_1}{2}]
\end{equation}
    or
      \begin{equation}\label{3}
P_2^{k}=[\frac{\frac{1}{2}+a_1}{\gamma},\frac{1-l_1^{k-1}+a_1}{\gamma}] \cup [\frac{l_1^{k-1}+a_2}{\gamma},\frac{\frac{1}{2}+a_2}{\gamma}] \ \text{if} \ l_1^{k-1} > \frac{1}{2}- \frac{a_2-a_1}{2}
\end{equation}
\end{lemma}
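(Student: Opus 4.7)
The plan is to apply firm 2's reaction correspondence (derived in Section~4) at every point of $P_1^{k-1} = [l_1^{k-1}, 1-l_1^{k-1}]$ and read off $P_2^k$ as the union of the resulting optimal responses. Write $L := l_1^{k-1}$ and $\gamma := 1+a_1+a_2$ for brevity. The earlier lemmas already imply that $P_2^{k-1}$ contains every output of firm 2's reaction correspondence against any $c_1 \in P_1^{k-2} \supseteq P_1^{k-1}$, so a choice is optimal in $P_2^{k-1}$ iff it is optimal in $C_2$; hence no further restriction arises from the ``optimal in $P_2^{k-1}$'' clause of the iterative procedure.

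Sweeping $c_1 \in [L, \tfrac{1}{2}]$ through the right-branch response $(c_1+a_2)/\gamma$ produces the interval $I_R := \bigl[(L+a_2)/\gamma,\, (\tfrac{1}{2}+a_2)/\gamma\bigr]$, and sweeping $c_1 \in [\tfrac{1}{2}, 1-L]$ through the left-branch response $(c_1+a_1)/\gamma$ produces $I_L := \bigl[(\tfrac{1}{2}+a_1)/\gamma,\, (1-L+a_1)/\gamma\bigr]$; the two-valued response at $c_1 = \tfrac{1}{2}$ merely contributes the inner endpoints already present in these intervals. Thus $P_2^k = I_L \cup I_R$, and the rest of the proof is an endpoint comparison to simplify this union.

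Two comparisons are decisive: (i) $(L+a_2)/\gamma \leq (\tfrac{1}{2}+a_1)/\gamma$ iff $L \leq \tfrac{1}{2}-(a_2-a_1)$, and (ii) $(L+a_2)/\gamma \leq (1-L+a_1)/\gamma$ iff $L \leq \tfrac{1}{2}-(a_2-a_1)/2$. These two thresholds partition the parameter range into exactly the three regimes of the lemma. In regime~(1), $I_R$ lies strictly to the left of $I_L$ but overlaps it (the ordering of endpoints is $I_R$-left $<$ $I_L$-left $<$ $I_R$-right $<$ $I_L$-right), so the union collapses to $\bigl[(L+a_2)/\gamma,\, (1-L+a_1)/\gamma\bigr]$. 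In regime~(2) the two intervals still overlap and together fill $\bigl[(\tfrac{1}{2}+a_1)/\gamma,\, (\tfrac{1}{2}+a_2)/\gamma\bigr]$. In regime~(3) they become disjoint and the union is left in the form written in the lemma statement.

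The main obstacle is purely bookkeeping: keeping the four endpoints of $I_L$ and $I_R$ ordered correctly in each regime and verifying that in regime~(2) the overlap really does cover the gap $\bigl[(\tfrac{1}{2}+a_1)/\gamma,\, (\tfrac{1}{2}+a_2)/\gamma\bigr]$ without leaving holes (this comes down to exactly the second threshold $L \leq \tfrac{1}{2}-(a_2-a_1)/2$). No optimization step remains, because the shape of firm 2's reaction correspondence was already pinned down in Section~4.
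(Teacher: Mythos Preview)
Your proposal is correct and follows essentially the same approach as the paper's own proof: compute $P_2^k$ as the union of the two image intervals $I_L=[(\tfrac12+a_1)/\gamma,\,(1-L+a_1)/\gamma]$ and $I_R=[(L+a_2)/\gamma,\,(\tfrac12+a_2)/\gamma]$ obtained from firm~2's reaction correspondence, then compare endpoints to simplify the union into the three displayed cases. The paper's proof is terser (it writes down the union immediately and does exactly the same two threshold comparisons, $L\gtrless\tfrac12-\tfrac{a_2-a_1}{2}$ and $L\gtrless\tfrac12-(a_2-a_1)$), and it does not pause to justify why optimality in $P_2^{k-1}$ coincides with optimality in $C_2$; your extra remark on that point is a welcome clarification rather than a deviation.
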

\begin{proof}
    As  $P_1^{k-1}=[l_1^{k-1},1-l_1^{k-1}]$, we have that $P_2^{k}=[\frac{\frac{1}{2}+a_1}{\gamma},\frac{1-l_1^{k-1}+a_1}{\gamma}] \cup [\frac{l_1^{k-1}+a_2}{\gamma},\frac{\frac{1}{2}+a_2}{\gamma}]$. We will have two disjoint intervals if $\frac{l_1^{k-1}+a_2}{\gamma} > \frac{1-l_1^{k-1}+a_1}{\gamma}$, which implies that $l_1^{k-1} > \frac{1}{2}- \frac{a_2-a_1}{2}$. Hence in that case, we have $P_2^{k}=[\frac{\frac{1}{2}+a_1}{\gamma},\frac{1-l_1^{k-1}+a_1}{\gamma}]\cup[\frac{l_1^{k-1}+a_2}{\gamma},\frac{\frac{1}{2}+a_2}{\gamma}]$.
    \newline
    \newline
    Similarly, if $\frac{l_1^{k-1}+a_2}{\gamma} \leq \frac{1-l_1^{k-1}+a_1}{\gamma}$ and $\frac{l_1^{k-1}+a_2}{\gamma} \geq \frac{\frac{1}{2}+a_1}{\gamma}$, we have that $P_2^{k}=[\frac{\frac{1}{2}+a_1}{\gamma},\frac{\frac{1}{2}+a_2}{\gamma}]$. Rewriting these inequalities leads to $l_1^{k-1} \leq \frac{1}{2}- \frac{a_2-a_1}{2}$ and $l_1^{k-1} \geq \frac{1}{2} - (a_2-a_1)$.
    \newline
    \newline
    Last, if $\frac{l_1^{k-1}+a_2}{\gamma} < \frac{\frac{1}{2}+a_1}{\gamma}$, we have that $P_2^{k}=[\frac{l_1^{k-1}+a_2}{\gamma},\frac{1-l_1^{k-1}+a_1}{\gamma}]$. Rewriting this inequality leads to $l_1^{k-1} < \frac{1}{2} - (a_2-a_1)$
\end{proof}
Note that an immediate consequence from Lemma \ref{Player2Lemma} is that $P_2^k$   only becomes two disjoint intervals if $l_1^{k-1} > \frac{1}{2} - \frac{a_2-a_1}{2}$ in some round $k$. The next Lemma formally states and proves this result.

\begin{lemma}\label{Andreslemma}
    If $\min P_1^{k'-1} \leq \frac{1}{2}- \frac{a_2-a_1}{2}$ , then both $P_1^k$ and $P_2^{k}$ are all intervals for all $k\in \{0,...,k'\}$
\end{lemma}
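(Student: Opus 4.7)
The plan is a straightforward induction on $k$, running from $k=0$ up to $k=k'$, and using Lemmas \ref{symmetry}, \ref{player1lemma}, and \ref{Player2Lemma} together with the monotonicity $P_i^0 \supseteq P_i^1 \supseteq \cdots$ built into the iterative procedure.

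The base case $k=0$ is immediate since $P_1^0=P_2^0=[0,1]$. For the inductive step, suppose $P_1^{k-1}$ and $P_2^{k-1}$ are intervals for some $1 \leq k \leq k'$. By Lemma \ref{symmetry} they are symmetric about $\tfrac{1}{2}$, so we can write $P_1^{k-1}=[l_1^{k-1},\,1-l_1^{k-1}]$ and $P_2^{k-1}=[l_2^{k-1},\,1-l_2^{k-1}]$. Lemma \ref{player1lemma} then yields that $P_1^k$ is an interval, regardless of the value of $l_2^{k-1}$. For $P_2^k$, Lemma \ref{Player2Lemma} shows that $P_2^k$ is a single interval precisely in its first two cases, which together cover the range $l_1^{k-1} \leq \tfrac{1}{2}-\tfrac{a_2-a_1}{2}$, and splits into two disjoint pieces exactly when $l_1^{k-1} > \tfrac{1}{2}-\tfrac{a_2-a_1}{2}$.

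The only remaining step is to verify that $l_1^{k-1} \leq \tfrac{1}{2}-\tfrac{a_2-a_1}{2}$ for every $k-1 \in \{0,\ldots,k'-1\}$. This is the point where the hypothesis of the lemma is used, and it follows from the nesting of the iterative sets: since $P_1^{k-1} \supseteq P_1^{k'-1}$ whenever $k-1 \leq k'-1$, we obtain $l_1^{k-1} = \min P_1^{k-1} \leq \min P_1^{k'-1} \leq \tfrac{1}{2}-\tfrac{a_2-a_1}{2}$, where the last inequality is precisely the assumption. Hence Lemma \ref{Player2Lemma} delivers $P_2^k$ as a single interval, completing the induction.

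There is no serious obstacle in the argument; once the joint induction hypothesis (both $P_1^{k-1}$ and $P_2^{k-1}$ are intervals) is set up, the preceding lemmas do almost all of the work. The one conceptual subtlety worth highlighting is the backward propagation of the minimum-bound from round $k'-1$ to all earlier rounds, which comes for free from the nesting $P_1^0 \supseteq P_1^1 \supseteq \cdots$ inherent in the definition of point rationalizability, and which is exactly what ensures we never leave the ``single interval'' regime of Lemma \ref{Player2Lemma} prior to round $k'$.
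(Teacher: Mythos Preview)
Your proof is correct and follows essentially the same approach as the paper: both rely on Lemmas \ref{player1lemma} and \ref{Player2Lemma} to propagate the interval property step by step, with the nesting $P_1^{k-1}\supseteq P_1^{k'-1}$ guaranteeing that the single-interval regime of Lemma \ref{Player2Lemma} applies throughout. The only cosmetic difference is that the paper inducts on $k'$ while you induct on $k$ for a fixed $k'$; your version is arguably cleaner because it makes the use of the nesting explicit, whereas the paper leaves it implicit in the induction hypothesis.
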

\begin{proof}
Let $k'=1$. We have that $P_1^0=P_2^0=[0,1]$, which are intervals. Lemma \ref{Player2Lemma} implies that if $\min P_1^{0} \leq \frac{1}{2}- \frac{a_2-a_1}{2}$, then $P_2^1$ is an interval. Because $P_2^0$ is an interval, Lemma \ref{player1lemma} implies that $P_1^1$ is an interval.
\newline
\newline
Now let $k' \geq 2$ and assume that $\min P_1^{k'-2} \leq \frac{1}{2}- \frac{a_2-a_1}{2}$ and that $P_1^{k}$ and $P_2^{k}$ are an interval for all $k \in \{0,...,k'-1\}$ . We will now prove that if $\min P_1^{k'-1} \leq \frac{1}{2}- \frac{a_2-a_1}{2}$, then $P_1^{k'}$ and $P_2^{k'}$ are an interval. As $P_1^{k'-1}$ is an interval and $\min P_1^{k'-1} \leq \frac{1}{2}- \frac{a_2-a_1}{2}$, Lemma \ref{Player2Lemma} implies that $P_2^{k'}$ is an interval. As $P_2^{k'-1}$ is an interval, Lemma \ref{player1lemma} implies that $P_1^{k'}$ is an interval.
\end{proof}
The next three Lemma's are very similar in terms of structure and result. In summary, we show that there exists a $k$ such that $\min P_1^{k} > \frac{1}{2} - \frac{a_2-a_1}{2}$ and $\min P_2^{k}  \geq \frac{a_1}{a_1+a_2}$.
\begin{lemma}\label{firstpuzzlepiece}
    There exists a $k_1 $ such that $\min P_2^{k_1}  \geq \frac{a_1}{a_1+a_2}$
\end{lemma}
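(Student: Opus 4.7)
The strategy is to track $l_2^k := \min P_2^k$ across rounds, combining the recursions of Lemmas \ref{player1lemma}, \ref{Player2Lemma}, and \ref{Andreslemma}. The key observation is that whenever Lemma \ref{Player2Lemma} lands the iteration in case 2 or case 3, we get $\min P_2^k = \frac{\frac{1}{2}+a_1}{\gamma}$, and a one-line cross-multiplication shows this is at least $\frac{a_1}{a_1+a_2}$ (the inequality is equivalent to $a_2 \geq a_1$). So it suffices to show that the iteration either eventually enters case 2 or case 3 of Lemma \ref{Player2Lemma}, or else produces $l_2^k \geq \frac{a_1}{a_1+a_2}$ directly.

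I would first dispose of the subcase $a_2 - a_1 \geq \frac{1}{2}$: here $\frac{1}{2}-(a_2-a_1) \leq 0 = l_1^0$, so Lemma \ref{Player2Lemma} applies in case 2 or case 3 already at round $k=1$, giving $k_1 = 1$.

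Otherwise $a_2 - a_1 < \frac{1}{2}$. As long as $l_1^{k-1} < \frac{1}{2}-(a_2-a_1)$ and $l_2^{k-1} < \frac{a_1}{a_1+a_2}$, Lemma \ref{Andreslemma} keeps $P_1^{k-1}$ and $P_2^{k-1}$ as intervals, and Lemmas \ref{player1lemma} and \ref{Player2Lemma} reduce to the coupled linear recursion
$$ l_1^k = \frac{l_2^{k-1}+a_1}{\gamma}, \qquad l_2^k = \frac{l_1^{k-1}+a_2}{\gamma}. $$
Starting from $l_1^0 = l_2^0 = 0$, a short induction shows that both sequences are strictly increasing and bounded above by the unique fixed point $l_1^* = \frac{1+a_1}{2+a_1+a_2}$, $l_2^* = \frac{1+a_2}{2+a_1+a_2}$, and hence converge to it. A cross-multiplication check gives $l_2^* > \frac{a_1}{a_1+a_2}$ (equivalent to $a_2(1+a_2) > a_1(1+a_1)$, which holds since $a_2 > a_1$). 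Therefore $l_2^k$ cannot remain below $\frac{a_1}{a_1+a_2}$ forever, and the first round $k_1$ at which either $l_2^{k_1} \geq \frac{a_1}{a_1+a_2}$ holds or the threshold $l_1^{k_1-1} \geq \frac{1}{2}-(a_2-a_1)$ is crossed delivers the desired conclusion.

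The main obstacle is organizing the case split cleanly. One must simultaneously track which case of Lemmas \ref{player1lemma} and \ref{Player2Lemma} is active at each step, ensure the ``intervals'' hypothesis of Lemma \ref{Andreslemma} remains in force until the goal is reached, and verify algebraically that the fixed point of the coupled recursion sits above the threshold $\frac{a_1}{a_1+a_2}$ so that the induction terminates in finitely many rounds rather than getting stuck.
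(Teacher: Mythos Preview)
Your proposal is correct and follows essentially the same route as the paper's proof. Both arguments hinge on the dichotomy: either Lemma~\ref{Player2Lemma} eventually lands in case~(\ref{2}) or~(\ref{3}), in which case $\min P_2^k=\tfrac{\frac{1}{2}+a_1}{\gamma}>\tfrac{a_1}{a_1+a_2}$, or else the coupled linear recursion $l_1^k=\tfrac{l_2^{k-1}+a_1}{\gamma}$, $l_2^k=\tfrac{l_1^{k-1}+a_2}{\gamma}$ persists and drives $l_2^k$ toward $\tfrac{1+a_2}{2+a_1+a_2}>\tfrac{a_1}{a_1+a_2}$. The paper phrases this as a proof by contradiction and collapses the recursion to a single two-step map for $l_2^k$; you argue directly, split off the subcase $a_2-a_1\geq\tfrac{1}{2}$ explicitly, and track both fixed points, but these are cosmetic differences.
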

\begin{proof}
We prove the statement by contradiction, that is, assume that $\min P_2^{k}  < \frac{a_1}{a_1+a_2}$ for all $k$. As $P_1^k$ and $P_2^k$ both start as an interval $[0,1]$ when $k=0$, Lemma \ref{Player2Lemma} implies that if $P_2^k$ ever has the form (\ref{2}) or (\ref{3}), then we have that $\min P_2^k= \frac{\frac{1}{2}+a_1}{\gamma}$. Note that $\frac{\frac{1}{2}+a_1}{\gamma} > \frac{a_1}{a_1+a_2}$ as $a_2 > a_1$. Hence, this would imply that $\min P_2^k > \frac{a_1}{a_1+a_2}$ for some $k$, which is a contradiction.
\newline
\newline
From now on, we assume that $P_2^k$ is in form $(\ref{1})$ for every $k$. This leads to the additional restriction that $l_1^k < \frac{1}{2}-(a_2-a_1)$ for all $k$. Furthermore, we have $\min P_2^{k}=l_2^k < \frac{a_1}{a_1+a_2}$ for all $k$ by assumption. Hence, Lemma \ref{player1lemma} and \ref{Player2Lemma} imply that for all $k \in \{1,2,...\}$
$$P_1^k=[\frac{l_2^{k-1}+a_1}{\gamma},\frac{1-l_2^{k-1}+a_2}{\gamma}]$$ 
and
$$P_2^k=[\frac{l_1^{k-1}+a_2}{\gamma},\frac{1-l_1^{k-1}+a_1}{\gamma}]$$ 
Let us then investigate the convergence of $l_2^k$. We have that $l_2^k=\frac{l_1^{k-1}+a_2}{\gamma}=\frac{\frac{l_2^{k-2}+a_1}{\gamma}+a_2}{\gamma}$.
\newline
\newline
As $l_2^k$ and $l_2^{k-2}$ converge to the same point $p$, we have $p=\frac{\frac{p+a_1}{\gamma}+a_2}{\gamma}=\frac{a_2+1}{a_1+a_2+2} > \frac{1}{2}> \frac{a_1}{a_1+a_2}$. This is a contradiction to the assumption that $\min P_2^k < \frac{a_1}{a_1+a_2}$ for all $k$. Hence, there exists a $k_1$ such that $\min P_2^{k_1} \geq \frac{a_1}{a_1+a_2}$ 
  \end{proof}

\begin{lemma}\label{secondpuzzlepiece}
    There exists a $k_2$ such that $\min P_1^{k_2} \geq \frac{1}{2} - (a_2-a_1)$
\end{lemma}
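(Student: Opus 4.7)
The plan is to mirror the contradiction strategy of Lemma \ref{firstpuzzlepiece}. I would start by assuming, toward a contradiction, that $\min P_1^k < \frac{1}{2}-(a_2-a_1)$ for every $k$. The case $a_2-a_1 \geq \frac{1}{2}$ is trivial, since $\min P_1^k \geq 0$ automatically exceeds a non-positive target, so I may assume $\frac{1}{2}-(a_2-a_1)>0$ and in particular $\min P_1^k < \frac{1}{2}-\frac{a_2-a_1}{2}$ for all $k$. By Lemma \ref{Andreslemma}, this forces both $P_1^k$ and $P_2^k$ to remain intervals in every round, and by Lemma \ref{symmetry} they take the symmetric form $[l_i^k,1-l_i^k]$, where I write $l_i^k=\min P_i^k$.

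Next I would convert the contradiction assumption into explicit recursions. Since $l_1^{k-1}<\frac{1}{2}-(a_2-a_1)$ at every step, Lemma \ref{Player2Lemma} puts $P_2^k$ in form (\ref{1}), yielding $l_2^k = \frac{l_1^{k-1}+a_2}{\gamma}$ for all $k\geq 1$. Because $P_i^k \subseteq P_i^{k-1}$ by the definition of the iterative procedure and both sets are symmetric intervals around $\frac{1}{2}$, the sequences $\{l_1^k\}$ and $\{l_2^k\}$ are non-decreasing and bounded above, hence convergent. Let $p=\lim_{k\to\infty} l_1^k$, so that $p\leq \frac{1}{2}-(a_2-a_1)$. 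Lemma \ref{firstpuzzlepiece} supplies some $k_1$ with $l_2^{k_1}\geq \frac{a_1}{a_1+a_2}$, and monotonicity preserves this inequality for all $k\geq k_1$, so Lemma \ref{player1lemma} yields $l_1^k = l_2^{k-1}$ for every $k\geq k_1+1$. Combining the two recursions gives $l_1^k = \frac{l_1^{k-2}+a_2}{\gamma}$ from some round onward.

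Passing to the limit, $p=\frac{p+a_2}{\gamma}$, which solves to $p=\frac{a_2}{a_1+a_2}$. Since $a_1<a_2$, this limit satisfies $p>\frac{1}{2}>\frac{1}{2}-(a_2-a_1)$, contradicting $p\leq \frac{1}{2}-(a_2-a_1)$. The step I expect to be the main obstacle is the careful bookkeeping required to verify that the branch conditions of Lemmas \ref{player1lemma} and \ref{Player2Lemma} stay simultaneously satisfied from round $k_1$ onwards; once the monotonicity of $\{l_2^k\}$ is combined with the conclusion of Lemma \ref{firstpuzzlepiece}, the two-step recursion closes on the fixed point $\frac{a_2}{a_1+a_2}$ and the contradiction is immediate.
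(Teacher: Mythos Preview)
Your proposal is correct and follows essentially the same contradiction route as the paper: assume the bound fails in every round, use Lemma~\ref{Andreslemma} to keep both sets as symmetric intervals, apply Lemma~\ref{Player2Lemma} in form~(\ref{1}) to get $l_2^k=\frac{l_1^{k-1}+a_2}{\gamma}$, invoke Lemma~\ref{firstpuzzlepiece} together with Lemma~\ref{player1lemma}/\ref{copylemma} to obtain $l_1^k=l_2^{k-1}$ eventually, and then derive the two-step recursion whose fixed point $\frac{a_2}{a_1+a_2}>\frac{1}{2}$ contradicts the standing bound. Your version is in fact slightly more explicit than the paper's, since you separate out the trivial case $a_2-a_1\geq\frac{1}{2}$ and justify the monotonicity of $\{l_i^k\}$ needed to pass to the limit, whereas the paper leaves both implicit.
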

\begin{proof}
We prove the statement by contradiction. Hence, we assume that $\min P_1^k < \frac{1}{2} - (a_2-a_1)$ for all $k$. Lemma \ref{Andreslemma} then implies that $P_1^k$ and $P_2^k$ are intervals for all $k$.
\newline
\newline
Lemma \ref{Player2Lemma} implies that it then must be that $P_2^k=[\frac{l_1^{k-1}+a_2}{\gamma},\frac{1-l_1^{k-1}+a_1}{\gamma}]$ for all $k$. Furthermore, Lemma \ref{firstpuzzlepiece} and Lemma \ref{copylemma} imply that there exists a $k_1$ such that $P_1^{k}=P_2^{k-1}$ for all $k \geq k_1$. Hence, for all $k \geq k_1$ we have

$$P_2^{k}=[\frac{l_1^{k-1}+a_2}{\gamma},\frac{1-l_1^{k-1}+a_1}{\gamma}]$$ and

$$P_1^{k}=[l_2^{k-1},1-l_2^{k-1}]$$

We have that  $l_1^{k+2}=l_2^{k+1}=\frac{l_1^{k}+a_2}{\gamma}$. Investigating the convergence of $l_1^{k}$ then leads to the limit point $p= \frac{a_2}{a_1+a_2} > \frac{1}{2}> \frac{1}{2}- (a_2-a_1)$. This is a contradiction to the assumption that $\min P_1^k < \frac{1}{2} - (a_2-a_1)$ for all $k$. Hence, there exists a $k_2$ such that $\min P_1^{k_2} \geq \frac{1}{2} - (a_2-a_1)$. 
\end{proof}

\begin{lemma}\label{thirdpuzzlepiece}
    There exists a $k_3$ such that $\min P_1^{k_3} > \frac{1}{2} - \frac{a_2-a_1}{2}$
\end{lemma}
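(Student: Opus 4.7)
The plan is a proof by contradiction: suppose $\min P_1^k \leq \tfrac{1}{2}-\tfrac{a_2-a_1}{2}$ for every $k$. Lemma \ref{Andreslemma} then forces $P_1^k$ and $P_2^k$ to remain intervals for all $k$, which by symmetry (Lemma \ref{symmetry}) I write as $[l_i^k, 1-l_i^k]$. The driving observation is that whenever $l_1^k$ lies in the middle regime $[\tfrac{1}{2}-(a_2-a_1),\,\tfrac{1}{2}-\tfrac{a_2-a_1}{2}]$, the second case of Lemma \ref{Player2Lemma} pins $P_2^{k+1}$ to the fixed interval $\bigl[\tfrac{\frac{1}{2}+a_1}{\gamma},\,\tfrac{\frac{1}{2}+a_2}{\gamma}\bigr]$, whose left endpoint will turn out to be both large enough to trigger Lemma \ref{copylemma} and strictly larger than $\tfrac{1}{2}-\tfrac{a_2-a_1}{2}$; this value then transfers to $l_1^{k+2}$ and contradicts the hypothesis.

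Executing this, I would first invoke Lemma \ref{secondpuzzlepiece} to obtain a round $k_2$ with $l_1^{k_2}\geq \tfrac{1}{2}-(a_2-a_1)$. Combined with the contradiction hypothesis, $l_1^{k_2}\in[\tfrac{1}{2}-(a_2-a_1),\,\tfrac{1}{2}-\tfrac{a_2-a_1}{2}]$, which is precisely the middle regime of Lemma \ref{Player2Lemma}. This yields $P_2^{k_2+1}=\bigl[\tfrac{\frac{1}{2}+a_1}{\gamma},\,\tfrac{\frac{1}{2}+a_2}{\gamma}\bigr]$, and in particular $l_2^{k_2+1}=\tfrac{\frac{1}{2}+a_1}{\gamma}$. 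Next I would verify $\tfrac{\frac{1}{2}+a_1}{\gamma}\geq \tfrac{a_1}{a_1+a_2}$ and $\tfrac{\frac{1}{2}+a_2}{\gamma}\leq \tfrac{a_2}{a_1+a_2}$; clearing denominators both reduce to $a_2\geq a_1$, so $P_2^{k_2+1}\subseteq[\tfrac{a_1}{a_1+a_2},\,\tfrac{a_2}{a_1+a_2}]$. Lemma \ref{copylemma} then delivers $P_1^{k_2+2}=P_2^{k_2+1}$ and hence $l_1^{k_2+2}=\tfrac{\frac{1}{2}+a_1}{\gamma}$.

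The final inequality to verify is $\tfrac{\frac{1}{2}+a_1}{\gamma}>\tfrac{1}{2}-\tfrac{a_2-a_1}{2}$. Multiplying through by the positive number $2\gamma$ turns this into $1+2a_1>(1+a_1-a_2)(1+a_1+a_2)=1+2a_1+a_1^2-a_2^2$, i.e.\ $a_2^2>a_1^2$, which holds strictly since $0<a_1<a_2$. Combining, $l_1^{k_2+2}>\tfrac{1}{2}-\tfrac{a_2-a_1}{2}$, contradicting the hypothesis at round $k_2+2$; taking $k_3=k_2+2$ completes the proof.

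The main obstacle I foresee is simply keeping the algebra honest across all parameter ranges. In the edge case $a_2-a_1>1$ the quantity $\tfrac{1}{2}-\tfrac{a_2-a_1}{2}$ is negative, but then the final inequality is automatic since the left-hand side is positive, so no separate case analysis is needed. A minor bookkeeping point is verifying that Lemma \ref{Player2Lemma}'s representation $[l_1^{k_2},1-l_1^{k_2}]$ is a genuine nondegenerate interval, which follows immediately from $l_1^{k_2}\leq \tfrac{1}{2}-\tfrac{a_2-a_1}{2}<\tfrac{1}{2}$. Note also that the proof is structurally shorter than those of Lemmas \ref{firstpuzzlepiece} and \ref{secondpuzzlepiece}: no convergence argument is required because form \eqref{2} produces a \emph{constant} output irrespective of the input in the middle regime, so a single application of the recursion is enough.
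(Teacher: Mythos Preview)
Your proof is correct and follows essentially the same line as the paper's: assume the contrary, use Lemma~\ref{Andreslemma} to keep everything an interval, reach the middle regime of Lemma~\ref{Player2Lemma} via Lemma~\ref{secondpuzzlepiece}, read off the fixed interval $\bigl[\tfrac{\frac{1}{2}+a_1}{\gamma},\tfrac{\frac{1}{2}+a_2}{\gamma}\bigr]$, feed it through Lemma~\ref{copylemma}, and obtain the contradicting inequality $\tfrac{\frac{1}{2}+a_1}{\gamma}>\tfrac{1}{2}-\tfrac{a_2-a_1}{2}$.

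The one genuine difference from the paper is a small shortcut: the paper invokes Lemma~\ref{firstpuzzlepiece} separately (taking $k^{*}=\max(k_1,k_2)$) to guarantee that $P_2$ sits inside $[\tfrac{a_1}{a_1+a_2},\tfrac{a_2}{a_1+a_2}]$ before applying Lemma~\ref{copylemma}, whereas you observe directly that the output $\bigl[\tfrac{\frac{1}{2}+a_1}{\gamma},\tfrac{\frac{1}{2}+a_2}{\gamma}\bigr]$ of form~\eqref{2} already satisfies this containment (both endpoint checks reduce to $a_2\geq a_1$). This makes Lemma~\ref{firstpuzzlepiece} unnecessary here and tidies the index bookkeeping, but the underlying mechanism is identical.
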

\begin{proof}
    By contradiction, assume that $\min P_1^k \leq \frac{1}{2} - \frac{a_2-a_1}{2}$ for all $k$. Lemma \ref{Andreslemma} implies that $P_1^k$ and $P_2^k$ are then an interval for all $k$. Furthermore, Lemma \ref{firstpuzzlepiece} and Lemma \ref{secondpuzzlepiece} imply that there exists a $k^*= \max (k_1,k_2)$ such that $\min P_2^{k^*-1}  \geq \frac{a_1}{a_1+a_2}$ and $\min P_1^{k^*} \geq \frac{1}{2} - (a_2-a_1)$. Then for all $k \geq k^*$

    $$P_1^{k}=[l_2^{k-1},1-l_2^{k-1}]$$ and
    $$P_2^{k}=[\frac{\frac{1}{2}+a_1}{\gamma},\frac{\frac{1}{2}+a_2}{\gamma}]$$

    However, then $l_1^{k+2}=l_2^{k+1}=\frac{\frac{1}{2}+a_1}{\gamma}> \frac{1}{2}- \frac{a_2-a_1}{2}$, which is a contradiction. Hence there exists a $k_3$ such that $\min P_1^{k_3} > \frac{1}{2} - \frac{a_2-a_1}{2}$.
\end{proof}

\subsubsection*{Proof of Theorem 1}
 \begin{proof}
    By Lemma \ref{thirdpuzzlepiece}, there exists a $k_3$ such that $l_1^{k_3-1}> \frac{1}{2} - \frac{a_2-a_1}{2}$. In particular, let $k_3$ be the minimum $k$ such that $l_1^{k_3-1}> \frac{1}{2} - \frac{a_2-a_1}{2}$. Lemma \ref{Andreslemma} then implies that $P_1^k$ and $P_2^k$ are intervals for all $k \in \{0,...,k_3-1\}$. Lemma \ref{Player2Lemma} then implies that

    $$P_2^{k_3}=[\frac{\frac{1}{2}+a_1}{\gamma},\frac{1-l_1^{k_3-1}+a_1}{\gamma}]\cup[\frac{l_1^{k_3-1}+a_2}{\gamma},\frac{\frac{1}{2}+a_2}{\gamma}]$$

    Furthermore by Lemma \ref{player1lemma}, $P_1^{k_3}$ is an interval $[l_1^{k_3},1-l_1^{k_3}]$, as $P_2^{k_3-1}$ is an interval. As $\min P_2^{k_3}=\frac{\frac{1}{2}+a_1}{\gamma} > \frac{a_1}{a_1+a_2}$, Lemma \ref{copylemma} implies that $P_1^{k+1}=P_2^{k}$ for all $k \geq k_3$. For convenience, it is useful to write this as $P_2^{k_3}=[d_2^{l,k_3},d_2^{r,k_3}]\cup[u_2^{l,k_3},u_2^{r,k_3}]$. As $l_1^{k_3-1}> \frac{1}{2} - \frac{a_2-a_1}{2}$, we have that $l_1^{k_3}> \frac{1}{2} - \frac{a_2-a_1}{2}$. Hence, Lemma \ref{Player2Lemma} implies that

    $$P_2^{k_3+1}=[\frac{\frac{1}{2}+a_1}{\gamma},\frac{1-l_1^{k_3}+a_1}{\gamma}]\cup[\frac{l_1^{k_3}+a_2}{\gamma},\frac{\frac{1}{2}+a_2}{\gamma}]=[d_2^{l,k_3+1},d_2^{r,k_3+1}]\cup[u_2^{l,k_3+1},u_2^{r,k_3+1}]$$
Lemma \ref{copylemma} implies that
$$P_1^{k_3+2}=P_2^{k_3+1}$$ 
Now consider $P_2^{k_3+2}$. The choices that are in $P_2^{k_3+2}$ depend on $P_1^{k_3+1}=P_2^{k_3}=[d_2^{l,k_3},d_2^{r,k_3}]\cup[u_2^{l,k_3},u_2^{r,k_3}]$. Now consider the reaction function of firm 2. As all the choices $c_1 \in [d_2^{l,k_3},d_2^{r,k_3}]$ are less than $\frac{1}{2}$, the optimal choice for firm 2 for these choices is $c_2=\frac{c_1+a_2}{\gamma}$. Similarly all the choices $c_1 \in [u_2^{l,k_3},u_2^{r,k_3}]$ are more than $\frac{1}{2}$, and the optimal choice for firm 2 for these choices is $c_2=\frac{c_1+a_1}{\gamma}$. Hence, we have that

$$P_2^{k_3+2}= [\frac{u_2^{l,k_3}+a_1}{\gamma},\frac{u_2^{r,k_3}+a_1}{\gamma}]\cup[\frac{d_2^{l,k_3}+a_2}{\gamma},\frac{d_2^{r,k_3}+a_2}{\gamma}]$$
Hence, for all $k \geq k_3+2$ we have 

$$P_2^{k}= [\frac{u_2^{l,k-2}+a_1}{\gamma},\frac{u_2^{r,k-2}+a_1}{\gamma}]\cup[\frac{d_2^{l,k-2}+a_2}{\gamma},\frac{d_2^{r,k-2}+a_2}{\gamma}]$$
and
$$P_1^{k+1}=P_2^{k}$$
We have that $d_2^{l,k+4}=\frac{u_2^{l,k+2}+a_1}{\gamma}=\frac{\frac{d_2^{l,k}+a_2}{\gamma}+a_1}{\gamma}$. As $d_2^{l,k+4}$ and $d_2^{l,k}$ converge to the same point $p$, we have that $p=\frac{\frac{p+a_2}{\gamma}+a_1}{\gamma}$. Similarly to Lemma \ref{firstpuzzlepiece}, solving for $p$ leads to $p = \frac{a_1+1}{a_1+a_2+2}$.  Similarly for $d_2^{k,r}$ we obtain $d_2^{r,k+4}=\frac{d_2^{r,k}+a_2}{\gamma^2}+\frac{a_1}{\gamma}$. Hence, $d_2^{r,k}$ also converges to the same point $\frac{a_1+1}{a_1+a_2+2}$. By Lemma \ref{symmetry} , the points $u_2^{l,k}$ and $u_2^{r,k}$ converge to $\frac{a_2+1}{a_1+a_2+2}$. As $P_1^k=P_2^{k-1}$ for all $k \geq k_3+2$, firm 1 has the same set of point rationalizable choices.    
    \end{proof}
  
 \subsubsection*{Proof of Proposition 1}
\begin{proof}
    We will prove by induction that $P_1^k=P_2^k=[l^k,u^k]$, where $[l^k,u^k]=[\frac{l^{k-1}+a}{1+2a},\frac{u^{k-1}+a}{1+2a}]$. For the base case $k=1$, we indeed have that $P_1^1=P_2^1=[\frac{0+a}{1+2a},\frac{1+a}{1+2a}]$, as $[l^k,u^k]=[0,1]$.
    \newline
    \newline
    Now assume that $P_1^{k-1}=P_2^{k-1}=[l^{k-1},u^{k-1}]$. As the problem is symmetric, using the reaction function of firm 1 leads to $P_1^{k}=[\frac{l_2^{k-1}+a}{1+2a},\frac{u_2^{k-1}+a}{1+2a}]=[\frac{l^{k-1}+a}{1+2a},\frac{u^{k-1}+a}{1+2a}]$.
    \newline
    \newline
    As $l_1^k$ and $l_1^{k-1}$ converge to the same point $p$, we have that $p=\frac{p+a}{1+2a}$, where solving for $p$ leads to $p=\frac{1}{2}$. Similarly, $u_1^k$ converges to $\frac{1}{2}$. Hence, we have that $P_1=P_2=\{\frac{1}{2}\}$.
\end{proof}
\newpage
\subsection{Three firms and Symmetric Inefficiency Levels}

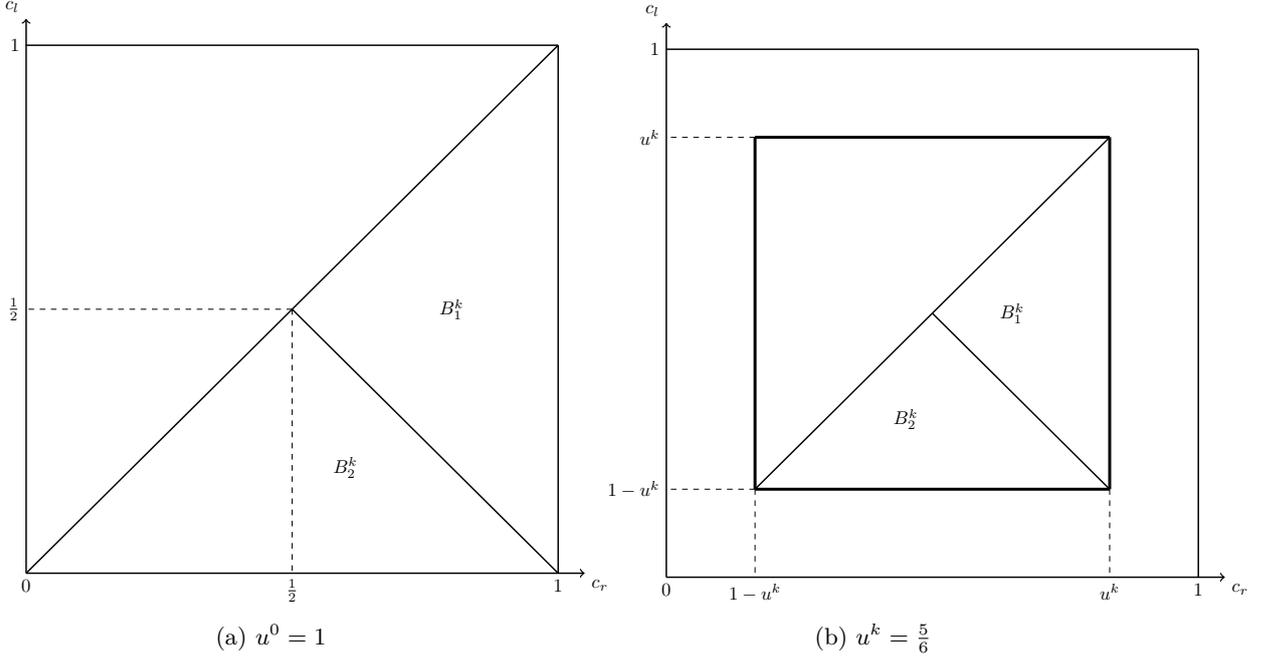
\begin{figure}
     \centering
     \begin{subfigure}[b]{0.48\textwidth}
        \centering
     \scalebox{0.7} {
      \begin{tikzpicture} [scale=1]

        \draw[thick,->] (0,0) -- (10.5,0) node[anchor=north west] {$c_r$};
        \draw[thick,->] (0,0) -- (0,10.5) node[anchor= south east] {$c_l$};
        \draw[thick] (10,0) -- (5,5);
        \draw[thick] (0,0) -- (10,10);
        \draw[thick] (10,0) -- (10,10);
        \draw[thick] (0,10) -- (10,10);
        \draw[dashed] (5,5) -- (5,0);
        \draw[dashed] (5,5) -- (0,5);

        \node at (0,0) [anchor=north] {0};
        \node at (10,0) [anchor=north] {1};
        \node at (0,10) [anchor=east] {1};
        
\node at (8,5) {\textbf{$B_1^k$}};
       \node at (6,2) {\textbf{$B_2^k$}};

          \node at (0,5) [anchor=east] {$\frac{1}{2}$};
             \node at (5,0) [anchor=north] {$\frac{1}{2}$};

        \end{tikzpicture}
}
   \caption{$u^0=1$}
    \label{fig:my_label}
     \end{subfigure}
     \hfill
     \begin{subfigure}[b]{0.48\textwidth}
         \centering
     \scalebox{0.7} {
   \begin{tikzpicture} [scale=1]

        \draw[thick,->] (0,0) -- (10.5,0) node[anchor=north west] {$c_r$};
        \draw[thick,->] (0,0) -- (0,10.5) node[anchor= south east] {$c_l$};
        \draw[thick] (8.3333,1.667) -- (5,5);
        \draw[thick] (1.667,1.667) -- (8.3333,8.3333);
        \draw[thick] (10,0) -- (10,10);
        \draw[thick] (0,10) -- (10,10);

        \node at (6.5,5) {\textbf{$B_1^k$}};
       \node at (4.5,3) {\textbf{$B_2^k$}};

        \node at (0,0) [anchor=north] {0};
        \node at (10,0) [anchor=north] {1};
        \node at (0,10) [anchor=east] {1};

           \draw[ultra thick] (1.667,1.667) -- (1.667,8.3333);
           \draw[ultra thick] (1.667,1.667) -- (8.3333,1.667);
           \draw[ultra thick] (8.3333,1.667) -- (8.3333,8.3333);
           \draw[ultra thick] (1.667,8.3333) -- (8.3333,8.3333);

            \draw[dashed] (8.3333,1.667) -- (8.3333,0) node[anchor=north] {$u^k$};
                   \draw[dashed] (1.667,1.667) -- (1.667,0) node[anchor=north] {$1-u^k$};
                   \draw[dashed] (1.667,1.667) -- (0,1.667) node[anchor=east] {$1-u^k$};
                   \draw[dashed] (1.667,8.3333) -- (0,8.3333) node[anchor=east] {$u^k$};

        \end{tikzpicture}
}
   \caption{$u^k=\frac{5}{6}$}
    \label{fig:my_label}
     \end{subfigure}
     \caption{Visual examples why $B=B_1^k \cup B_2^k$}
        \label{fig:three graphs}
        
\end{figure}

\begin{lemma}\label{29juli1}
    Let $B^k=\{(c_l,c_r):  c_r \in [1-u^k,u^k] \ \text{and} \ c_l\in [1-u^k,c_r]\}$, $B_1^k=\{(c_l,c_r):  c_r \in [\frac{1}{2},u^k] \ \text{and} \ c_l\in [1-u^k,c_r]\}$ and $B_2^k=\{(c_l,c_r):  c_r \in [1-u^k,u^k] \ \text{and} \ c_l\in [1-u^k,\min(c_r,1-c_r)]\}$, where $u^k \in [0,\frac{1}{2}]$. Then $B^k=B_1^k \cup B_2^k$ 
\end{lemma}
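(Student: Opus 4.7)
The plan is to prove the set equality $B^k = B_1^k \cup B_2^k$ by verifying both inclusions directly, since all three sets are defined by simple linear inequalities on $(c_l,c_r)$. I will use throughout that $1-u^k \leq \tfrac{1}{2} \leq u^k$ (consistent with the figure, where $u^k = \tfrac{5}{6}$), so that $[\tfrac{1}{2}, u^k]$ is a nondegenerate subinterval of $[1-u^k, u^k]$.

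For the easier inclusion $B_1^k \cup B_2^k \subseteq B^k$, I would check each piece. If $(c_l,c_r) \in B_1^k$, then $c_r \in [\tfrac{1}{2}, u^k] \subseteq [1-u^k, u^k]$, while the constraint $c_l \in [1-u^k, c_r]$ matches exactly the constraint defining $B^k$, so $(c_l,c_r) \in B^k$. If $(c_l,c_r) \in B_2^k$, then the range $c_r \in [1-u^k,u^k]$ is identical in both sets, and since $\min(c_r,1-c_r) \leq c_r$, the interval $[1-u^k, \min(c_r,1-c_r)]$ is contained in $[1-u^k, c_r]$, again giving membership in $B^k$.

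For the reverse inclusion $B^k \subseteq B_1^k \cup B_2^k$, the natural move is a case split on the position of $c_r$ relative to $\tfrac{1}{2}$. Suppose $(c_l,c_r) \in B^k$, so $c_r \in [1-u^k,u^k]$ and $c_l \in [1-u^k, c_r]$. If $c_r \geq \tfrac{1}{2}$, then combined with $c_r \leq u^k$ we get $c_r \in [\tfrac{1}{2}, u^k]$, and the constraint on $c_l$ already matches $B_1^k$, so $(c_l,c_r) \in B_1^k$. If instead $c_r < \tfrac{1}{2}$, then $1-c_r > \tfrac{1}{2} > c_r$, so $\min(c_r,1-c_r) = c_r$; hence $c_l \in [1-u^k, c_r] = [1-u^k, \min(c_r,1-c_r)]$, placing $(c_l,c_r)$ in $B_2^k$.

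There is no real obstacle here: the lemma is a geometric bookkeeping statement decomposing the triangular belief region $B^k$ into an ``upper strip'' $B_1^k$ (where $c_r \geq \tfrac{1}{2}$, and $c_l$ is free up to the diagonal $c_l = c_r$) and a ``lower triangle'' $B_2^k$ (where $c_l$ is further cut off by the reflected diagonal $c_l = 1-c_r$), exactly as illustrated in Figure 8. The only care needed is to track which endpoint is tight on the $c_l$-interval in each subcase, which the split $c_r \gtrless \tfrac{1}{2}$ handles automatically.
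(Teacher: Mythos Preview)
Your proposal is correct and follows essentially the same approach as the paper: both prove the two inclusions directly, with the nontrivial direction $B^k \subseteq B_1^k \cup B_2^k$ handled by a case split on whether $c_r$ lies above or below $\tfrac{1}{2}$. Your version is even slightly more streamlined, since in the case $c_r \geq \tfrac{1}{2}$ you simply place the point in $B_1^k$, whereas the paper further subdivides according to whether $c_l \leq 1-c_r$ or $c_l \geq 1-c_r$ (sending the first subcase to $B_2^k$), which is unnecessary for the stated conclusion.
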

\begin{proof}
Let $(c_l,c_r) \in B_1^k \cup B_2^k$. We should then prove that $(c_l,c_r) \in B^k$. It is straightforward to see that $B_1^k \subseteq B^k$ and $B_2^k \subseteq B^k$. Hence, we have that  $(c_l,c_r) \in B$.
\newline
\newline
Next, let $(c_l,c_r) \in B^k$. We should then prove that $(c_l,c_r) \in B_1^k \cup B_2^k$. If $c_r \in [1-u^k,\frac{1}{2}]$, then $\min(c_r,1-c_r)=c_r$. Then we have that $(c_l,c_r) \in B_2^k$. Conversely, let $c_r \in [\frac{1}{2},u^k]$. Then $\min(c_r,1-c_r)=1-c_r$. Hence, if $c_l \in [1-c_r,c_r]$, then $(c_l,c_r) \in B_1^k$ and if $c_l \in [1-u^k,1-c_r]$, then $(c_l,c_r) \in B_2^k$.
\end{proof}

\begin{lemma}\label{29juli2}
Let $F(B_1^k)=\{(1-c_r,1-c_l) \ \text{for all} \ (c_l,c_r) \in B_1^k \}$. Then $F(B_1^k)=B_2^k$
\end{lemma}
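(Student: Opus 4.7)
My plan is to prove the equality by establishing both inclusions $F(B_1^k)\subseteq B_2^k$ and $B_2^k\subseteq F(B_1^k)$ via direct verification of the defining inequalities. The key observation is that the map $F\colon(c_l,c_r)\mapsto(1-c_r,1-c_l)$ is an involution (applying it twice returns the original pair), so each direction amounts to applying $F$ once and checking that the image satisfies the conditions of the target set. Geometrically, $F$ is reflection across the anti-diagonal line $c_l+c_r=1$, and $B_1^k$ and $B_2^k$ should be mirror images of each other under this reflection.

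For the forward inclusion $F(B_1^k)\subseteq B_2^k$, I would fix $(c_l,c_r)\in B_1^k$, so $c_r\in[\tfrac{1}{2},u^k]$ and $c_l\in[1-u^k,c_r]$, and put $(c_l',c_r'):=F(c_l,c_r)=(1-c_r,\,1-c_l)$. I then verify the two defining conditions of $B_2^k$. The first, $c_r'\in[1-u^k,u^k]$, follows from $c_l\in[1-u^k,u^k]$. For the second, $c_l'\in[1-u^k,\min(c_r',1-c_r')]$, the lower bound $c_l'\geq 1-u^k$ is equivalent to $c_r\leq u^k$, the inequality $c_l'\leq c_r'$ is equivalent to $c_l\leq c_r$, and the inequality $c_l'\leq 1-c_r'$ is equivalent to $c_l\geq 1-c_r$, which is precisely the anti-diagonal condition isolating the $B_1^k$ portion of $B^k$ in the decomposition of Lemma~\ref{29juli1}.

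For the reverse inclusion $B_2^k\subseteq F(B_1^k)$, I would take $(c_l',c_r')\in B_2^k$ and form $(c_l,c_r):=F(c_l',c_r')=(1-c_r',\,1-c_l')$; by the involution property $(c_l',c_r')=F(c_l,c_r)$, so it suffices to show $(c_l,c_r)\in B_1^k$. From $c_l'\leq\min(c_r',1-c_r')$ one reads off $c_l'\leq\tfrac{1}{2}$, and combined with $c_l'\geq 1-u^k$ this gives $c_r=1-c_l'\in[\tfrac{1}{2},u^k]$, the first condition of $B_1^k$. The second condition $c_l\in[1-u^k,c_r]$ splits into the left endpoint $c_l\geq 1-u^k$, which is $c_r'\leq u^k$, and the right endpoint $c_l\leq c_r$, which is $c_l'\leq c_r'$; both are built into membership in $B_2^k$.

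I do not anticipate a genuine technical obstacle. The entire argument is a careful bookkeeping exercise, using only the linear involution $F$ and the elementary equivalences above to translate each inequality defining one set into the corresponding inequality defining the other.
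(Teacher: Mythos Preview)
Your argument is correct and matches the paper's approach: both verify the equality by translating each defining inequality through the reflection $(c_l,c_r)\mapsto(1-c_r,1-c_l)$. Your explicit use of the involution for the reverse inclusion is slightly cleaner than the paper's direct reparametrization of $F(B_1^k)$, but the mathematical content is identical.

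One caveat worth noting: both you and the paper's proof rely on the inequality $c_l\geq 1-c_r$ for $(c_l,c_r)\in B_1^k$, yet the definition stated in Lemma~\ref{29juli1} literally reads $c_l\in[1-u^k,c_r]$, and under that reading the forward inclusion actually fails (take $u^k=1$, $(c_l,c_r)=(0.1,0.6)\in B_1^k$, but $F(0.1,0.6)=(0.4,0.9)\notin B_2^k$ since $0.4>\min(0.9,0.1)$). The intended lower bound is evidently $1-c_r$, consistent with the figure and with the very first line of the paper's own proof of this lemma, which silently replaces $1-u^k$ by $1-c_r$. Your phrase ``the anti-diagonal condition isolating the $B_1^k$ portion'' correctly names this intended reading, so the slip lies in the paper's typeset definition rather than in your reasoning.
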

\begin{proof}
As  $B_1^k=\{(c_l,c_r):  c_r \in [\frac{1}{2},u^k] \ \text{and} \ c_l\in [1-u^k,c_r]\}$, we have that $\frac{1}{2} \leq c_r \leq u^k$ and $1-c_r \leq c_l \leq c_r$. This can be rewritten to $1-u^k \leq 1-c_r \leq \frac{1}{2}$ and $1-c_r \leq 1-c_l \leq c_r$. Let $c_l^*=1-c_r$ and $c_r^*=1-c_l$. Then we can rewrite the previous two inequalities to $1-u^k \leq c_l^* \leq \frac{1}{2}$ and $c_l^* \leq c_r^* \leq 1-c_l^*$. Hence, we have that $F(B_1^k)=\{(c_l,c_r):  c_l \in [1-u^k,\frac{1}{2}] \ \text{and} \ c_r \in [c_l,1-c_l] \}$. 
\newline
\newline
The range of possible $c_r$'s is largest when $c_l=1-u^k$, which leads to $c_r \in [1-u^k,u^k]$. Furthermore, from $F(B_1^k)$ we observe that $c_l \geq 1-u^k$ and we note that $c_l \leq c_r \leq 1-c_l$ implies $c_l \leq c_r$ and $c_l \leq 1-c_r$. Hence, we can rewrite $F(B_1^k)$ to $F(B_1^k)=\{(c_l,c_r):  c_r \in [1-u^k,u^k] \ \text{and} \ c_l\in [1-u^k,\min(c_r,1-c_r)]\}=B_2^k$
\end{proof}
\begin{lemma}\label{29juli3}
Let $P_l(k)=P_r(k)=[1-u^k,u^k]$. Let $C_1$ denote the set of choices that is optimal in $P_i^k$ for a belief in $B_1^k$. Let $C_1'=\{1-c_i \ \text{for all} \ c_i \in C_1\}$. Then $P_i(k+1)= C_1 \cup C_1'$
\end{lemma}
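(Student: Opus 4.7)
The plan is to observe that, by the symmetry pointed out at the start of Section 5 (since $a_l = a_r = a$, a choice is optimal for belief $(c_l,c_r)$ iff it is optimal for $(c_r,c_l)$), the set $P_i(k+1)$ is exactly the set of choices in $P_i(k)$ that are optimal in $P_i(k)$ for some belief in $B^k$, where $B^k = \{(c_l,c_r)\in P_l(k)\times P_r(k): c_l\le c_r\}$. Applying Lemma \ref{29juli1}, $B^k = B_1^k \cup B_2^k$, so if $C_2$ denotes the set of choices in $P_i(k)$ that are optimal for some belief in $B_2^k$, then $P_i(k+1) = C_1 \cup C_2$. The remaining task is to prove $C_2 = C_1'$.

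The key step will be a reflection identity: for any opponent positions $c_l\le c_r$ and any location $c_i$ of firm $i$, the market share of firm $i$ at $c_i$ against belief $(c_l,c_r)$ equals its market share at $1-c_i$ against belief $(1-c_r,1-c_l)$. This follows because the system (\ref{BIGN}) is invariant under the reflection $x\mapsto 1-x$ (the consumer distribution is uniform on $[0,1]$, hence symmetric about $\tfrac12$) and because the opponents share the common inefficiency coefficient $a$, so relabelling left and right does not alter firm $i$'s share. Concretely, if $(x_1,x_2)$ solves (\ref{BIGN}) for the configuration with firm $i$ at $c_i$ and opponents at $(c_l,c_r)$, then $(1-x_2,1-x_1)$ solves (\ref{BIGN}) for the reflected configuration with firm $i$ at $1-c_i$ and opponents at $(1-c_r,1-c_l)$, and firm $i$'s market share is preserved in all three regions described in Section 5.

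Combining this identity with the fact that $P_i(k) = [1-u^k, u^k]$ is itself symmetric about $\tfrac12$, we obtain the conclusion that a choice $c_i$ is optimal in $P_i(k)$ for the belief $(c_l,c_r)$ if and only if $1-c_i$ is optimal in $P_i(k)$ for the reflected belief $(1-c_r,1-c_l)$. By Lemma \ref{29juli2}, as $(c_l,c_r)$ ranges over $B_1^k$, the reflected belief $(1-c_r,1-c_l)$ ranges exactly over $B_2^k$. Therefore
\[
C_2 \;=\; \{\,1-c_i : c_i\in C_1\,\} \;=\; C_1',
\]
and hence $P_i(k+1) = C_1 \cup C_2 = C_1 \cup C_1'$.

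The main obstacle is establishing the reflection identity for market shares directly from (\ref{BIGN}) in a way that covers all four regions of firm $i$'s reaction correspondence uniformly; once that invariance is in hand, the rest of the argument is a clean bookkeeping exercise using Lemmas \ref{29juli1} and \ref{29juli2} together with the symmetry of $P_i(k)$ about $\tfrac12$.
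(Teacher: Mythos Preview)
Your proof is correct and follows essentially the same approach as the paper: decompose $B^k = B_1^k \cup B_2^k$ via Lemma~\ref{29juli1}, then use the reflection symmetry (combined with Lemma~\ref{29juli2} identifying $B_2^k$ as the reflected image of $B_1^k$) to conclude that the choices optimal for beliefs in $B_2^k$ are precisely $C_1'$. Your write-up is somewhat more explicit than the paper's in spelling out why the reflection identity for market shares holds from the defining system~(\ref{BIGN}), but the logical structure is the same.
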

\begin{proof}
We have that the choices in $P_i(k+1)$ should be optimal in $P_i(k)$ for a belief in $B^k$. Lemma \ref{29juli1} implies that the choices in $P_i(k+1)$ should be optimal in $P_i(k)$ for a belief in $B_1^k \cup B_2^k$. Lemma \ref{29juli2} then implies that the choices in $P_i(k+1)$ should be optimal in $P_i(k)$ for a belief in $B_1^k \cup F(B_1^k)$.  
\newline
\newline
By symmetry, a choice $c_i$ is optimal in $P_i(k)$ for some $(c_l,c_r)$ if and only if the choice $1-c_i$ is optimal in $P_i(k)$ for $(1-c_r,1-c_l)$. Hence, the choices that are optimal for a belief in $F(B_1^k)$ are exactly the choices in $C_1'$.
\end{proof}

\begin{lemma}\label{kohlberg}
 Consider the belief $(c_l,c_r)$ with $c_l \leq c_r$. Then the optimal choice $c_i$ of firm $i$ in $[0,c_l]$ given $(c_l,c_r)$ satisfies $c_i=\min [x_1,c_l]$. 
\end{lemma}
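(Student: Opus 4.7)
The plan is to extend the Kohlberg-style argument given as the intuition for Lemma \ref{marc1} to the three-firm setting. The claim is equivalent to saying that, with $c_l$ and $c_r$ held fixed, firm $i$'s market share $s_i(c_i)=x_1(c_i)$ is unimodal on $[0,c_l]$: strictly increasing when $c_i<x_1(c_i)$ and strictly decreasing when $c_i>x_1(c_i)$. Together with the continuity of $x_1$ in $c_i$ established by \citeA{kohlberg1983equilibrium}, this implies that the maximum on $[0,c_l]$ is attained either at the interior fixed point $c_i=x_1(c_i)$, when it lies in $[0,c_l]$, or at the right endpoint $c_i=c_l$, when $x_1(c_l)>c_l$. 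Both cases are captured by $c_i=\min[x_1,c_l]$.

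To establish the key monotonicity step, I would fix $c_i<c_l$ with $c_i<x_1$, perturb to $c_i'=c_i+\varepsilon\le c_l$ for small $\varepsilon>0$, and assume toward a contradiction that $x_1'\le x_1$. The first half of the argument is a direct generalization of the two-firm case: examining the consumer at $x_1'$, who in the new configuration is indifferent between firm $i$ and firm $l$ but in the old configuration weakly preferred firm $i$, I would observe that firm $i$'s cost for him strictly decreased (firm $i$ is closer and has weakly smaller market share), so for the new indifference firm $l$'s cost must strictly decrease as well. Since firm $l$'s position is unchanged, this forces $x_2'-x_1'<x_2-x_1$, i.e., firm $l$'s market share strictly shrank.

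The second half uses the consumer at $x_2'$ to close the contradiction. Combining $x_1'\le x_1$ with $x_2'-x_1'<x_2-x_1$ gives $x_2'<x_2$, so in the old configuration this consumer weakly preferred firm $l$. In the new configuration firm $l$'s market share strictly decreased (lowering his cost for firm $l$) while firm $r$'s market share strictly increased (raising his cost for firm $r$), with both positions unchanged, so his preference for firm $l$ strictly intensified, contradicting the new indifference between firm $l$ and firm $r$. The symmetric argument, perturbing $c_i$ slightly to the left when $c_i>x_1$, handles the other monotonicity claim, and the boundary case $c_i=c_l$ is automatically absorbed by the minimum in the conclusion.

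The main obstacle is this two-layer propagation. Unlike Lemma \ref{marc1}, where only $x_1$ moves, perturbing $c_i$ here shifts both $x_1$ and $x_2$, so the contradiction cannot be derived purely at $x_1'$; one must combine the $x_1'$-indifference (to pin down how $x_2'-x_1'$ moves) with the $x_2'$-indifference (to extract the actual contradiction). A secondary technical point is to ensure that $c_i'\le x_1'$ under the perturbation so that the absolute value $|c_i'-x_1'|$ keeps its sign; this is justified by the continuity of $x_1(\cdot)$ and the strictness of $c_i<x_1$, which both survive for small enough $\varepsilon$.
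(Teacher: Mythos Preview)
The paper's own ``proof'' of this lemma is a single sentence: it records that \citeA{kohlberg1983equilibrium} already proved Lemma~\ref{marc1} for three firms, and gives no further argument. Your proposal therefore does not compete with a proof in the paper; it supplies one where the paper defers to the literature. The two-layer contradiction you describe---using the $x_1'$-indifference to force $x_2'-x_1'<x_2-x_1$, then the $x_2'$-indifference to force $1-x_2'<1-x_2$ and hence $x_2'>x_2$, contradicting $x_1'\le x_1$---is precisely the correct extension of the two-firm Kohlberg heuristic (Figure~1) to the three-firm setting, and the chain of inequalities goes through as you outline. The local-to-global step is also fine: by compactness the maximum of $x_1(\cdot)$ on $[0,c_l]$ is attained, and your local monotonicity rules out every candidate except a fixed point $c_i=x_1(c_i)$ or the right endpoint $c_l$ with $c_l<x_1(c_l)$, which is exactly $c_i=\min[x_1,c_l]$.
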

\begin{proof}
\citeA{kohlberg1983equilibrium} proved Lemma \ref{marc1} for three firms.    
\end{proof}
Lemma \ref{x2cl},\ref{firm2marketshare} and \ref{leftmorethanright} are used to prove Lemma \ref{LEFTGOAT}. This Lemma shows that whenever $c_l \leq c_r$ and $c_l > 1-c_r$, then the optimal choice in $[0,c_l]$ yields a strictly greater utility than the optimal choice in $[c_r,1]$. 

\begin{lemma}\label{x2cl}
Consider $(c_l,c_r)$ and let $c_l \geq 1-c_r$ and let $x_1(c_l)\geq c_l$. Then $x_2(c_l)< c_r$.
\end{lemma}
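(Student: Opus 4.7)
The plan is to exploit the coincidence $c_i=c_l$ to reduce the system (\ref{BIGN}) to a single equation in $x_1$, and then to compare $x_2=2x_1$ with $c_r$ using the two hypotheses $x_1\geq c_l$ and $c_l\geq 1-c_r$.

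First, I would plug $c_i=c_l$ into the first equation of (\ref{BIGN}): since both inefficiency coefficients equal $a$, the equation reads
\[
|c_l - x_1| + a x_1 = |c_l - x_1| + a (x_2 - x_1),
\]
which gives $x_2 = 2 x_1$; the two coincident firms split the joint interval $[0,x_2]$ evenly. The hypothesis $x_1 \geq c_l$ then yields $x_2 \geq 2c_l \geq c_l$, so $|c_l-x_2|=x_2-c_l$. I would work under the case $2x_1\leq c_r$, which is exactly what we want to prove: by the uniqueness of solutions to (\ref{BIGN}) due to \citeA{kohlberg1983equilibrium}, if a consistent solution is found in this case it is automatically the unique solution, and no separate analysis of $2x_1>c_r$ is needed. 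The second equation of (\ref{BIGN}) then collapses to $(2x_1-c_l) + a x_1 = (c_r-2x_1)+a(1-2x_1)$, which solves to
\[
x_1 = \frac{c_l + c_r + a}{4 + 3 a}.
\]

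Next, I would substitute this closed form back into the hypothesis $x_1 \geq c_l$ to obtain $c_r + a \geq 3(1+a)c_l$, i.e. $c_l \leq (c_r+a)/(3(1+a))$. Since $c_r\leq 1$, this in turn forces $c_l \leq 1/3$; the hypothesis $x_1\geq c_l$ is thus quite restrictive on $c_l$.

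Finally, the target inequality $x_2<c_r$, i.e. $2x_1<c_r$, simplifies after clearing denominators to $2c_l+2a<c_r(2+3a)$. Using the hypothesis $c_r\geq 1-c_l$, it suffices to verify $2c_l+2a<(1-c_l)(2+3a)$, which reduces to $c_l<(2+a)/(4+3a)$. Since $(2+a)/(4+3a)>1/3$ (amounting to $6>4$) and $c_l\leq 1/3$ from the previous paragraph, this strict inequality holds, giving $x_2<c_r$. The only real friction in the proof is bookkeeping the absolute value $|c_r-x_2|$ correctly and appealing to Kohlberg's uniqueness result to sidestep the parallel case $2x_1>c_r$; the rest is elementary algebra leveraging the symmetry $x_2=2x_1$.
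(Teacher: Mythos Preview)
There is a genuine circularity in your argument. You derive the closed form $x_1=(c_l+c_r+a)/(4+3a)$ under the working assumption $2x_1\leq c_r$, and then substitute this formula into the hypothesis $x_1(c_l)\geq c_l$ to extract $c_l\leq 1/3$, which you in turn use to verify that very assumption. But the hypothesis $x_1(c_l)\geq c_l$ is a statement about the \emph{actual} unique solution, and you are only entitled to identify that solution with your closed form \emph{after} the regime $2x_1\leq c_r$ has been confirmed. If the true solution happened to lie in the regime $x_2>c_r$, the hypothesis would constrain the formula $x_1=\tfrac{1}{3}-\tfrac{c_r-c_l}{3a}$ arising there, not yours, and the bound $c_l\leq 1/3$ would not follow from your expression. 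Kohlberg's uniqueness does not let you skip the second regime; it only guarantees that a candidate whose consistency has been \emph{independently} verified is the solution, and your consistency check is precisely where you smuggle in the unproved regime.

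The paper's proof sidesteps this by arguing directly by contradiction in the regime you want to exclude: assuming $x_2\geq c_r$, one solves to get $x_1=\tfrac{1}{3}-\tfrac{c_r-c_l}{3a}$, so the hypothesis gives $c_l\leq x_1<\tfrac{1}{3}$, which together with $c_l\geq 1-c_r$ forces $c_r>\tfrac{2}{3}$; then $x_2=2x_1<\tfrac{2}{3}<c_r$, contradicting $x_2\geq c_r$. Your algebra is essentially the mirror image of this computation, so the repair is cheap: run the argument in the opposite regime and obtain a contradiction there, rather than trying to exhibit a consistent candidate in the target regime.
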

\begin{proof}
As $c_i=c_l$, we have that $ax_1(c_l)=a(x_2(c_l)-x_1(c_l))$, which implies that $x_2(c_l)=2x_1(c_l)$. By contradiction, assume that $x_2(c_l) \geq c_r$. The consumer at $x_2(c_l)$ is indifferent between firm $i$ and firm $r$, which implies that $x_2(c_l)-c_i+a(x_2(c_l)-x_1(c_l))=x_2(c_l)-c_r+a(1-x_2(c_l))$. As $x_2(c_l)=2x_1(c_l)$, this can be rewritten to 
$3ax_1=a-(c_r-c_l)$. This leads to values of $x_1(c_l)$ and $x_2(c_l)$ of 
$$x_1(c_l)=\frac{1}{3}-\frac{c_r-c_l}{3a}$$
and
$$x_2(c_l)=\frac{2}{3}-\frac{2(c_r-c_l)}{3a}.$$
This implies that $x_1(c_l)<\frac{1}{3}$. Hence, we have that $c_l \leq x_1(c_l)< \frac{1}{3}$. As $c_l\geq 1-c_r$, this implies that $c_r>\frac{2}{3}$. But then we have that $x_2(c_l)=\frac{2}{3}-\frac{2(c_r-c_l)}{3a} < \frac{2}{3} < c_r$, which is a contradiction to $x_2(c_l) \geq c_r$.  
\end{proof}

\begin{lemma}\label{firm2marketshare}
Consider firm $i$. Let $c_l \leq c_r$ and $c_l > 1-c_r$. Consider firm $i$'s market share $x_1$ when he chooses his optimal $c_i$ in $[0,c_l]$, given his belief $(c_l,c_r)$. Next consider the belief $(1-c_r,1-c_l)$, for which $c_i'$ is optimal in $[0,1-c_r]$ with a corresponding market share of $x_1'$. If  $x_1' \geq x_1$, then $x_2'-x_1' \geq x_2-x_1$. 
\end{lemma}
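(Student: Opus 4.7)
The plan is to proceed by case analysis based on Lemma \ref{kohlberg}, which implies that the optimal $c_i \in [0, c_l]$ equals $\min(x_1, c_l)$: either $c_i$ is \emph{interior} ($c_i = x_1 < c_l$) or at the \emph{boundary} ($c_i = c_l$); the same dichotomy applies to $c_i' \in [0, 1-c_r]$ with $1-c_r$ playing the role of $c_l$. First I would observe that in the boundary sub-case the first equation of (\ref{BIGN}) at $x_1$ collapses to $a x_1 = a(x_2 - x_1)$, so that $x_2 - x_1 = x_1$; analogously $x_2' - x_1' = x_1'$ when B is at the boundary. Second, in the interior sub-case of A, solving (\ref{BIGN}) gives $x_1 = \frac{(2+3a) c_l + a c_r + a^2}{3a^2 + 6a + 2}$, and a short calculation shows that the interior condition $x_1 \leq c_l$ is equivalent to $x_2 - x_1 \leq x_1$; the analogous equivalence holds in the interior sub-case of B.

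With these preparations, two of the four resulting sub-cases are immediate. In (both boundary), the hypothesis yields $x_2' - x_1' = x_1' \geq x_1 = x_2 - x_1$ at once; in (A interior, B boundary), chaining gives $x_2 - x_1 \leq x_1 \leq x_1' = x_2' - x_1'$. For the other two sub-cases I would show that they are vacuous under $c_l + c_r > 1$. For (both interior), subtracting the two explicit interior formulas yields
\[
x_1' - x_1 = \frac{2(1+2a)\bigl(1 - (c_l + c_r)\bigr)}{3a^2 + 6a + 2},
\]
which is strictly negative, contradicting $x_1' \geq x_1$. For (A boundary, B interior), combining the A-boundary inequality $c_l < \frac{c_r + a}{3(a+1)}$ with $c_l > 1 - c_r$ forces $c_r > \frac{2a+3}{3a+4}$, while combining A-boundary with the B-interior condition $c_l \geq (a+1)(3c_r - 2)$ forces $c_r < \frac{6(a+1)^2 + a}{9(a+1)^2 - 1}$.

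The main obstacle will be verifying that these two thresholds on $c_r$ actually coincide, which makes the feasible region for (A boundary, B interior) empty. This reduces to the polynomial identity $(2a+3)\bigl(9(a+1)^2 - 1\bigr) = (3a+4)\bigl(6(a+1)^2 + a\bigr)$, with both sides expanding to $18 a^3 + 63 a^2 + 70 a + 24$. The bookkeeping is routine but must be done carefully, since it is precisely this algebraic coincidence that closes off the final sub-case and completes the argument.
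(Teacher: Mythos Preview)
Your case split mirrors the paper's, and your handling of the two ``easy'' sub-cases (both boundary; A interior, B boundary) is correct and essentially identical to the paper's Cases~4 and~2. The problem is the \emph{both interior} sub-case.

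Your explicit formula $x_1 = \frac{(2+3a)c_l + a c_r + a^2}{3a^2+6a+2}$ comes from solving the full system~(\ref{BIGN}), and this requires knowing the sign of $c_l - x_2$ and $c_r - x_2$ in the second equation. The formula you wrote is the one for $c_l \le x_2 \le c_r$ (region~2 of the reaction correspondence). But the interior condition $c_i = x_1 < c_l$ does \emph{not} force $x_2$ into that range: regions~1 ($x_2 \le c_l$) and~3 ($x_2 \ge c_r$) of Figure~\ref{a=1} are also interior and give different closed forms, namely $x_1 = \frac{a+c_l+c_r}{2+3a}$ and $x_1 = \frac{a+3c_l-c_r}{2+3a}$. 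Since the same trichotomy applies independently to the primed scenario, your subtraction is only one of nine sub-sub-cases. (A quick numerical check: $a=1$, $c_l=0.5$, $c_r=0.6$ has both A and B interior but with $x_2>c_r$ and $x_2'>1-c_l$; the actual $x_1'-x_1=-0.04$ differs from your formula's $-6/110$.) So the vacuousness argument, as written, does not close.

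The paper sidesteps this entirely: its Case~1 never solves for $x_1$ explicitly. It uses only the \emph{first} indifference equation at $x_1$, which is insensitive to the location of $x_2$, together with the hypothesis $x_1' \ge x_1$, to chain
\[
a(x_2'-x_1') \;\ge\; a x_1' - \bigl((1-c_r)-x_1'\bigr) \;>\; a x_1 - (c_l - x_1) \;=\; a(x_2-x_1),
\]
the strict middle inequality coming from $x_1'\ge x_1$ and $c_l > 1-c_r$. This gives the conclusion directly, without needing vacuousness.

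A smaller remark: your treatment of (A boundary, B interior) via the polynomial identity is correct---it is essentially the content of Lemma~\ref{nodig}---but the paper disposes of this sub-case in two lines using the hypothesis: A boundary gives $x_1 \ge c_l$, B interior gives $x_1' < 1-c_r$, hence $x_1' < 1-c_r < c_l \le x_1$, contradicting $x_1' \ge x_1$.
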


\begin{proof}
From Lemma \ref{kohlberg}, we have that $c_i$ satisfies $c_i=\min[x_1,c_l]$ and $c_i'$ satisfies $c_i'=\min[x_1',1-c_r]$. Hence, we have four possible cases to consider, which we will analyze separately. 
\newline
\newline
Case 1: $c_i=x_1$ and $c_i'=x_1'$. For the belief $(c_l,c_r)$, the consumer at $x_1$ is indifferent between buying from firm $i$ and $l$, which implies that $ax_1=c_l-x_1+a(x_2-x_1)$, which can be rewritten to 
\begin{equation}\label{case1x1indifferent}
a(x_2-x_1)=ax_1-(c_l-x_1)    
\end{equation}
For the belief $(1-c_r,1-c_l)$, the consumer at $x_1$ should have a greater cost of buying from firm $l$, as $x_1' \geq x_1$ by assumption. This implies that $1-c_r - x_1+a(x_2'-x_1') \geq x_1'-x_1+ax_1'$, which can be rewritten to 
\begin{equation}\label{case1x1'indifferent}
a(x_2'-x_1') \geq ax_1'-(1-c_r-x_1').   
\end{equation}
But then we have 
$$a(x_2'-x_1') \underset{(\ref{case1x1'indifferent})}{\geq} ax_1'-(1-c_r-x_1') > ax_1-(c_l-x_1) \underset{(\ref{case1x1indifferent})}{=}a(x_2-x_1).$$
As already written down below the inequality/equality symbols above, the first inequality holds because of equation (\ref{case1x1indifferent}) and the third inequality holds because of inequality (\ref{case1x1indifferent}). The second inequality follows from $x_1'\geq x_1$ and $c_l > 1-c_r$. Hence, $x_2'-x_1'>x_2-x_1$.
\newline
\newline
Case 2: $c_i=x_1$ and $c_i'=1-c_r$. For the belief $(c_l,c_r)$, the consumer $x_1$ is indifferent between buying from firm $i$ and $l$ and hence $ax_1=c_l-x_1+a(x_2-x_1)$, which can be rewritten to equation (\ref{case1x1indifferent}). For the belief $(1-c_r,1-c_l)$, firm $i$ and $l$ choose the same location, leading to the firms having the same market shares, which implies that 
\begin{equation}\label{case2x1'indifferent}
a(x_2'-x_1')=ax_1'.   
\end{equation}
Then we have 
$$a(x_2'-x_1')\underset{(\ref{case2x1'indifferent})}{=}ax_1'\geq ax_1-(c_l-x_1)\underset{(\ref{case1x1indifferent})}{=}a(x_2-x_1),$$
and hence $x_2'-x_1'\geq x_2-x_1$.
\newline
\newline
Case 3: $c_i=c_l$ and $c_i'=x_1'$. But then $x_1 \geq c_l > 1-c_r \geq x_1'$ and by assumption $x_1' \geq x_1$. So this case cannot happen.
\newline
\newline
Case 4: $c_i=c_l$ and $c_i'=1-c_r$. For the belief $(c_l,c_r)$, both firm $i$ and $l$ choose the same location, and hence
\begin{equation}\label{case4x1indifferent}
ax_1=a(x_2-x_1).   
\end{equation}
For the belief $(1-c_r,1-c_l)$, firm $i$ and $l$ also choose the same location, which implies that
\begin{equation}\label{case4x1'indifferent}
a(x_2'-x_1')=ax_1'.   
\end{equation}
Then we have 
$$a(x_2'-x_1')\underset{(\ref{case4x1'indifferent})}{=}ax_1'\geq ax_1\underset{(\ref{case4x1indifferent})}{=}a(x_2-x_1),$$
which implies that $x_2'-x_1' \geq x_2-x_1$.
\end{proof}

\begin{lemma}\label{leftmorethanright}
 Let $c_l \leq c_r$ and $c_l > 1-c_r$. Consider firm $i$ with belief $(c_l,c_r)$ and let $c_i$ be the optimal choice in $[0,c_l]$, leading to indifferent consumers $x_1$ and $x_2$ . Also, consider the belief $(1-c_r,1-c_l)$ and let $c_i'$ be the optimal choice in $[0,1-c_r]$. Then $s_i(c_i,c_l,c_r)=x_1 > x_1'=s_i(c_i',1-c_r,1-c_l)$
\end{lemma}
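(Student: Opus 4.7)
The plan is to prove this by contradiction: assume $x_1' \geq x_1$, and pit Lemma~\ref{firm2marketshare} against the hypothesis $c_l+c_r>1$ via the second indifference condition at the right boundary.

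First I would write out the indifference equation that determines $x_2$ in each scenario. In the first scenario the consumer at $x_2\in[c_l,c_r]$ is indifferent between the middle firm at $c_l$ and the right firm at $c_r$:
\begin{equation*}
(x_2 - c_l) + a(x_2 - x_1) \;=\; (c_r - x_2) + a(1 - x_2),
\end{equation*}
which rearranges to $2(1+a)\,x_2 = a\,x_1 + c_l + c_r + a$. Replacing $c_l$ by $1-c_r$ and $c_r$ by $1-c_l$ (the middle and right firm positions in the mirrored belief $(1-c_r,1-c_l)$) gives the analogous second-scenario relation $2(1+a)\,x_2' = a\,x_1' + (2 - c_l - c_r) + a$. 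Subtracting yields the key identity
\begin{equation*}
2(1+a)\,(x_2' - x_2) \;=\; a\,(x_1' - x_1) \;+\; 2 - 2(c_l + c_r),
\end{equation*}
and the hypothesis $c_l+c_r>1$ turns the last term into a strictly negative constant, so $2(1+a)(x_2'-x_2) < a(x_1'-x_1)$.

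Next I would apply Lemma~\ref{firm2marketshare} under the assumption $x_1'\ge x_1$, which yields $x_2'-x_1'\ge x_2-x_1$, hence $x_2'-x_2\ge x_1'-x_1\ge 0$. Since $2(1+a)>a$, multiplying gives $2(1+a)(x_2'-x_2)\ge a(x_1'-x_1)$, with strict inequality whenever $x_1'>x_1$. In the case $x_1'>x_1$ this already clashes with the key identity. In the boundary case $x_1'=x_1$ the identity reduces to $2(1+a)(x_2'-x_2) = 2-2(c_l+c_r) < 0$, forcing $x_2'<x_2$, while Lemma~\ref{firm2marketshare} simultaneously requires $x_2'\ge x_2$; a contradiction either way. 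Hence $x_1>x_1'$.

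The main obstacle I expect is verifying that the second indifference equation takes the form I used, namely $c_l\le x_2\le c_r$ and $1-c_r\le x_2'\le 1-c_l$, so the absolute values collapse as written. For Case A2 of Lemma~\ref{kohlberg} ($c_i=c_l$ with $x_1\ge c_l$), Lemma~\ref{x2cl} delivers $x_2<c_r$ directly, and $x_2=2x_1\ge c_l$ is immediate; Case A1 ($c_i=x_1\le c_l$) admits a short direct check from the first indifference equation. In any residual sub-regime where $x_2$ crosses $c_l$ or $c_r$, the first indifference condition alone, which reads $(1+2a)(x_1'-x_1) = a(x_2'-x_2) + 1 - (c_l+c_r)$ regardless of that sign, furnishes the same style of contradiction when combined with Lemma~\ref{firm2marketshare}.
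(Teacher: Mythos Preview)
Your contradiction strategy via Lemma~\ref{firm2marketshare} matches the paper's approach, and your subtraction argument in the ``middle regime'' $c_l\le x_2\le c_r$, $1-c_r\le x_2'\le 1-c_l$ is correct and in fact cleaner than the paper's Case~3. The gap is in the other regimes, which you dismiss too quickly. The position of $x_2$ relative to $c_l$ and $c_r$ is \emph{not} determined by whether $c_i=x_1$ or $c_i=c_l$: the paper's proof explicitly treats $x_2\ge c_r$ (Case~1) and $x_2\le c_l$ (Case~2) as genuine possibilities, and in each of them the second indifference equation takes a different form, so your key identity $2(1+a)(x_2'-x_2)=a(x_1'-x_1)+2-2(c_l+c_r)$ fails there. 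You also never verify the location of $x_2'$ relative to $1-c_r$ and $1-c_l$, which is an independent source of sign changes.

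Your fallback claim---that the first indifference condition alone, $(1+2a)(x_1'-x_1)=a(x_2'-x_2)+1-(c_l+c_r)$, yields a contradiction with Lemma~\ref{firm2marketshare}---is not correct. From $x_2'-x_2\ge x_1'-x_1\ge 0$ that identity only gives $(1+a)(x_1'-x_1)\ge 1-(c_l+c_r)$, which is vacuous since the right side is negative. Moreover, that form of the first indifference condition holds only when both $c_i=x_1$ and $c_i'=x_1'$; in the mixed cases it changes too (this is precisely why Lemma~\ref{firm2marketshare} itself needed four cases). The paper closes the argument by re-deriving the second indifference inequality in each sign regime for $x_2$ (and further splitting Case~3 according to whether $x_2\gtrless 1-c_l$), and in Cases~1 and~2, where only a weak inequality emerges, it rules out the equality sub-case by going back to the first indifference condition. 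You will need a comparable case analysis; the ``short direct check'' you allude to does not exist.
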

\begin{proof}
By contradiction, assume that $x_1' \geq x_1$. Lemma \ref{firm2marketshare} then implies that $x_2'-x_1' \geq x_2-x_1$. Because the market shares of the firms add up to one, it must be that $1-x_2' \leq 1-x_2$, which implies that $x_2' \geq x_2$. We will now show that $1-x_2'> 1-x_2$, leading to a contradiction. We distinguish several cases, depending on where $x_2$ is positioned on the line.
\newline
\newline
Case 1: $x_2 \geq c_r$. Consider the belief $(c_l,c_r)$. The consumer at $x_2$ is indifferent between firm $l$ and $r$, which implies that $x_2-c_l+a(x_2-x_1)=x_2-c_r+a(1-x_2)$, which can be rewritten to
\begin{equation}\label{case1x2indifferent}
a(1-x_2)=c_r-c_l+a(x_2-x_1).   
\end{equation}
Now consider the belief $(1-c_r,1-c_l)$ and consider again the consumer at $x_2$. As $x_2'\geq x_2$, this consumer should have a higher cost from buying from firm $r$ than from firm $l$, which implies that $x_2-(1-c_l)+ a(1-x_2') \geq x_2-(1-c_r)+a(x_2'-x_1')$, which can be rewritten to
\begin{equation}\label{case1x2'indifferent}
a(1-x_2') \geq c_r-c_l+a(x_2'-x_1').   
\end{equation}
Hence, we obtain
$$a(1-x_2') \underset{(\ref{case1x2'indifferent})}{\geq} c_r-c_l+a(x_2'-x_1') \geq c_r-c_l+a(x_2-x_1) \underset{(\ref{case1x2indifferent})}{=}a(1-x_2)$$
As we previously determined that $1-x_2' \leq 1-x_2$, it then must be that $1-x_2'=1-x_2$, which implies that $x_2'=x_2$. This in turn implies that $x_1'=x_1$. If $x_1 \geq c_l$, then Lemma \ref{kohlberg} implies that the optimal choice of firm $i$ in $[0,c_l]$ is $c_i=c_l$. Lemma \ref{x2cl} then implies that $x_2 < c_r$, which would be a contradiction. Hence, we assume that $x_1 < c_l$. Lemma \ref{kohlberg} then implies that the optimal choice of firm $i$ in $[0,c_l]$ is $c_i=x_1 < c_l$. Consider the consumer at $x_1$, who is indifferent between firm $i$ and firm $l$, which implies that 
\begin{equation}\label{case13}
ax_1=c_l-x_1+a(x_2-x_1).   
\end{equation}
Lemma \ref{kohlberg} implies that for the belief $(1-c_r,1-c_l)$, the optimal choice $c_i'$ in $[0,1-c_r]$ satisfies $c_i'=\min (x_1',1-c_r)=\min (x_1,1-c_r)$. Let $x_1 < 1-c_r$. Then $c_i'=x_1$. The consumer at $x_1'$ is indifferent between firm $i$ and $l$, which implies that $ax_1'=1-c_r-x_1'+a(x_2'-x_1')$. Filling in that $x_1'=x_1$ and $x_2'=x_2$ leads to
\begin{equation}\label{case14}
ax_1=1-c_r-x_1+a(x_2-x_1).   
\end{equation}
However, then we have that 
$$ax_1\underset{(\ref{case14})}{=}1-c_r-x_1+a(x_2-x_1)< c_l-x_1+a(x_2-x_1)\underset{(\ref{case13})}{=}ax_1,$$
which is a contradiction. Finally, let $x_1 \geq 1-c_r$. Then $c_i'=1-c_r$, which implies that $ax_1'=a(x_2'-x_1')$, which in turn implies that $ax_1=a(x_2-x_1).$
However, then we have that 
$$ax_1=a(x_2-x_1) < c_l-x_1+a(x_2-x_1)\underset{(\ref{case13})}{=}ax_1,$$
which is a contradiction.
\newline
\newline
Case 2: $x_2 \leq c_l$. Consider the belief $(c_l,c_r)$. The consumer at $x_2$ is indifferent between firm $l$ and $r$, which implies that $c_l-x_2+a(x_2-x_1)=c_r-x_2+a(1-x_2)$, which can be rewritten to
\begin{equation}\label{case2x2indifferent}
a(1-x_2)=a(x_2-x_1)-(c_r-c_l).   
\end{equation}
Now consider the belief $(1-c_r,1-c_l)$ and consider again the consumer at $x_2$. As $x_2' \geq x_2$, this consumer should have a higher cost from buying from firm $r$ than from firm $l$, which implies that $\lvert 1-c_l-x_2 \rvert+a(1-x_2')\geq \lvert 1-c_r-x_2 \rvert+a(x_2'-x_1')$, which can be rewritten to 
\begin{equation}\label{case2x2'indifferent}
a(1-x_2')\geq a(x_2'-x_1') - (\lvert 1-c_l-x_2 \rvert -\lvert 1-c_r-x_2 \rvert ).   
\end{equation}
Hence, we obtain
$$a(1-x_2') \underset{(\ref{case2x2'indifferent})}{\geq}  a(x_2'-x_1') - (\lvert 1-c_l-x_2 \rvert -\lvert 1-c_r-x_2 \rvert ) \geq  a(x_2'-x_1') - (c_r-c_l)$$
$$> a(x_2-x_1) - (c_r-c_l) \underset{(\ref{case2x2indifferent})}{=} a(1-x_2).$$
As we previously determined that $1-x_2' \leq 1-x_2$, it then must be that $1-x_2'=1-x_2$, which implies that $x_2'=x_2$. This in turn implies that $x_1'=x_1$. As $x_2 \leq c_l$, we have that $x_1 < c_l$. Lemma \ref{kohlberg} then implies that $c_i=x_1 < c_l$. Consider the consumer at $x_1$, who is indifferent between firm $i$ and firm $l$, which implies that $ax_1=c_l-x_1+a(x_2-x_1)$, which again implies equation (\ref{case13}). 
\newline
\newline
Lemma \ref{kohlberg} implies that for the belief $(1-c_r,1-c_l)$, the optimal choice $c_i'$ in $[0,1-c_r]$ satisfies $c_i'=\min (x_1',1-c_r)=\min (x_1,1-c_r)$. Let $x_1 < 1-c_r$. Then $c_i'=x_1 < 1-c_r$. The consumer at $x_1'$ is indifferent between firm $i$ and $l$, which implies that $ax_1'=1-c_r-x_1'+a(x_2'-x_1')$. Filling in that $x_1'=x_1$ and $x_2'=x_2$ leads to equation (\ref{case14}). However, then we have that 
$$ax_1\underset{(\ref{case14})}{=}1-c_r-x_1+a(x_2-x_1)< c_l-x_1+a(x_2-x_1)\underset{(\ref{case13})}{=}ax_1,$$
which is a contradiction. Next, let $x_1 \geq 1-c_r$. Then $c_i'=1-c_r$, which implies that $ax_1'=a(x_2'-x_1')$, which in turn implies that $ax_1=a(x_2-x_1).$
However, then we have that 
$$ax_1=a(x_2-x_1) < c_l-x_1+a(x_2-x_1)\underset{(\ref{case13})}{=}ax_1,$$
which is a contradiction.
\newline
\newline
Case 3.1: $c_l < x_2< c_r$ and $1-c_l \geq x_2$. Consider the belief $(c_l,c_r)$. The consumer at $x_2$ is indifferent between firm $l$ and $r$, which implies that $x_2-c_l+a(x_2-x_1)=c_r-x_2+a(1-x_2)$, which can be rewritten to 
\begin{equation}\label{case3x2indifferent}
a(1-x_2)=2x_2-c_l-c_r+a(x_2-x_1).   
\end{equation}
Now consider the belief $(1-c_r,1-c_l)$ and again the consumer at $x_2$.  As $x_2' \geq x_2$, this consumer should have a higher cost from buying from firm $r$ than from firm $l$, which implies that $1-c_l-x_2 +a(1-x_2') \geq x_2-(1-c_r)+a(x_2'-x_1')$, which can be rewritten to 
\begin{equation}\label{case3x2'indifferent}
a(1-x_2') \geq 2x_2- (1-c_r) - (1-c_l)+a(x_2'-x_1')   
\end{equation}
Hence, we obtain
$$a(1-x_2')  \underset{(\ref{case3x2'indifferent})}{\geq}  2x_2-(1-c_r)-(1-c_l)+a(x_2'-x_1') > 2x_2-c_l-c_r+a(x_2-x_1) \underset{(\ref{case3x2indifferent})}{=}a(1-x_2),$$
which implies $1-x_2'>1-x_2$, which is a contradiction. Here the second inequality follows from $c_l > 1-c_r$.
\newline
\newline
Case 3.2: $c_l < x_2< c_r$ and $1-c_l < x_2$. Consider the belief $(c_l,c_r)$. The consumer at $x_2$ is indifferent between firm $l$ and $r$, which implies that $x_2-c_l+a(x_2-x_1)=c_r-x_2+a(1-x_2)$, which can be rewritten to equation (\ref{case3x2indifferent}). Now consider the belief $(1-c_r,1-c_l)$ and again the consumer at $x_2$.  As $x_2' \geq x_2$, this consumer should have a higher cost from buying from firm $r$ than from firm $l$, which implies that $x_2-(1-c_l) +a(1-x_2') \geq x_2-(1-c_r)+a(x_2'-x_1')$, which can be rewritten to
\begin{equation}\label{case4x2'indifferent}
a(1-x_2') \geq c_r-c_l+a(x_2'-x_1').
\end{equation}
Hence, we obtain
$$a(1-x_2')  \underset{(\ref{case4x2'indifferent})}{\geq} c_r-c_l+a(x_2'-x_1') > 2x_2-c_l-c_r+a(x_2-x_1) \underset{(\ref{case3x2indifferent})}{=} a(1-x_2).$$
As a result, we have that $1-x_2'> 1-x_2$, which is a contradiction. The second inequality follows from $x_2 < c_r$.
\end{proof}

\begin{lemma}\label{LEFTGOAT}
Let $c_l \leq c_r$ and $c_l > 1-c_r$. Consider firm $i$ with belief $(c_l,c_r)$ and let $c_i^L$ be the optimal choice in $[0,c_l]$. Let $c_i^R$ be the optimal choice in $[c_r,1]$. Then $s_i(c_i^L,c_l,c_r) > s_i(c_i^R,c_l,c_r)$    
\end{lemma}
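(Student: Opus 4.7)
The plan is to reduce the claim to Lemma \ref{leftmorethanright} by exploiting the reflection symmetry $x \mapsto 1-x$ of the Hotelling model with uniformly distributed consumers. I would first establish the following symmetry principle: for any firm $i$, any choice $c_i \in [0,1]$, and any belief $(c_l, c_r)$ with $c_l \leq c_r$, we have
$$s_i(c_i, c_l, c_r) = s_i(1-c_i, 1-c_r, 1-c_l).$$
This holds because reflecting every firm's location and every consumer's location about $1/2$ leaves the uniform consumer distribution invariant, preserves all pairwise distances and all market shares, and maps the unique cost-minimizing assignment to itself (with the roles of left and right swapped so that $c_l$ and $c_r$ exchange places via the reflection).

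Applying this symmetry, the optimal choice problem for firm $i$ on the interval $[c_r, 1]$ under belief $(c_l, c_r)$ is mapped bijectively to the optimal choice problem on the interval $[0, 1-c_r]$ under the reflected belief $(1-c_r, 1-c_l)$, with market shares preserved along the way. Hence if $c_i^R$ is optimal in $[c_r, 1]$ given $(c_l, c_r)$ and $c_i'$ is optimal in $[0, 1-c_r]$ given $(1-c_r, 1-c_l)$, then
$$s_i(c_i^R, c_l, c_r) = s_i(c_i', 1-c_r, 1-c_l).$$

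Next, note that the hypothesis $c_l \leq c_r$ and $c_l > 1 - c_r$ is exactly the premise of Lemma \ref{leftmorethanright}. That lemma therefore yields
$$s_i(c_i^L, c_l, c_r) > s_i(c_i', 1-c_r, 1-c_l).$$
Chaining the two displays gives $s_i(c_i^L, c_l, c_r) > s_i(c_i^R, c_l, c_r)$, which is the desired conclusion.

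The only nontrivial point is justifying the reflection symmetry at the level of \emph{optimal} choices (not merely individual ones); the main obstacle is thus confirming that the reflection map commutes with the cost-minimizing assignment from Section 2, so that the optimizer on $[c_r,1]$ under $(c_l,c_r)$ corresponds exactly to the optimizer on $[0,1-c_r]$ under $(1-c_r,1-c_l)$. Since the cost-minimizing assignment is characterized by the system \eqref{BIGN}, which is manifestly invariant under the simultaneous reflection of all locations, this commutation is immediate and no case analysis is required once it is recorded.
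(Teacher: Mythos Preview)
Your proposal is correct and follows essentially the same approach as the paper: invoke the reflection symmetry to identify $s_i(c_i^R,c_l,c_r)$ with $s_i(c_i',1-c_r,1-c_l)$ for the optimal $c_i'$ in $[0,1-c_r]$ under the reflected belief, and then apply Lemma \ref{leftmorethanright}. The paper's proof is exactly this two-line argument (with your $c_i'$ denoted $c_i^{1-R}$), and your additional remarks about why the symmetry commutes with the cost-minimizing assignment simply spell out what the paper leaves implicit.
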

\begin{proof}
Let $c_i^{1-R}$ denote the optimal choice of firm $i$ in $[0,1-c_r]$ for the belief $(1-c_r,1-c_l)$. Symmetry implies that $s_i(c_i^R,c_l,c_r)=s_i(c_i^{1-R},1-c_r,1-c_l)$. As Lemma \ref{leftmorethanright} implies that $s_i(c_i^L,c_l,c_r) > s_i(c_i^{1-R},1-c_r,1-c_l)$, we have proved the statement.     
\end{proof}

\begin{lemma}\label{x1increasing}
Consider firm $i$ with belief $(c_l,c_r)$ and $c_l < c_r$ and let $c_i \in [c_l,c_r)$, with $x_1(c_i) \leq c_i$. Now consider $c_i' \in (c_i,c_r]$. Then it must be that $x_1(c_i') > x_1(c_i)$.
\end{lemma}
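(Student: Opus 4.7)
The plan is to argue by contradiction, mirroring the Kohlberg-style indifference argument used in Lemma \ref{marc1} and Lemma \ref{leftmorethanright}, but now applied simultaneously at the two boundary points $x_1$ and $x_2$. Write $x_1=x_1(c_i)$, $x_2=x_2(c_i)$, $x_1'=x_1(c_i')$ and $x_2'=x_2(c_i')$, and assume for contradiction that $x_1'\leq x_1$. Since the cost-minimizing choice function assigns intervals monotonically, we have $c_l\leq x_1\leq c_i\leq x_2<c_r$ in the old configuration and $c_l\leq x_1'\leq c_i'\leq x_2'<c_r$ in the new one. Combined with $c_i<c_i'$, this produces the ordering $c_l\leq x_1'\leq x_1\leq c_i<c_i'\leq x_2'\leq c_r$.

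First I would examine the consumer located at $x_1$ under the new configuration. Since $x_1\geq x_1'$ and $x_1<c_i'\leq x_2'$, this consumer lies inside firm $i$'s new interval, so firm $i$ is weakly preferred to firm $l$:
$$(c_i'-x_1)+a(x_2'-x_1')\leq (x_1-c_l)+ax_1'.$$
Subtracting the old-configuration indifference $(c_i-x_1)+a(x_2-x_1)=(x_1-c_l)+ax_1$ and rearranging yields
$$a(x_2-x_2')\geq (c_i'-c_i)+2a(x_1-x_1'). \qquad (*)$$
Because $c_i'-c_i>0$ and $x_1-x_1'\geq 0$, relation $(*)$ forces $x_2>x_2'$ strictly.

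Next I would run the symmetric indifference argument at the consumer located at $x_2$. Because $x_2>x_2'$, this consumer now lies inside firm $r$'s new interval; moreover $x_2\geq c_i'$ must hold, for otherwise $x_2<c_i'\leq x_2'$ would contradict $x_2>x_2'$. Hence firm $r$ is weakly preferred to firm $i$:
$$(c_r-x_2)+a(1-x_2')\leq (x_2-c_i')+a(x_2'-x_1').$$
Subtracting the old-configuration indifference $(c_r-x_2)+a(1-x_2)=(x_2-c_i)+a(x_2-x_1)$ and rearranging gives
$$2a(x_2-x_2')\leq a(x_1-x_1')-(c_i'-c_i). \qquad (**)$$
Doubling $(*)$ and combining with $(**)$ produces $3(c_i'-c_i)+3a(x_1-x_1')\leq 0$, which is impossible since the first term is strictly positive and the second nonnegative. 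This contradicts $x_1'\leq x_1$ and proves $x_1(c_i')>x_1(c_i)$.

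The main obstacle will be justifying the positional ordering used above, in particular that under the new configuration the consumer at $x_1$ really sits in firm $i$'s interval (not firm $r$'s), and that the case $x_2<c_i'$ cannot arise. Both follow from the monotonicity of the cost-minimizing choice function (via $c_i'\leq x_2'$ and $c_r\geq x_2'$), together with the continuity of $x_1(c_i)$ and $x_2(c_i)$ in $c_i$ established by \citeA{kohlberg1983equilibrium}, which also rules out any pathological boundary behavior in the absolute-value expressions.
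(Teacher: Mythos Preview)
Your argument has a genuine gap in the positional orderings you assert at the outset. You claim that monotonicity of the cost-minimizing choice function forces $c_l\leq x_1\leq c_i\leq x_2<c_r$ (and the primed analogue), but monotonicity only says that the \emph{intervals} $[0,x_1]$, $(x_1,x_2]$, $(x_2,1]$ are ordered; it does not guarantee that each firm's location lies inside its own interval. The hypotheses of the lemma give only $x_1\leq c_i$; nothing rules out $x_1<c_l$, $x_2\geq c_r$, or $x_2\leq c_i$. Indeed, the very next lemmas in the paper (Lemmas \ref{x2decreasing}, \ref{x2increasingcrlarge}, \ref{farconstantutility}, \ref{decreasingutility}) explicitly treat configurations where some of these orderings fail. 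Consequently your indifference equations at $x_1$ and $x_2$ are written with the wrong signs in several subcases, and your derivations of $(*)$ and $(**)$ are not valid in general. Your final paragraph tries to close this, but the appeal to monotonicity and continuity does not establish $c_i'\leq x_2'$ or $x_2'<c_r$; these can genuinely fail.

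The paper's proof handles exactly this issue by keeping the absolute values $|c_l-x_1|$, $|c_i-x_2|$, $|c_r-x_2|$ intact throughout. It first shows $a(x_2'-x_1')<a(x_2-x_1)$ from the $x_1$-side comparison, deduces $x_2'<x_2$, and then chains the $x_2$-side inequality back through the $x_1$-side using $|c_i'-x_2|-(c_i'-c_i)\leq |c_i-x_2|$ (a reverse triangle inequality) to reach $a(1-x_2')\leq a(1-x_2)$, the desired contradiction. If you redo your subtraction argument carrying the absolute values, you will recover essentially this proof; the clean linear combination you wrote down only survives once the signs are pinned, which the lemma's hypotheses do not do.
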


\begin{proof}
We will prove the statement by contradiction and assume that $x_1' \leq  x_1$. When firm $i$ chooses $c_i$, the consumer at $x_1$ is indifferent between firm $i$ and firm $l$, meaning that $\lvert c_l-x_1 \rvert + ax_1 = c_i -x_1 + a(x_2-x_1)$, which can be rewritten to either
\begin{equation}\label{middlex1increasing1}
a(x_2-x_1)=\lvert c_l-x_1 \rvert -(c_i-x_1) + ax_1
\end{equation}
or
\begin{equation}\label{middlex1increasing2}
ax_1= c_i-x_1 - \lvert c_l-x_1 \rvert+ a(x_2-x_1).
\end{equation}
Similarly, the consumer at $x_2$ is indifferent between firm $i$ and $r$, meaning that $\lvert c_i-x_2 \rvert + a(x_2-x_1)= \lvert c_r-x_2 \rvert +a(1-x_2)$, which can be rewritten to 
\begin{equation}\label{middlex1increasing3}
a(1-x_2)=\lvert c_i-x_2 \rvert - \lvert c_r-x_2 \rvert +a(x_2-x_1).
\end{equation}
Now consider the choice $c_i'$. As by assumption $x_1' \leq x_1$, a consumer at $x_1$ prefers firm $i$ over firm $l$, which implies that $a(x_2'-x_1') + c_i'-x_1 \leq \lvert c_l-x_1 \rvert + ax_1'$, which can be rewritten to 
\begin{equation}\label{middlex1increasing4}
a(x_2'-x_1') \leq \lvert c_l - x_1 \rvert -(c_i'-x_1)+ ax_1'.
\end{equation}
But then we have 
$$a(x_2'-x_1') \underset{(\ref{middlex1increasing4})}{\leq} \lvert c_l - x_1 \rvert -(c_i'-x_1)+ ax_1' < \lvert c_l - x_1 \rvert -(c_i-x_1)+ ax_1 \underset{(\ref{middlex1increasing1})}{=}a(x_2-x_1).$$
This implies that $x_2'-x_1'<x_2-x_1$, and as $x_1'\leq x_1$ by assumption, we have $x_2'<x_2$ and $1-x_2'>1-x_2$.
\newline
\newline
For the choice $c_i'$, a consumer at $x_2$ prefers to buy from firm $r$ over $i$, as $x_2'<x_2$. This implies that $\lvert c_i'-x_2 \rvert + a(x_2'-x_1') \geq \lvert c_r- x_2 \rvert + a(1-x_2')$, which can be rewritten to
\begin{equation}\label{middlex1increasing5}
a(x_2'-x_1') \geq  \lvert c_r- x_2 \rvert - \lvert c_i'-x_2 \rvert + a(1-x_2').
\end{equation}
Inequality (\ref{middlex1increasing5}) and (\ref{middlex1increasing4}) together imply that  $\lvert c_r- x_2 \rvert - \lvert c_i'-x_2 \rvert + a(1-x_2') \leq \lvert c_l - x_1 \rvert -(c_i'-x_1)+ ax_1'$, which can be rewritten to 
\begin{equation}\label{middlex1increasing6}
a(1-x_2') \leq \lvert c_i'-x_2 \rvert - \lvert c_r- x_2 \rvert + \lvert c_l - x_1 \rvert - (c_i'-x_1) + ax_1'.
\end{equation}
But then we have 
$$a(1-x_2') \underset{(\ref{middlex1increasing6})}{\leq} \lvert c_i'-x_2 \rvert - \lvert c_r- x_2 \rvert + \lvert c_l - x_1 \rvert - (c_i'-x_1) + ax_1'$$
$$ \leq  \lvert c_i'-x_2 \rvert - \lvert c_r- x_2 \rvert + \lvert c_l - x_1 \rvert - (c_i'-x_1) + ax_1$$
$$ \underset{(\ref{middlex1increasing2})}{=} \lvert c_i'-x_2 \rvert - \lvert c_r- x_2 \rvert + \lvert c_l - x_1 \rvert - (c_i'-x_1) +  c_i-x_1 - \lvert c_l-x_1 \rvert+ a(x_2-x_1)$$
$$ = \lvert c_i'-x_2 \rvert - \lvert c_r- x_2 \rvert - (c_i'-c_i) + a(x_2-x_1)$$
$$ \leq \lvert c_i-x_2 \rvert - \lvert c_r- x_2 \rvert + a(x_2-x_1) \underset{(\ref{middlex1increasing3})}{=} a(1-x_2).$$
Hence, this implies that $x_2' \geq x_2$, which is a contradiction.
\end{proof}

\begin{lemma}\label{x2decreasing}
Consider firm $i$ with belief $(c_l,c_r)$ and $c_l < c_r$. Let $c_i \in [c_l,c_r)$ and let $ x_2(c_i) \leq c_i$. Now consider $c_i' \in (c_i,c_r]$. Then $x_2(c_i') < x_2(c_i)$.
\end{lemma}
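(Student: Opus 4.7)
The plan is to prove the statement by contradiction, following closely the structure of the proof of Lemma \ref{x1increasing} but with the roles of $x_1$ and $x_2$ interchanged. I write $(x_1, x_2) = (x_1(c_i), x_2(c_i))$ and $(x_1', x_2') = (x_1(c_i'), x_2(c_i'))$, and assume for contradiction that $x_2' \geq x_2$.

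First I would record the two indifference conditions at $c_i$. Because the hypotheses force $x_1 \leq x_2 \leq c_i < c_r$, the absolute values $|c_i - x_1|$, $|c_i - x_2|$, and $|c_r - x_2|$ drop away; only $|c_l - x_1|$ must be kept symbolically, since the sign of $c_l - x_1$ is not controlled. This yields
\begin{equation}\label{x2decA}
a(x_2 - x_1) = |c_l - x_1| - (c_i - x_1) + ax_1
\end{equation}
from the indifference at $x_1$, and
\begin{equation}\label{x2decB}
a(x_2 - x_1) = (c_r - c_i) + a(1 - x_2)
\end{equation}
from the indifference at $x_2$. Next, since $x_2 \leq c_i < c_i' \leq c_r$ and $x_2' \geq x_2$, the consumer at $x_2$ (weakly) prefers firm $i$ to firm $r$ under the new location $c_i'$, which gives
\begin{equation}\label{x2decC}
a(x_2' - x_1') \leq (c_r - c_i') + a(1 - x_2').
\end{equation}
Chaining (\ref{x2decC}) against (\ref{x2decB}) and using $c_i' > c_i$ together with $1 - x_2' \leq 1 - x_2$ forces $a(x_2' - x_1') < a(x_2 - x_1)$; combined with $x_2' \geq x_2$, this yields $x_1' > x_1$. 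Because $x_1' > x_1$, the consumer at $x_1$ (weakly) prefers firm $l$ to firm $i$ under $c_i'$, and since $x_1 \leq c_i'$ we get
\begin{equation}\label{x2decD}
|c_l - x_1| + ax_1' \leq c_i' - x_1 + a(x_2' - x_1').
\end{equation}

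To close the argument, I would eliminate $a(x_2' - x_1')$ between (\ref{x2decC}) and (\ref{x2decD}) to obtain $a(x_1' + x_2') \leq c_r + a - |c_l - x_1| - x_1$, then substitute the expression for $|c_l - x_1|$ from (\ref{x2decA}) and replace $c_r + a - c_i$ by $2ax_2 - ax_1$ via (\ref{x2decB}). After the cancellation the inequality collapses to $x_1' + x_2' \leq x_1 + x_2$, which directly contradicts $x_1' > x_1$ together with $x_2' \geq x_2$.

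The main obstacle I foresee is the bookkeeping: the sign of $c_l - x_1$ is not pinned down under the hypothesis, so $|c_l - x_1|$ must be carried through symbolically and can only be cleared by the joint use of both indifference equations. The contradiction does not appear at any single step but emerges only after all four relations (\ref{x2decA})--(\ref{x2decD}) are combined. What makes the approach viable is that the hypothesis $x_2 \leq c_i$ strips all ambiguity from the absolute values involving $c_i$, $c_i'$, and $c_r$ from the outset, mirroring the role played by the hypothesis $x_1 \leq c_i$ in the proof of Lemma \ref{x1increasing}.
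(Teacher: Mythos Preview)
Your proof is correct and follows essentially the same path as the paper's: both arguments assume $x_2' \geq x_2$, write down the two indifference conditions at $c_i$ (your (\ref{x2decA}) and (\ref{x2decB}) match the paper's (\ref{middlex2decreasing1}) and (\ref{middlex2decreasing2})), derive the two preference inequalities at $x_1$ and $x_2$ under $c_i'$ (your (\ref{x2decD}) and (\ref{x2decC}) match the paper's (\ref{middlex2decreasing3}) and (\ref{middlex2decreasing4})), and then combine all four to reach a contradiction. The only substantive difference is in the order of operations: the paper first invokes Lemma~\ref{x1increasing} (which applies since $x_1 < x_2 \leq c_i$) to obtain $x_1' > x_1$ as a preliminary fact before introducing the contradiction hypothesis, whereas you derive $x_1' > x_1$ internally from (\ref{x2decC}) and (\ref{x2decB}) after assuming $x_2' \geq x_2$. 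Your self-contained derivation is slightly more economical, while the paper's version makes the logical dependence on the earlier lemma explicit; the final algebraic collapse (your $x_1' + x_2' \leq x_1 + x_2$ versus the paper's $1 - x_2' > 1 - x_2$) is just a different rearrangement of the same chain.
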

\begin{proof}
By Lemma \ref{x1increasing}, we know that $x_1' > x_1$ and by contradiction, assume that $x_2' \geq x_2$. Note that this implies that $1-x_2' \leq 1-x_2$. For the choice $c_i$, the consumer at $x_1$ is indifferent between firm $i$ and $l$, which implies that $c_i-x_1+a(x_2-x_1)=\lvert c_l-x_1 \rvert +ax_1$, which can be rewritten to
\begin{equation}\label{middlex2decreasing1}
ax_1=c_i-x_1-\lvert c_l-x_1 \rvert + a(x_2-x_1).
\end{equation}
Similarly, the consumer at $x_2$ is indifferent between firm $i$ and firm $r$, which implies that $c_i-x_2+a(x_2-x_1)=c_r-x_2+a(1-x_2)$, which can be rewritten to
\begin{equation}\label{middlex2decreasing2}
a(1-x_2)=a(x_2-x_1)-(c_r-c_i).
\end{equation}
Now consider the choice $c_i'$. As $x_1'>x_1$, the consumer at $x_1$ prefers to buy from firm $l$ over firm $i$. This implies that $c_i'-x_1+a(x_2'-x_1') \geq \lvert c_l-x_1 \rvert +ax_1'$, which can be rewritten to 
\begin{equation}\label{middlex2decreasing3}
a(x_2'-x_1') \geq \lvert c_l-x_1 \rvert-(c_i'-x_1)+ax_1'.
\end{equation}
As $x_2'\geq x_2$, the consumer at $x_2$ prefers to buy from firm $i$ over $r$, which implies that $c_i'-x_2+a(x_2'-x_1') \leq c_r-x_2+a(1-x_2')$, which can be rewritten to 
\begin{equation}\label{middlex2decreasing4}
a(x_2'-x_1') \leq c_r-c_i'+a(1-x_2').
\end{equation}
Inequality (\ref{middlex2decreasing3}) and (\ref{middlex2decreasing4}) imply that $c_r-c_i'+a(1-x_2') \geq \lvert c_l-x_1 \rvert-(c_i'-x_1)+ax_1'$, which can be rewritten to  
\begin{equation}\label{middlex2decreasing5}
a(1-x_2') \geq \lvert c_l-x_1 \rvert -(c_r-x_1)+ax_1'.
\end{equation}
But then we have 
$$a(1-x_2') \underset{(\ref{middlex2decreasing5})}{\geq}  \lvert c_l-x_1 \rvert -(c_r-x_1)+ax_1' > \lvert c_l-x_1 \rvert -(c_r-x_1)+ax_1$$
$$\underset{(\ref{middlex2decreasing1})}{=} \lvert c_l-x_1 \rvert -(c_r-x_1) + c_i-x_1-\lvert c_l-x_1 \rvert + a(x_2-x_1) =a(x_2-x_1)-(c_r-c_i)\underset{(\ref{middlex2decreasing2})}{=}a(1-x_2).$$
This implies that $1-x_2' > 1-x_2$, which implies that $x_2' < x_2$, which is a contradiction.
\end{proof}

\begin{lemma}\label{x2increasingcrlarge}
Consider firm $i$ with belief $(c_l,c_r)$ and $c_l < c_r$. Let $c_i \in [c_l,c_r)$ and let $x_1(c_i) \leq c_i$ and $x_2(c_i) \geq c_r$. Now consider $c_i' \in (c_i,c_r]$. Then $x_2(c_i') > x_2(c_i)$.
\end{lemma}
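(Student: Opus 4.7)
The strategy is a contradiction argument in the spirit of Lemmas~\ref{x1increasing} and~\ref{x2decreasing}. I would set $x_1 = x_1(c_i)$, $x_2 = x_2(c_i)$, $x_1' = x_1(c_i')$, $x_2' = x_2(c_i')$, and assume toward contradiction that $x_2' \leq x_2$. The hypotheses of Lemma~\ref{x1increasing} are met verbatim, so applying it gives $x_1' > x_1$, and together with $x_2' \leq x_2$ this yields
$$x_2' - x_1' < x_2 - x_1, \quad \text{hence} \quad a(x_2' - x_1') < a(x_2 - x_1).$$

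The next step is to exploit the extra hypothesis $x_2 \geq c_r$. Since $c_i < c_r \leq x_2$, both absolute values in the indifference condition of the consumer at $x_2$ between firms $i$ and $r$ open with the same sign, so the condition collapses to
$$a(1 - x_2) = a(x_2 - x_1) + (c_r - c_i).$$
The plan is then to establish the reverse inequality $a(x_2' - x_1') > a(x_2 - x_1)$ from the indifference at $x_2'$ combined with $x_2' \leq x_2$ (which gives $a(1 - x_2') \geq a(1 - x_2)$), contradicting the consequence of Lemma~\ref{x1increasing} above.

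To write the indifference at $x_2'$ I would split on the location of $x_2'$. If $x_2' \geq c_r$, the simplification above applies verbatim to $(x_1', x_2', c_i')$, and comparing with the displayed equation immediately yields $a(x_2' - x_1') \geq a(x_2 - x_1) + (c_i' - c_i) > 0$. If $x_2' < c_r$, I would further distinguish $x_2' \geq c_i'$ and $x_2' < c_i'$ in order to resolve the signs of $|c_i' - x_2'|$ and $|c_r - x_2'|$; in each sub-case an analogous manipulation produces a strictly positive excess over $a(x_2 - x_1)$, contributed by $c_i' > c_i$ and, in the sub-case $c_i' \leq x_2' < c_r$, additionally by the term $2(c_r - x_2') > 0$ that appears when one absolute value opens the ``opposite'' way.

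The main obstacle will be the bookkeeping in the regime $x_2' < c_r$: the absolute values change form there, and one has to verify that strict positivity of the excess is preserved in each configuration. The underlying contradiction scheme, however, is exactly that of Lemmas~\ref{x1increasing} and~\ref{x2decreasing}, so once the case decomposition is laid out, each sub-case reduces to a short chain of inequalities tagged by the simplified indifference equation at $x_2$ and by $x_1' > x_1$.
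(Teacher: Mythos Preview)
Your overall strategy is the same as the paper's---contradiction, using Lemma~\ref{x1increasing} to get $x_1'>x_1$ and hence $x_2'-x_1'<x_2-x_1$, then deriving the opposite inequality from the right-hand indifference condition---and the argument you sketch for each case does go through. The difference is tactical: you insist on writing the \emph{indifference} condition at the point $x_2'$, which forces you to split on whether $x_2'\geq c_r$, $c_i'\leq x_2'<c_r$, or $x_2'<c_i'$, because you do not know a priori how the two absolute values $|c_i'-x_2'|$ and $|c_r-x_2'|$ open. The paper avoids this entirely by evaluating instead the \emph{preference} of the consumer located at the old point $x_2$ under the new configuration: since $x_2\geq c_r\geq c_i'$ by hypothesis, both absolute values open the same way with no case analysis, yielding directly
\[
a(1-x_2')\;\leq\;c_r-c_i'+a(x_2'-x_1')\;<\;c_r-c_i+a(x_2-x_1)\;=\;a(1-x_2),
\]
which is the contradiction. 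So your route is correct but carries bookkeeping that the paper's choice of test point eliminates; the moral is that when you already control the location of $x_2$ relative to $c_r$ and $c_i'$, testing the consumer at $x_2$ rather than at $x_2'$ is the cleaner move.
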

\begin{proof}
By lemma \ref{x1increasing} we know that $x_1'>x_1$ and by contradiction, assume that $x_2' \leq x_2$. For the choice $c_i$, the consumer at $x_2$ is indifferent between firm $i$ and firm $r$, which implies that $x_2-c_i+a(x_2-x_1)=x_2-c_r +a(1-x_2)$, which can be rewritten to 
\begin{equation}\label{middlex2increasingcrlarge1}
a(1-x_2)=c_r-c_i + a(x_2-x_1).
\end{equation}
For the choice $c_i'$, as $x_2' \leq x_2$, the consumer at $x_2$ prefers to buy from firm $r$ over firm $i$, which implies that $x_2-c_r + a(1-x_2') \leq x_2-c_i'+a(x_2'-x_1')$, which can be rewritten to 
\begin{equation}\label{middlex2increasingcrlarge2}
a(1-x_2') \leq c_r-c_i'+a(x_2'-x_1').
\end{equation}
Note that as $x_1'> x_1$ and $x_2'\leq x_2$, it must be that $x_2'-x_1'< x_2-x_1$. However, we have that 
$$a(1-x_2') \underset{(\ref{middlex2increasingcrlarge2})}{\leq}c_r-c_i'+a(x_2'-x_1') < c_r-c_i+a(x_2-x_1)\underset{(\ref{middlex2increasingcrlarge1})}{=}a(1-x_2)$$
This implies that $1-x_2' < 1-x_2$, which implies that $x_2'>x_2$, which is a contradiction.
\end{proof}

\begin{lemma}\label{handig}
Consider firm $i$ with belief $(c_l,c_r)$ and $c_l < c_r$. Let $c_i \in [c_l,c_r)$, where $x_1(c_i) \leq c_i$ and $c_i < x_2(c_i)< c_r$. Now consider a $c_i' \in (c_i,c_r]$, where $c_i'$ satisfies $c_i' \leq x_2(c_i')$. Then $c_i''< x_2(c_i'')$ for all $c_i'' \in [c_i,c_i')$.
\end{lemma}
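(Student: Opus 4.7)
The plan is to argue by contradiction, leveraging Lemma \ref{x2decreasing} directly. Suppose, for contradiction, that some $c_i'' \in [c_i, c_i')$ satisfies $c_i'' \geq x_2(c_i'')$. I would first observe that this $c_i''$ cannot equal $c_i$, since by hypothesis $c_i < x_2(c_i)$; hence $c_i'' \in (c_i, c_i')$ strictly.

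Next I would invoke Lemma \ref{x2decreasing} with $c_i''$ playing the role of the ``base'' location and $c_i'$ playing the role of the larger location. The hypotheses are straightforward to verify: $c_i'' \geq c_i \geq c_l$ and $c_i'' < c_i' \leq c_r$ place $c_i''$ in $[c_l, c_r)$; the inequality $x_2(c_i'') \leq c_i''$ is exactly the contradiction hypothesis; and $c_i' \in (c_i'', c_r]$ because $c_i' > c_i''$ and $c_i' \leq c_r$. Lemma \ref{x2decreasing} then delivers the strict inequality $x_2(c_i') < x_2(c_i'')$.

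To close, I would chain the three simple inequalities $x_2(c_i') \geq c_i'$ (the hypothesis on $c_i'$), $c_i' > c_i''$ (from the choice of $c_i''$), and $c_i'' \geq x_2(c_i'')$ (the contradiction assumption) to obtain
$$x_2(c_i') \;\geq\; c_i' \;>\; c_i'' \;\geq\; x_2(c_i''),$$
which directly contradicts $x_2(c_i') < x_2(c_i'')$. I do not anticipate any real obstacle here; the only subtlety is verifying that all the hypotheses of Lemma \ref{x2decreasing} are satisfied at $c_i''$ — in particular that $c_i'' < c_r$, which follows at once from $c_i'' < c_i' \leq c_r$. The argument is essentially a one-line monotonicity reversal: once $x_2$ falls below the diagonal at any interior point, Lemma \ref{x2decreasing} forces it to keep falling, so it cannot climb back above the diagonal at the larger point $c_i'$, contradicting $c_i' \leq x_2(c_i')$.
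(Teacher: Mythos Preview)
Your proof is correct and follows essentially the same contradiction strategy as the paper, relying on Lemma~\ref{x2decreasing} to show that once $x_2$ drops to or below the diagonal it cannot climb back above it at $c_i'$. Your version is in fact slightly more streamlined: the paper first invokes continuity of $x_2$ to locate an exact crossing point $c_i^* \in (c_i,c_i'']$ with $c_i^* = x_2(c_i^*)$ and applies Lemma~\ref{x2decreasing} from there, whereas you apply Lemma~\ref{x2decreasing} directly from $c_i''$, which suffices since the lemma only requires $x_2(c_i'') \le c_i''$ rather than equality.
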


\begin{proof}
 By contradiction, assume that there exists a $c_i'' \in [c_i,c_i')$ such that $c_i'' \geq x_2''$. Because of continuity of $x_2$, $c_i < x_2$  and $c_i'' \geq x_2''$, there exists a $c_i^* \in (c_i,c_i'']$ such that $c_i^*=x_2^*$. Lemma \ref{x2decreasing} then implies that $x_2$ is decreasing for all choices of firm $i$ in $[c_i^*,c_r]$. However, then we have $c_i^*=x_2^*>x_2'\geq c_i'$, which is a contradiction.
\end{proof}

\begin{lemma}\label{x2increasingsmall}
Consider firm $i$ with belief $(c_l,c_r)$ and $c_l < c_r$. Let $c_i \in [c_l,c_r)$ , $x_1(c_i) \leq c_i$ and $c_i < x_2(c_i)< c_r$. Now consider a $c_i' \in (c_i,c_r]$, where $c_i'$ satisfies $c_i' \leq x_2(c_i')$. Then $x_2$ is increasing for all choices in $[c_i,c_i']$.
\end{lemma}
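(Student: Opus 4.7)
The plan is to show $x_2$ is strictly increasing on $[c_i,c_i']$ by picking arbitrary $c_i^a < c_i^b$ in that interval and deriving $x_2(c_i^a) < x_2(c_i^b)$ directly from the defining indifference equations. A first observation is that, combining Lemma \ref{handig} with the hypothesis $c_i' \leq x_2(c_i')$, every $c \in [c_i,c_i']$ satisfies $c \leq x_2(c)$, so the absolute value $|c-x_2(c)|$ collapses to $x_2(c)-c$ in the indifference equation defining $x_2$.

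Next I would propagate the condition $x_1(c) \leq c$ from the initial point $c = c_i$ to the whole interval $[c_i,c_i']$. Defining $T = \{c \in [c_i,c_i'] : x_1(c') \leq c' \text{ for every } c' \in [c_i,c]\}$, the continuity of $x_1$ in $c_i$ (established by Kohlberg) together with Lemma \ref{x1increasing} makes $T$ both closed and relatively open in $[c_i,c_i']$, hence equal to the whole interval. Applying Lemma \ref{x1increasing} to the pair $(c_i^a,c_i^b)$ then yields $x_1(c_i^b) > x_1(c_i^a)$.

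Finally, I would solve the indifference equation at $x_2$ in closed form on each side of the threshold $x_2 = c_r$. Writing $x_j^\alpha = x_j(c_i^\alpha)$ for $\alpha \in \{a,b\}$, the equation $x_2^\alpha - c_i^\alpha + a(x_2^\alpha - x_1^\alpha) = |c_r - x_2^\alpha| + a(1-x_2^\alpha)$ gives $x_2^\alpha = \frac{c_r + a + a x_1^\alpha + c_i^\alpha}{2(1+a)}$ when $x_2^\alpha \leq c_r$ and $x_2^\alpha = \frac{1+x_1^\alpha}{2} + \frac{c_i^\alpha - c_r}{2a}$ when $x_2^\alpha \geq c_r$. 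In either formula $x_2^\alpha$ is strictly increasing in both $c_i^\alpha$ and $x_1^\alpha$, so $c_i^b > c_i^a$ together with $x_1^b > x_1^a$ forces $x_2^b > x_2^a$ whenever $x_2^a$ and $x_2^b$ sit on the same side of $c_r$. For the mixed case, continuity of $x_2$ in $c_i$ lets me patch the two regimes at the crossing value where $x_2 = c_r$, since strict monotonicity on each subregime combined with continuity yields strict monotonicity on the union.

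The step I expect to be the main obstacle is the propagation of $x_1(c) \leq c$ across the sub-interval: Lemma \ref{x1increasing} only delivers strict monotonicity of $x_1$ but does not by itself guarantee $x_1(c) \leq c$ at intermediate points, so the connectedness argument relying on continuity of $x_1$ must be set up carefully to avoid circularity with the very conclusion I want to draw about $x_2$.
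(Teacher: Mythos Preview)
Your explicit-formula route is a reasonable alternative to the paper's contradiction argument, but Step~2 --- the propagation of $x_1(c)\le c$ to all of $[c_i,c_i']$ --- does not go through as you describe it. The connectedness argument fails at any boundary point $c^\ast\in T$ where $x_1(c^\ast)=c^\ast$: Lemma~\ref{x1increasing} only tells you $x_1(c')>x_1(c^\ast)=c^\ast$ for $c'>c^\ast$, which is perfectly compatible with $x_1(c')>c'$, so it gives you no openness. Establishing $x_1(c')\le c'$ for $c'$ just to the right of such a $c^\ast$ is precisely the content of Lemma~\ref{c1'morethanx_1'}, whose proof \emph{uses} Lemma~\ref{x2increasingsmall}; invoking it here would be circular. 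You flagged this as the main obstacle, and it is a real one that your sketch does not resolve.

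The paper sidesteps the issue entirely. Its first paragraph proves only the endpoint comparison $x_2(c_i')>x_2(c_i)$ by contradiction: assuming $x_2'\le x_2$ and using Lemma~\ref{x1increasing} once at the original point $c_i$ (where $x_1(c_i)\le c_i$ is given), it forces $a(1-x_2')<a(1-x_2)$, a contradiction. No information about $x_1$ or $x_2$ at intermediate points is needed for this step. Only afterwards does Lemma~\ref{handig} supply $c''<x_2(c'')$ on the whole sub-interval, and the paper leverages this to pass to full monotonicity (the passage is brief, but every later use of the lemma needs only the endpoint inequality anyway). If you want to rescue your direct approach, note that for the endpoint comparison alone your formulas already work without Step~2: when $x_2(c_i')\ge c_r$ the conclusion $x_2(c_i')\ge c_r>x_2(c_i)$ is immediate, and when $x_2(c_i')<c_r$ both endpoints sit in the same regime so a single application of Lemma~\ref{x1increasing} at $c_i$ suffices.
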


\begin{proof}
We will first prove that $x_2'> x_2$. By lemma \ref{x1increasing} we know that $x_1'>x_1$ and by contradiction, assume that $x_2' \leq x_2$. For the choice $c_i$, the consumer at $x_2$ is indifferent between firm $i$ and firm $r$, which implies that $x_2-c_i+a(x_2-x_1)=\lvert c_r-x_2 \rvert +a(1-x_2)$, which can be rewritten to 
\begin{equation}\label{middlex2increasing1}
a(1-x_2)=2x_2-c_i-c_r + a(x_2-x_1).
\end{equation}
For the choice $c_i'$, as $x_2' \leq x_2$, the consumer at $x_2$ prefers to buy from firm $r$ over firm $i$, which implies that $\lvert c_r-x_2 \rvert + a(1-x_2') \leq x_2-c_i'+a(x_2'-x_1')$, which can be rewritten to 
\begin{equation}\label{middlex2increasing2}
a(1-x_2') \leq 2x_2-c_i'-c_r+a(x_2'-x_1').
\end{equation}
Note that as $x_1'> x_1$ and $x_2'\leq x_2$, it must be that $x_2'-x_1'< x_2-x_1$. However, we have that 
$$a(1-x_2') \underset{(\ref{middlex2increasing2})}{\leq} 2x_2-c_i'-c_r+a(x_2'-x_1') < 2x_2-c_i-c_r + a(x_2-x_1)\underset{(\ref{middlex2increasing1})}{=} a(1-x_2).$$
This implies that $1-x_2' < 1-x_2$, which implies that $x_2'>x_2$, which is a contradiction.\newline
\newline
Finally, as we have shown that $x_2'>x_2$, Lemma \ref{handig} implies that $c_i''< x_2''$ for all $c_i'' \in [c_i,c_i')$. This in turn implies that $x_2$ is increasing in the interval $[c_i,c_i']$.
\end{proof}

\begin{lemma}\label{c1'morethanx_1'}
Consider firm $i$ with belief $(c_l,c_r)$ and $c_l < c_r$. Let $c_i \in [c_l,c_r)$ and let $x_1(c_i) \leq c_i$. Now consider $c_i' \in (c_i,c_r]$. Then $c_i' > x_1(c_i')$.  
\end{lemma}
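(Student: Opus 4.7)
I would argue by contradiction. Assume $x_1(c_i') \geq c_i'$, and abbreviate $x_1 = x_1(c_i)$, $x_2 = x_2(c_i)$, $x_1' = x_1(c_i')$, $x_2' = x_2(c_i')$. The first goal is to extract a linear inequality linking $(x_2' - x_2)$ to $(x_1' - x_1)$. Since $c_l \leq c_i' \leq x_1'$, the absolute values in the indifference equation at $x_1'$ resolve and the equation simplifies to $a(x_2' - 2x_1') = c_i' - c_l$. Writing the analogous equation at $x_1$ and using $x_1 \leq c_i$, a short sub-case check (according to whether $x_1 \geq c_l$ or $x_1 < c_l$) yields $a(x_2 - 2x_1) \leq c_i - c_l$. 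Subtracting these produces the key estimate
\[
a\,(x_2' - x_2) \;\geq\; 2a\,(x_1' - x_1) + (c_i' - c_i). \qquad (\ast)
\]
Lemma \ref{x1increasing} supplies $x_1' > x_1$, so both sides of $(\ast)$ are strictly positive.

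Next, I would pair $(\ast)$ with the indifferences at $x_2$ and $x_2'$. In the generic sub-case $c_i \leq x_2 \leq c_r$ and $c_i' \leq x_2' \leq c_r$, these indifferences subtract to
\[
2(1+a)\,(x_2' - x_2) \;=\; (c_i' - c_i) + a\,(x_1' - x_1).
\]
Substituting into $(\ast)$ and writing $A = c_i' - c_i > 0$, $B = x_1' - x_1 > 0$, a few lines of algebra collapse the combined inequality to $-(a+2)\,A \geq a\,(4+3a)\,B$. The left side is negative and the right side is positive, which is the desired contradiction.

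I expect the main obstacle to be the case analysis arising when $x_2$ or $x_2'$ lies outside the interval between firm $i$'s location and $c_r$: in those edge cases the absolute values in the $x_2$-indifference flip sign, and the explicit relation between $x_2' - x_2$ and $x_1' - x_1$ takes a different form. The saving grace is that $(\ast)$ is derived solely from the $x_1$-indifferences and therefore remains valid unchanged in every sub-case; in each edge case I would substitute the corresponding analogue of the $x_2$-difference into $(\ast)$ and verify that a similar sign incompatibility still forces a contradiction.
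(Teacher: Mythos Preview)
Your approach is correct and genuinely different from the paper's.

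The paper splits into three cases according to the position of $x_2(c_i)$ and $x_2(c_i')$ relative to $c_i$ and $c_i'$. When $c_i \geq x_2$ it invokes Lemma~\ref{x2decreasing} directly (no contradiction needed); when $c_i < x_2$ and $c_i' > x_2'$ it uses continuity of $x_2$ together with Lemma~\ref{x2decreasing}; only in the remaining case does it run a contradiction argument, and there it chains four indifference relations to reach $x_1 > x_1'$, contradicting Lemma~\ref{x1increasing}. Thus the paper leans on the auxiliary monotonicity lemmas \ref{x2decreasing} and \ref{x2increasingsmall}.

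Your route is more self-contained: the single inequality $(\ast)$, extracted purely from the $x_1$-indifferences, already forces $x_2' > x_2$ and then, once paired with the $x_2$-indifference in any configuration, produces a sign contradiction. You only need Lemma~\ref{x1increasing}. I checked all five sub-cases (the three possible positions of $x_2$ relative to $c_i,c_r$ crossed with the two positions of $x_2'$ relative to $c_r$, noting that $x_2' > c_i'$ is automatic under the contradiction hypothesis and that $x_2 > c_r$ forces $x_2' > c_r$ via $(\ast)$): in each one the combined inequalities collapse to something of the form $0 \geq \alpha A + \beta B$ with $\alpha,\beta > 0$, exactly as in your generic case. So the plan goes through. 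The trade-off is that your argument is conceptually uniform but still carries the five-way case split on absolute values, whereas the paper offloads two of its three cases to previously established lemmas and only argues by contradiction once.
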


\begin{proof}
If $c_i \geq x_2$, then Lemma \ref{x2decreasing} implies that $x_2'<x_2$. Hence, we have $c_i'>c_i \geq x_2 > x_2'>x_1'$ for all $c_i' \in (c_i,c_r]$.
\newline
\newline
Next, let $c_i < x_2$ and let $c_i' \leq x_2'$. Lemma \ref{x1increasing} and \ref{x2increasingsmall} then imply that $x_1'>x_1$ and $x_2'> x_2$. By contradiction, assume that $c_i' \leq x_1'$. Consider the choice $c_i$ and the consumer at $x_1'$. As $x_1'>x_1$, this consumer prefers to buy from firm $i$ over firm $l$. This implies that $x_1'-c_l+ax_1 \geq x_1'-c_i+a(x_2-x_1)$, which can be rewritten to 
\begin{equation}\label{middlec1morethanx11}
a(x_2-x_1)\leq ax_1+(c_i-c_l).
\end{equation}
Now consider the consumer at $x_2'$. As $x_2'>x_2$, this consumer prefers to buy from firm $r$ over firm $i$. This implies that $x_2'-c_i+a(x_2-x_1) \geq \lvert x_2'-c_r \rvert +a(1-x_2)$, which can be rewritten to 
\begin{equation}\label{middlec1morethanx12}
a(x_2-x_1) \geq \lvert x_2'-c_r \rvert - (x_2'-c_i)+a(1-x_2).
\end{equation}
Now consider the choice $c_i'$ and the consumer at $x_1'$. This consumer is indifferent between firm $i$ and $l$, which implies that $x_1'-c_i'+a(x_2'-x_1')=x_1'-c_l+ax_1'$, which can be rewritten to . 
\begin{equation}\label{middlec1morethanx13}
ax_1'= a(x_2'-x_1')- (c_i'-c_l) .
\end{equation}
Similarly, the consumer at $x_2'$ is indifferent between $i$ and $r$, which implies that $x_2'-c_i'+a(x_2'-x_1')= \lvert x_2'-c_r \rvert + a(1-x_2')$, which can be rewritten to 
\begin{equation}\label{middlec1morethanx14}
 a(1-x_2')=x_2'-c_i'- \lvert x_2'-c_r \rvert+a(x_2'-x_1').
\end{equation}
We have that (\ref{middlec1morethanx11}) and (\ref{middlec1morethanx12}) imply that $ax_1+(c_i-c_l)\geq \lvert x_2'-c_r \rvert - (x_2'-c_i)+a(1-x_2)$, which can be rewritten to 
\begin{equation}\label{middlec1morethanx15}
 ax_1 \geq \lvert x_2' - c_r \rvert - (x_2'-c_l)+a(1-x_2).
\end{equation}
But then we have that 
$$ax_1 \underset{(\ref{middlec1morethanx15})}{\geq} \lvert x_2' - c_r \rvert - (x_2'-c_l)+a(1-x_2) > \lvert x_2' - c_r \rvert - (x_2'-c_l)+a(1-x_2')$$
$$\underset{(\ref{middlec1morethanx14})}{=} \lvert x_2' - c_r \rvert - (x_2'-c_l)+ (x_2'-c_i')-\lvert x_2'-c_r \lvert + a(x_2'-x_1')=a(x_2'-x_1')-(c_i'-c_l)\underset{(\ref{middlec1morethanx13})}{=} ax_1'.$$
This implies that $x_1 > x_1'$, which is a contradiction.
\newline
\newline
Next, consider the case where $c_i < x_2$ and $c_i' > x_2'$. Because of the continuity of $x_2$, there exists a $c_i^* \in (c_i,c_i')$ such that $c_i^*=x_2(c_i^*)$. Lemma \ref{x2decreasing} implies that $x_2$ is decreasing for all choices in $[c_i^*,c_r]$. But then we have that $c_i'>c_i^*=x_2(c_i^*)> x_2'>x_1'$, which proves the result.
\end{proof}

\begin{lemma}\label{farconstantutility}
Let $c_i \in [c_l,c_r)$ and let $x_1(c_i) < c_l$ and $x_2(c_i)\geq c_r$. Let firm $i$ position marginally to some position to the right $c_i' \in (c_i,c_i^R]$, where $c_i^R$ satisfies $x_1(c_i) < x_1(c_i^R) \leq c_l$. Then $s_i(c_i')=x_2(c_i') -x_1(c_i')=x_2(c_i)-x_1(c_i)=s_i(c_i)$
\end{lemma}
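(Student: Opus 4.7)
The plan is to combine the two indifference conditions at $x_1$ and $x_2$ so that the dependence on $c_i$ cancels, yielding an explicit formula for the market share $x_2-x_1$ in terms of $c_l$, $c_r$, and $a$ alone. Once established, this immediately gives the claim.

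The first step is to verify that for every $c_i^{*} \in [c_i,c_i^R]$ the ordering
$$x_1(c_i^{*}) \leq c_l \leq c_i^{*} \leq c_r \leq x_2(c_i^{*})$$
is preserved. Since $x_1(c_i) < c_l \leq c_i$, Lemma \ref{x1increasing} implies that $x_1$ is strictly increasing in $c_i^{*}$ on this range; together with $x_1(c_i^R) \leq c_l$ this yields $x_1(c_i^{*}) \leq c_l$. Since in addition $x_1(c_i) \leq c_i$ and $x_2(c_i) \geq c_r$, Lemma \ref{x2increasingcrlarge} yields $x_2(c_i^{*}) > x_2(c_i) \geq c_r$ for $c_i^{*} \in (c_i,c_i^R]$. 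The middle inequalities $c_l \leq c_i^{*} \leq c_r$ follow from $c_l \leq c_i \leq c_i^{*} \leq c_i^R$, with $c_i^R \leq c_r$ implicit in the setup (otherwise firm $i$ would cross firm $r$ and leave the region).

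Under this ordering every absolute value simplifies, so the indifference conditions at $x_1$ and $x_2$ read
$$a x_1 = (c_i^{*}-c_l) + a(x_2-x_1), \qquad a(1-x_2) = (c_r-c_i^{*}) + a(x_2-x_1).$$
Adding these two equations eliminates $c_i^{*}$, giving $a - a(x_2-x_1) = (c_r-c_l) + 2a(x_2-x_1)$, which rearranges to
$$x_2(c_i^{*}) - x_1(c_i^{*}) = \frac{1}{3} - \frac{c_r-c_l}{3a}.$$
The right-hand side does not involve $c_i^{*}$, so evaluating at $c_i$ and at $c_i'$ gives the same value, proving $s_i(c_i') = s_i(c_i)$. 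The only potential obstacle is the preservation of the ordering throughout $[c_i,c_i^R]$, but this is handled entirely by Lemmas \ref{x1increasing} and \ref{x2increasingcrlarge}, so no genuinely difficult step remains once those monotonicity facts are in hand.
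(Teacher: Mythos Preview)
Your proof is correct and uses the same indifference conditions as the paper, but you combine them differently. The paper writes the $x_1$-condition at $c_i$ and at $c_i'$, subtracts to obtain $2a(x_1'-x_1)=a(x_2'-x_2)+c_i'-c_i$, does the same at $x_2$ to get $2a(x_2'-x_2)=a(x_1'-x_1)+c_i'-c_i$, and then subtracts these two to conclude $3a\bigl((x_2'-x_2)-(x_1'-x_1)\bigr)=0$. You instead add the $x_1$- and $x_2$-conditions for a \emph{single} position $c_i^{*}$, which cancels $c_i^{*}$ directly and gives the closed form $x_2-x_1=\tfrac{1}{3}-\tfrac{c_r-c_l}{3a}$. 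Your route is slightly shorter and yields the actual value of the market share as a bonus; the paper's route avoids needing to verify the ordering for every intermediate $c_i^{*}$ (it only needs it at $c_i$ and $c_i'$), though you handle that cleanly via Lemmas~\ref{x1increasing} and~\ref{x2increasingcrlarge}.
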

\begin{proof}
We will prove that $x_2'-x_2-(x_1'-x_1)=0$, which proves the lemma. For the choice $c_i$, a consumer at $x_1$ is indifferent between firm $l$ and $i$, which implies that $c_i-x_1+a(x_2-x_1)=c_l-x_1+ax_1$, which can be rewritten to 
\begin{equation}\label{utilityconstantfar1}
ax_1=a(x_2-x_1)+(c_i-c_l).
\end{equation}
For the choice $c_i'$, the consumer at $x_1'$ is indifferent between $l$ and $i$, which implies that $c_i'-x_1'+a(x_2'-x_1')=c_l-x_1'+ax_1'$, which can be rewritten to 
\begin{equation}\label{utilityconstantfar2}
ax_1'=a(x_2'-x_1')+(c_i'-c_l).
\end{equation}
Substracting (\ref{utilityconstantfar1}) from \ref{utilityconstantfar2} leads to $a(x_1'-x_1)=a(x_2'-x_1')-a(x_2-x_1)+c_i'-c_i$, which can be rewritten to
\begin{equation}\label{utilityconstantfar3}
2a(x_1'-x_1)=a(x_2'-x_2)+c_i'-c_i.
\end{equation}
For the choice $c_i$, a consumer at $x_2$ is indifferent between firm $r$ and $i$, which implies that $x_2-c_r+a(1-x_2)=x_2-c_i+a(x_2-x_1)$, which can be rewritten to 
\begin{equation}\label{utilityconstantfar4}
a(1-x_2)=c_r-c_i+a(x_2-x_1).
\end{equation}
For the choice $c_i'$, the consumer at $x_2'$ is indifferent between $r$ and $i$. Furthermore, Lemma \ref{x2increasingcrlarge} implies that $x_2'>x_2$. Hence, we have that 
$x_2'-c_r+a(1-x_2')=x_2'-c_i'+a(x_2'-x_1')$, which can be rewritten to 
\begin{equation}\label{utilityconstantfar5}
a(1-x_2')=c_r-c_i'+a(x_2'-x_1').
\end{equation}
Substracting (\ref{utilityconstantfar5}) from \ref{utilityconstantfar4} leads to 
$a(x_2'-x_2)=a(x_2-x_1)-a(x_2'-x_1')+c_i'-c_i$, which can be rewritten to 
\begin{equation}\label{utilityconstantfar6}
2a(x_2'-x_2)=a(x_1'-x_1)+c_i'-c_i.
\end{equation}
 Finally, substracting (\ref{utilityconstantfar3}) from (\ref{utilityconstantfar6}) implies that $2a(x_2'-x_2-(x_1'-x_1))=a(x_1'-x_1)-a(x_2'-x_2)$, which can be rewritten to $3a(x_2'-x_2-(x_1'-x_1))=0$, which implies that $x_2'-x_2-(x_1'-x_1)=0$, which completes the proof. 
\end{proof}

\begin{lemma}\label{closeconstantutility}
Let $c_i \in [c_l,c_r)$ and let $c_l \leq x_1(c_i) \leq c_i$ and $c_i< x_2(c_i)<c_r$. Let firm i position marginally to the right $c_i' \in (c_i,c_i^R]$, where $c_i^R$ satisfies $c_i^R \leq x_2(c_i^R) \leq c_r$. Then $s_i(c_i')=x_2(c_i') -x_1(c_i')=x_2(c_i)-x_1(c_i)$
\end{lemma}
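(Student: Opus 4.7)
The plan is to mirror the structure of Lemma~\ref{farconstantutility}: write out the four indifference equations defining $(x_1,x_2)$ at $c_i$ and $(x_1',x_2')$ at $c_i'$, subtract corresponding pairs to obtain two linear identities in the unknowns $x_1'-x_1$, $x_2'-x_2$, and $c_i'-c_i$, and combine them to force $x_1'-x_1 = x_2'-x_2$, which is exactly the claim $s_i(c_i') = x_2'-x_1' = x_2-x_1 = s_i(c_i)$. The main structural difference from Lemma~\ref{farconstantutility} is that here the interfaces $x_1$ and $x_2$ sit strictly between firm locations rather than beyond $c_l$ and $c_r$, so the absolute values in (\ref{BIGN}) resolve with different signs and the resulting linear system has different coefficients.

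First I would pin down all the signs along the whole perturbation $(c_i, c_i^R]$. For the base choice $c_i$, the hypotheses $c_l \leq x_1 \leq c_i$ and $c_i < x_2 < c_r$ give immediately the indifference conditions $x_1-c_l+ax_1 = c_i-x_1+a(x_2-x_1)$ and $x_2-c_i+a(x_2-x_1) = c_r-x_2+a(1-x_2)$, which rearrange to
\begin{equation*}
2(1+a)x_1 - ax_2 = c_l+c_i, \qquad 2(1+a)x_2 - ax_1 = c_i+c_r+a.
\end{equation*}
For the perturbed $c_i' \in (c_i,c_i^R]$, Lemma~\ref{x1increasing} gives $x_1' > x_1 \geq c_l$, Lemma~\ref{c1'morethanx_1'} gives $c_i' > x_1'$, Lemma~\ref{handig} applied with the hypothesis $c_i^R \leq x_2(c_i^R)$ gives $c_i' \leq x_2'$, and Lemma~\ref{x2increasingsmall} together with $x_2(c_i^R) \leq c_r$ gives $x_2' \leq c_r$. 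These four sign resolutions guarantee that the primed indifference equations take exactly the same structural form: $2(1+a)x_1' - ax_2' = c_l+c_i'$ and $2(1+a)x_2' - ax_1' = c_i'+c_r+a$.

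Subtracting each primed equation from its unprimed counterpart yields
\begin{align*}
2(1+a)(x_1'-x_1) - a(x_2'-x_2) &= c_i'-c_i, \\
2(1+a)(x_2'-x_2) - a(x_1'-x_1) &= c_i'-c_i.
\end{align*}
Equating the two left-hand sides and collecting terms gives $(2+3a)(x_1'-x_1) = (2+3a)(x_2'-x_2)$, and since $a>0$ this forces $x_1'-x_1 = x_2'-x_2$, so $x_2'-x_1' = x_2-x_1$, which is the desired equality of market shares.

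The main obstacle I anticipate is bookkeeping rather than algebra: making sure the four sign assignments above hold uniformly on $(c_i, c_i^R]$ and not just at the endpoint, so that the primed indifference conditions can be written without a case split. The hypothesis $c_i^R \leq x_2(c_i^R) \leq c_r$ is tailored precisely to let Lemma~\ref{handig} and Lemma~\ref{x2increasingsmall} propagate the needed inequalities throughout the interval, after which the linear algebra is essentially forced.
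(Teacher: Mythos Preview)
Your proposal is correct and follows essentially the same approach as the paper: set up the four indifference equations at $c_i$ and $c_i'$, subtract in pairs, and combine to force $(2+3a)(x_1'-x_1)=(2+3a)(x_2'-x_2)$. Your algebraic rearrangement into the symmetric form $2(1+a)x_1 - ax_2 = c_l+c_i$ and $2(1+a)x_2 - ax_1 = c_i+c_r+a$ is slightly cleaner than the paper's, and your explicit use of Lemma~\ref{handig} to secure $c_i' \le x_2'$ throughout $(c_i,c_i^R]$ is a point the paper leaves implicit when it invokes Lemma~\ref{x2increasingsmall}.
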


\begin{proof}
We will prove that $x_2' -x_2 - (x_1' -x_1)=0$, which proves the lemma. For the choice $c_i$, a consumer at $x_1$ is indifferent between firm $l$ and $i$, which implies that $x_1-c_l +ax_1=c_i-x_1+a(x_2-x_1)$, which can be rewritten to  
\begin{equation}\label{utilityconstantclose1}
ax_1=c_i+c_l-2x_1+a(x_2-x_1).
\end{equation}
For the choice $c_i'$, a consumer at $x_1'$ is again indifferent between firm $l$ and $i$. Lemma \ref{x1increasing} implies that $x_1'>x_1$ and Lemma \ref{c1'morethanx_1'} implies that $c_i'>x_1$. Hence, we have that $x_1'-c_l +ax_1'=c_i'-x_1'+a(x_2'-x_1')$, which can be rewritten to $ax_1'=c_i'+c_l-2x_1'+a(x_2'-x_1')$. 
\begin{equation}\label{utilityconstantclose2}
ax_1'=c_i'+c_l-2x_1'+a(x_2'-x_1').
\end{equation}
Substracting (\ref{utilityconstantclose1}) from (\ref{utilityconstantclose2}) implies that $ax_1'-ax_1=c_i'-c_i+2x_1-2x_1'+a(x_2'-x_1')-a(x_2-x_1)$, which can be rewritten to 
\begin{equation}\label{utilityconstantclose3}
2a(x_1'-x_1)=c_i'-c_i+2x_1-2x_1'+a(x_2'-x_2).
\end{equation}
For the choice $c_i$, a consumer at $x_2$ is indifferent between firm $r$ and firm $i$, which implies that $c_r-x_2+a(1-x_2)=x_2-c_i+a(x_2-x_1)$, which can rewritten to  \begin{equation}\label{utilityconstantclose4}
a(1-x_2)=2x_2-c_i-c_r+a(x_2-x_1).
\end{equation}
For the choice $c_i'$, a consumer at $x_2'$ is indifferent between firm $r$ and firm $i$. Furthermore, Lemma \ref{x2increasingsmall} implies that $x_2'> x_2$. We have that $c_r-x_2'+a(1-x_2')=x_2'-c_i'+a(x_2'-c_i')$, which can be rewritten to $a(1-x_2')=2x_2'-c_i'-c_r+a(x_2'-x_1')$. 
\begin{equation}\label{utilityconstantclose5}
a(1-x_2')=2x_2'-c_i'-c_r+a(x_2'-x_1').
\end{equation}
Subtracting (\ref{utilityconstantclose5}) from (\ref{utilityconstantclose4}) implies that $a(x_2'-x_2)=2x_2-2x_2'+c_i'-c_i+a(x_2-x_1)-a(x_2'-x_1')$, which can be rewritten to 
\begin{equation}\label{utilityconstantclose6}
2a(x_2'-x_2)=2x_2-2x_2'+c_i'-c_i+a(x_1'-x_1).
\end{equation}
Finally, subtracting (\ref{utilityconstantclose3}) from (\ref{utilityconstantclose6}) implies that $2a(x_2'-x_2-(x_1'-x_1))=2x_2-x_2'-2x_1+2x_1'+a(x_1'-x_1)-a(x_2'-x_2)$, which can be rewritten to $(x_2'-x_2-(x_1'-x_1))(3a+2)=0$. This implies that $x_2'-x_2-(x_1'-x_1)=0$.
\end{proof}

\begin{lemma}\label{increasingutility}
Let $c_i \in [c_l,c_r)$ and let $c_l \leq x_1(c_i) \leq c_i$ and let $x_2(c_i) \geq c_r$. Then $s_i(c_i')=x_2(c_i') -x_1(c_i')> x_2(c_i)-x_1(c_i)$ for all $c_i' \in (c_i,c_r]$   
\end{lemma}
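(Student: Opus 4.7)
The plan is first to verify that the geometric configuration of indifferent consumers does not change qualitatively as firm $i$ moves from $c_i$ to any $c_i' \in (c_i,c_r]$, and then to extract a small linear system from the four indifference conditions that pins down how $x_1$ and $x_2$ shift relative to one another.

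First I would show that for every $c_i' \in (c_i,c_r]$ we have $c_l < x_1(c_i') < c_i'$ and $x_2(c_i') > c_r \geq c_i'$. The lower bound $x_1(c_i') > x_1(c_i) \geq c_l$ is immediate from Lemma \ref{x1increasing}, the upper bound $x_1(c_i') < c_i'$ is Lemma \ref{c1'morethanx_1'}, and $x_2(c_i') > x_2(c_i) \geq c_r$ is Lemma \ref{x2increasingcrlarge}. Consequently, the indifference equations at $x_1$ and $x_2$ can be written without ambiguity in the absolute values, both at $c_i$ and at $c_i'$.

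Writing out those equations and rearranging, the pair for the choice $c_i$ takes the form $(2+2a)x_1 - a x_2 = c_l + c_i$ and $2a x_2 - a x_1 = a + c_i - c_r$, and the pair for $c_i'$ is identical with primes attached. Subtracting the $c_i$ pair from the $c_i'$ pair and setting $\Delta_1 = x_1(c_i') - x_1(c_i)$, $\Delta_2 = x_2(c_i') - x_2(c_i)$, $\Delta_c = c_i' - c_i > 0$, I obtain the linear system
$$(2+2a)\Delta_1 - a\Delta_2 = \Delta_c, \qquad 2a\Delta_2 - a\Delta_1 = \Delta_c.$$
Equating the two right-hand sides gives $(2+3a)\Delta_1 = 3a\Delta_2$, and hence $\Delta_2 - \Delta_1 = \tfrac{2}{3a}\Delta_1$. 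Lemma \ref{x1increasing} provides $\Delta_1 > 0$ strictly, so $\Delta_2 > \Delta_1$, which gives $x_2(c_i') - x_1(c_i') > x_2(c_i) - x_1(c_i)$, as claimed.

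The main point requiring care is the first step: one must confirm that the absolute values in all four indifference equations collapse into the same canonical signs throughout the interval $(c_i,c_r]$, for otherwise the linear system above need not hold uniformly. Once this uniformity is established via the three earlier lemmas, the algebra is entirely routine.
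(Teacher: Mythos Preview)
Your proof is correct and follows essentially the same approach as the paper: you invoke Lemmas \ref{x1increasing}, \ref{c1'morethanx_1'}, and \ref{x2increasingcrlarge} to fix the signs of the absolute values, write the indifference conditions at $x_1$ and $x_2$ for both $c_i$ and $c_i'$, subtract, and conclude $\Delta_2-\Delta_1=\tfrac{2}{3a}\Delta_1>0$, which is exactly the paper's identity $x_2'-x_2-(x_1'-x_1)=\tfrac{2(x_1'-x_1)}{3a}$. Your presentation via a $2\times 2$ linear system in $(\Delta_1,\Delta_2)$ is a cleaner packaging of the same six equations the paper writes out, but the underlying argument is identical.
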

\begin{proof}
We will prove that $x_2'-x_2-(x_1'-x_1)>0$, proving the Lemma. Consider the choice $c_i$ and a consumer at $x_1$. This consumer is indifferent between $l$ and $i$, which implies that $x_1-c_l+ax_1=c_i-x_1+a(x_2-x_1)$, which can be rewritten to  
\begin{equation}\label{utilityincreasing1}
ax_1=c_l+c_i-2x_1+a(x_2-x_1).
\end{equation}
Now consider $c_i'$ and a consumer at $x_1'$. Lemma \ref{x1increasing} implies that $x_1' > x_1$ and Lemma \ref{c1'morethanx_1'} implies that $c_i' > x_1'$. A consumer at $x_1'$ is indifferent between firm $l$ and $i$, which implies that $x_1'-c_l+ax_1'=c_i'-x_1'+a(x_2'-x_1')$, which can be rewritten to  
\begin{equation}\label{utilityincreasing2}
ax_1'=c_l+c_i'-2x_1'+a(x_2'-x_1').
\end{equation}
Subtracting (\ref{utilityincreasing1}) from (\ref{utilityincreasing2}) implies that $a(x_1'-x_1)=c_i'-c_i+2x_1-2x_1'+a(x_2'-x_1')-a(x_2-x_1)$, which can be rewritten to $2a(x_1'-x_1)=c_i'-c_i+2x_1-2x-1'+a(x_2'-x_2)$.
\begin{equation}\label{utilityincreasing3}
2a(x_1'-x_1)=c_i'-c_i+2x_1-2x-1'+a(x_2'-x_2).
\end{equation}
Now consider $c_i$ and a consumer at $x_2$, who is indifferent between firm $r$ and firm $i$, which implies that $x_2-c_r+a(1-x_2)=x_2-c_i+a(x_2-x_1)$, which can be rewritten to $a(1-x_2)=c_r-c_i+a(x_2-x_1)$. 
\begin{equation}\label{utilityincreasing4}
a(1-x_2)=c_r-c_i+a(x_2-x_1).
\end{equation}
Now consider $c_i'$ and a consumer at $x_2'$. By Lemma \ref{x2increasingcrlarge} we have that $x_2'>x_2$. As a consumer at $x_2'$ is indifferent between firm $r$ and $i$, we have that $x_2'-c_r+a(1-x_2')=x_2'-c_i'+a(x_2'-x_1')$, which can be rewritten to $a(1-x_2')=c_r-c_i'+a(x_2'-x_1')$. 
\begin{equation}\label{utilityincreasing5}
a(1-x_2')=c_r-c_i'+a(x_2'-x_1').
\end{equation}
Subtracting (\ref{utilityincreasing5}) from (\ref{utilityincreasing5}) leads to $a(x_2'-x_2)=c_i'-c_i+a(x_2-x_1)-a(x_2'-x_1')$, which can be rewritten to $2a(x_2'-x_2)=c_i'-c_i+a(x_1'-x_1)$. 
\begin{equation}\label{utilityincreasing6}
2a(x_2'-x_2)=c_i'-c_i+a(x_1'-x_1).
\end{equation}
Finally, subtracting (\ref{utilityincreasing3}) from (\ref{utilityincreasing6}) implies that $2a(x_2'-x_2-(x_1'-x_1))=a(x_1'-x_1)-a(x_2'-x_2)+2x_1'-2x_1$, which can be rewritten to $x_2'-x_2-(x_1'-x_1)=\frac{2(x_1'-x_1)}{3a}>0$, which completes the proof.
\end{proof}

\begin{lemma}\label{decreasingutility}
Consider $c_i \in [c_l,c_r)$ and let $x_1(c_i) < c_l$ and let $c_i < x_2(c_i) < c_r$. Let firm i position marginally to the right $c_i' \in (c_i,c_i^R)$, where $c_i^R$ satisfies that $x_1(c_i^R) \leq c_l$ and $c_i^R \leq x_2(c_i^R) \leq c_r$. Then $s_i(c_i')=x_2(c_i') -x_1(c_i') < x_2(c_i)-x_1(c_i)$.
\end{lemma}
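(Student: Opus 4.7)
The plan is to mimic the algebraic style of Lemmas \ref{closeconstantutility} and \ref{increasingutility}, but in the new regime $x_1(c_i) < c_l$, which flips one of the absolute-value signs in the indifference equation at $x_1$. First I will check that this regime is preserved throughout the interval: by the hypotheses $x_1(c_i^R) \leq c_l$ and $c_i^R \leq x_2(c_i^R) \leq c_r$, together with Lemma \ref{x1increasing} (applicable since $x_1(c_i) < c_l \leq c_i$) and the continuity of $x_1$ and $x_2$ in $c_i$ established by \citeA{kohlberg1983equilibrium}, it follows that $x_1(c_i') \leq c_l$ and $c_i' \leq x_2(c_i') \leq c_r$ for every $c_i' \in (c_i, c_i^R)$. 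This pins down the absolute-value signs in the indifference equations at both $c_i$ and $c_i'$.

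With those signs fixed, the indifference at $x_1$ (where $x_1 < c_l \leq c_i$ places the indifferent consumer strictly to the left of both firms $l$ and $i$) reduces to $2a\,x_1 = (c_i - c_l) + a\,x_2$; the coefficient $2a$ on $x_1$, rather than the $2(1+a)$ obtained in Lemma \ref{closeconstantutility}, is precisely where the new regime bites. The indifference at $x_2$, with $c_i < x_2 < c_r$, reduces in the usual way to $(2 + 2a)\,x_2 = a + c_i + c_r + a\,x_1$. Writing the same two equations with primes for $c_i'$, subtracting, and setting $\delta = c_i' - c_i > 0$, $\Delta_1 = x_1' - x_1$, $\Delta_2 = x_2' - x_2$, I obtain the linear system
\[
2a\,\Delta_1 = \delta + a\,\Delta_2, \qquad (2+2a)\,\Delta_2 = \delta + a\,\Delta_1.
\]

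Eliminating $\delta$ from this $2 \times 2$ system yields $(3a+2)\,\Delta_2 = 3a\,\Delta_1$, hence $\Delta_2 - \Delta_1 = -\tfrac{2}{3a+2}\,\Delta_1$. Lemma \ref{x1increasing} guarantees $\Delta_1 > 0$, so $\Delta_2 - \Delta_1 < 0$, i.e., $x_2' - x_1' < x_2 - x_1$, which is exactly the desired $s_i(c_i') < s_i(c_i)$. The main obstacle is the regime-preservation step in the first paragraph: verifying that the indifferent consumers stay on the correct sides of $c_l$ and $c_r$ for every $c_i' \in (c_i, c_i^R)$, so that the absolute-value signs in the indifference equations never flip. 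Everything afterwards is a routine two-variable elimination, and a useful reassurance is the contrast with Lemma \ref{closeconstantutility}, where the analogous system has coefficient $2(1+a)$ in place of $2a$ and delivers $\Delta_2 - \Delta_1 = 0$; the strict decrease here is driven precisely by the change of sign in $|c_l - x_1|$ between the two regimes.
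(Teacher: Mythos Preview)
Your proof is correct and follows essentially the same route as the paper: write the two indifference equations in the regime $x_1<c_l\le c_i<x_2<c_r$, subtract the primed versions, and eliminate $\delta$ to obtain a linear relation between $\Delta_1$ and $\Delta_2$. The only cosmetic difference is that the paper expresses the result as $\Delta_2-\Delta_1=\tfrac{2(x_2-x_2')}{3a}$ and concludes via $x_2'>x_2$ (Lemma~\ref{x2increasingsmall}), whereas you write $\Delta_2-\Delta_1=-\tfrac{2}{3a+2}\Delta_1$ and conclude via $\Delta_1>0$ (Lemma~\ref{x1increasing}); both are equivalent rearrangements of $(3a+2)\Delta_2=3a\Delta_1$. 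Your explicit regime-preservation paragraph is, if anything, more careful than the paper, which silently assumes $x_1'\le c_l$ and $c_i'\le x_2'\le c_r$ from the hypothesis on $c_i^R$.
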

\begin{proof}
We will prove that $x_2'-x_2 - (x_1'-x_1)<0$, which proves the Lemma. Consider the choice $c_i$ and a consumer at $x_1$. This consumer is indifferent between $l$ and $i$, which implies that $c_l-x_1+ax_1=c_i-x_1+a(x_2-x_1)$, which can be rewritten to  
\begin{equation}\label{utilitydecreasing1}
ax_1=c_i-c_l+a(x_2-x_1).
\end{equation}
Now consider $c_i'$ and a consumer at $x_1'$. Lemma \ref{x1increasing} implies that $x_1' > x_1$. A consumer at $x_1'$ is indifferent between firm $l$ and $i$, which implies that $c_l-x_1'+ax_1'=c_i'-x_1'+a(x_2'-x_1')$, which can be rewritten to 
\begin{equation}\label{utilitydecreasing2}
ax_1'=c_i'-c_l+a(x_2'-x_1').
\end{equation}
Subtracting (\ref{utilitydecreasing1}) from (\ref{utilitydecreasing2}) $a(x_1'-x_1)=c_i'-c_i+a(x_2'-x_1')-a(x_2-x_1)$, which can be rewritten to 
\begin{equation}\label{utilitydecreasing3}
2a(x_1'-x_1)=c_i'-c_i+a(x_2'-x_2).
\end{equation}
Now consider $c_i$ and a consumer at $x_2$, who is indifferent between firm $r$ and firm $i$, which implies that $c_r-x_2+a(1-x_2)=x_2-c_i+a(x_2-x_1)$, which can be rewritten to
\begin{equation}\label{utilitydecreasing4}
a(1-x_2)=2x_2-c_i-c_r+a(x_2-x_1).
\end{equation}
Now consider $c_i'$ and a consumer at $x_2'$. By Lemma \ref{x2increasingsmall} we have that $x_2'>x_2$. As a consumer at $x_2'$ is indifferent between firm $r$ and $i$, we have that $c_r-x_2'+a(1-x_2')=x_2'-c_i'+a(x_2'-x_1')$, which can be rewritten to 
\begin{equation}\label{utilitydecreasing5}
a(1-x_2')=2x_2'-c_i'-c_r+a(x_2'-x_1').
\end{equation}
Subtracting (\ref{utilitydecreasing5}) from (\ref{utilitydecreasing4}) leads to $a(x_2'-x_2)=c_i'-c_i+2x_2-2x_2'+a(x_2-x_1)-a(x_2'-x_1')$, which can be rewritten to 
\begin{equation}\label{utilitydecreasing6}
2a(x_2'-x_2)=c_i'-c_i+2x_2-2x_2'+a(x_1'-x_1).
\end{equation}
Finally, subtracting (\ref{utilitydecreasing3}) from (\ref{utilitydecreasing6}) leads to $2a(x_2'-x_2-(x_1'-x_1))=2x_2-2x_2'+a(x_1'-x_1)-a(x_2'-x_2)$, which can be rewritten to $x_2'-x_2-(x_1'-x_1)=\frac{2(x_2-x_2')}{3a}<0$, which completes the proof. 
\end{proof}

\begin{theorem}\label{Leftdirection}
Consider firm $i$ with belief $(c_l,c_r)$, where $c_l \leq c_r$ and $c_l \geq 1-c_r$. If $x_1(c_l) < c_l$, then firm $i's$ unique optimal choice satisfies $c_i^{OPT}=x_1(c_i^{OPT})<c_l$. If $x_1(c_l)=c_l$, then $c_l$ is an optimal choice for firm $i$. 
\end{theorem}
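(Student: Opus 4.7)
The plan is to decompose $[0,1]$ into $[0,c_l]$, $(c_l,c_r)$, and $[c_r,1]$, find the optimum on each piece, and compare. Lemma~\ref{LEFTGOAT} immediately rules out $[c_r,1]$ as a location for the global optimum when $c_l>1-c_r$ strictly, since the right optimum is strictly dominated by the left optimum; in the borderline case $c_l=1-c_r$ the problem is symmetric under $c\mapsto 1-c$, so the ``unique'' optimum in the theorem is unique only up to this mirror reflection, which I read as an implicit strictness in the hypothesis. Thus the substantive comparison is between $[0,c_l]$ and $(c_l,c_r)$.

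For $[0,c_l]$, Lemma~\ref{kohlberg} tells me that any optimum $c_i^L$ there satisfies $c_i^L=\min\{x_1(c_i^L),c_l\}$. Under the hypothesis $x_1(c_l)<c_l$, the point $c_i=c_l$ fails this fixed-point condition, so the optimum must lie strictly inside and satisfy $c_i^L=x_1(c_i^L)<c_l$; existence follows from continuity of $x_1$ and the intermediate value theorem. The mirror of the argument in Lemma~\ref{x1increasing} (moving $c_i$ leftward from $c_l$ while $c_i>x_1(c_i)$) then shows that $x_1$ strictly increases as $c_i$ decreases from $c_l$ down to $c_i^L$, so the market share $c_i^L=x_1(c_i^L)$ strictly exceeds $\bar{x}_1:=x_1(c_l)$. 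In the complementary case $x_1(c_l)=c_l$, the point $c_l$ itself is a fixed point and is an optimum on $[0,c_l]$, which gives the second sentence of the theorem.

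The hard part is bounding the market share $s_i(c_i)=x_2(c_i)-x_1(c_i)$ on $(c_l,c_r)$. By continuity, $\lim_{c_i\downarrow c_l}s_i(c_i)=\bar{x}_1$. I would trace $s_i$ as $c_i$ sweeps from $c_l$ toward $c_r$ using Lemmas~\ref{farconstantutility}, \ref{closeconstantutility}, \ref{increasingutility}, \ref{decreasingutility}, which classify the local behaviour of $s_i$ according to the position of $x_1$ relative to $c_l$ and of $x_2$ relative to $c_r$. In the two constant regimes and the decreasing regime the ceiling $s_i\leq\bar{x}_1$ is maintained automatically; the delicate case is the increasing regime of Lemma~\ref{increasingutility}. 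Here Lemmas~\ref{x1increasing} and~\ref{x2increasingcrlarge} guarantee that once $x_1\geq c_l$ and $x_2\geq c_r$ hold, they continue to hold for all larger $c_i$, so the increasing regime, once entered, persists up to $c_r$. Hence the supremum of $s_i$ on any such segment is attained at $c_i=c_r$. At $c_i=c_r$ firms $i$ and $r$ coincide and $x_2(c_r)=(1+x_1(c_r))/2$; plugging this into the indifference equation at $x_1$ gives a closed form for $s_i(c_r)$ whose comparison with $\bar{x}_1$ reduces, after algebra, to the inequality $c_l+c_r\geq 1$, which is exactly the hypothesis $c_l\geq 1-c_r$. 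Therefore $s_i(c_r)\leq\bar{x}_1$ and consequently $s_i\leq\bar{x}_1$ throughout $(c_l,c_r)$.

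Combining the three pieces: when $x_1(c_l)<c_l$, the strict bound $c_i^L>\bar{x}_1$ on $[0,c_l]$ dominates $(c_l,c_r)$ by the previous paragraph and $[c_r,1]$ by Lemma~\ref{LEFTGOAT}, yielding the unique optimum $c_i^{OPT}=c_i^L=x_1(c_i^L)<c_l$; when $x_1(c_l)=c_l$, the point $c_l$ is an optimum. The main obstacle I anticipate is the bookkeeping on $(c_l,c_r)$: tracking which of the four regimes is active as $c_i$ increases and showing that no sequence of transitions pushes $s_i$ strictly above $\bar{x}_1$ before $c_i$ reaches $c_r$. The decisive estimate is the bound $s_i(c_r)\leq\bar{x}_1$, which is precisely where the hypothesis $c_l+c_r\geq 1$ gets used inside the interval rather than just on its right endpoint.
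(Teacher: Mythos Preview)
Your decomposition into $[0,c_l]$, $(c_l,c_r)$, $[c_r,1]$ and the regime-tracking on the middle interval are exactly the paper's strategy, and your use of Lemma~\ref{kohlberg} on the left and Lemma~\ref{LEFTGOAT} on the right is correct. The one substantive gap is the step where you bound the increasing regime by the inequality $s_i(c_r)\leq\bar{x}_1$ and assert it ``reduces, after algebra, to $c_l+c_r\geq1$.'' That reduction only works in one of several configurations: the closed form for $\bar{x}_1=x_1(c_l)$ depends on whether $x_2(c_l)$ lies below $c_l$, between $c_l$ and $c_r$, or above $c_r$, and the closed form for $s_i(c_r)$ depends on whether $x_1(c_r)$ lies below or above $c_l$. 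Your algebraic check covers only the case $c_l\leq x_2(c_l)\leq c_r$ and $x_1(c_r)\geq c_l$; in the other cases the expressions change (for instance, when $x_2(c_l)\geq c_r$ and $x_1(c_r)<c_l$ one finds $s_i(c_r)=\bar{x}_1$ exactly), and you have not verified them.

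The paper sidesteps this altogether. Once you have shown the maximum on $[c_l,c_r]$ is attained either at $c_l$ or at $c_r$, observe that by continuity of market shares the middle-firm share at $c_i=c_r$ coincides with the right-firm share at $c_i=c_r$ (both firms occupy $c_r$, hence get equal shares). Therefore $s_i(c_r)$ is the payoff of a choice in $[c_r,1]$, which is at most the optimum on $[c_r,1]$, which by Lemma~\ref{LEFTGOAT} is strictly below the optimum $c_i^{OPT}$ on $[0,c_l]$ when $c_l>1-c_r$. You already invoke Lemma~\ref{LEFTGOAT} for the right piece, so this one additional line replaces your unproved algebraic claim and closes the argument without any case split.
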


\begin{proof}
Lemma \ref{kohlberg} implies that $c_i^{OPT}$ yields the greatest market share in $[0,c_l]$. Note that if $x_1(c_l) < c_l$, this choice is positioned strictly to the left of $c_l$, and if $x_1(c_l)=c_l$, then $c_i^{OPT}=c_l$. Furthermore, as $x_1(c_l) \leq c_l$, Lemma \ref{x1increasing} implies that $x_1$ is increasing for all choices of firm $i$ in $[c_l,c_r]$. Lemma \ref{c1'morethanx_1'} then implies that $c_i > x_1(c_i)$ for all $c_i \in [c_l,c_r]$. Consider the following 5 cases:
\begin{enumerate}
    \item Let $x_2(c_i) \leq c_i$.
    \item Let $x_2(c_i) > c_i$ and
    \begin{enumerate}
        \item $x_1(c_i) \geq c_l$ and $x_2(c_i) \geq c_r$.
        \item $x_1(c_i) \geq c_l$ and $x_2(c_i) < c_r$.
        \item $x_1(c_i) < c_l$ and $x_2(c_i) \geq c_r$.
        \item $x_1(c_i) < c_l$ and $x_2(c_i) < c_r$.
    \end{enumerate}
\end{enumerate}
First, we assume that $x_1(c_l) < c_l$. We observe that for $c_i=c_l$, we cannot be in case 2(a), as it is assumed that $x_1(c_l)< c_l$. Now assume that there does not exist any $c_i \in [c_l,c_r)$ such that $c_i$ satisfies the conditions of case 2(a). Then it must be that for all $c_i \in [c_l,c_r)$, $c_i$ satisfies one of the other four cases. Note that for any of the other four cases, a marginal increase to the right either leads to a constant utility, or lower utility. To elaborate on this, if $c_i$ satisfies the conditions of case 1, then Lemma \ref{x2decreasing} implies that a marginal increase to the right $c_i'>c_i$ leads to $s_i(c_i')<s_i(c_i)$. If $c_i$ satisfies the conditions of 2(b) or 2(c), then Lemma \ref{closeconstantutility} and \ref{farconstantutility} imply that a marginal increase $c_i'>c_i$ leads to $s_i(c_i')=s_i(c_i)$. Lastly, if it satisfies the conditions of case 2(d), then Lemma \ref{decreasingutility} implies that a marginal increase to the right $c_i'>c_i$ leads to $s_i(c_i')<s_i(c_i)$. Hence, for all $c_i \in [c_l,c_r)$, a marginal increase to the right leads to a lower or same utility as before. This implies that an optimal choice in $[c_i,c_r]$ is given by $c_i=c_l$, which yields a lower utility than $c_i^{OPT}$. 
\newline
\newline
Finally, assume that there exists a $c_i \in (c_l,c_r)$ such that it satisfies the conditions of case 2(a). Let $c_i^*$ denote the smallest choice in $(c_l,c_r)$ that satisfies the conditions of case 2(a). Then all choices in $[c_l,c_i^*)$ do not satisfy the conditions of case 2(a), and similarly to the paragraph before, this implies that any marginal increase to the right leads to a lower or constant utility. Hence, an optimal choice in $[c_l,c_i^*)$ is the choice $c_l$. Now consider the choice $c_i^*$. Note that any choice in $[c_i^*,c_r)$ then satisfies case 2(a), as $x_1$ is increasing on the interval $[c_l,c_r]$ and lemma \ref{x2increasingcrlarge} implies that $x_2$ is increasing on $[c_i^*,c_r]$. Lemma \ref{increasingutility} then implies that for all $c_i' \in [c_i^*,c_r)$ and $c_i'' \in (c_i',c_r]$, we have that $s_i(c_i'')> s_i(c_i')$. This implies that the optimal choice in $[c_i^*,c_r]$ is $c_i=c_r$. Hence, if $c_l > 1-c_r$, then we have that $c_i^{OPT}$ yields a greater market share than $c_l$, and Lemma \ref{LEFTGOAT} implies that $c_i^{OPT}$ yields a greater utility than any choice in $[c_r,1]$. Hence, the unique optimal choice is $c_i^{OPT}$. If $c_l=1-c_r$, then by symmetry, positioning at $c_l$ and $c_r$ yield the same market share. As $c_i^{OPT}$ yields a greater market share than $c_l$, the optimal choice of firm $i$ is $c_i^{OPT}$.
\newline
\newline
Now assume that $x_1(c_l)=c_l=c_i^{OPT}$. Similarly to the case before, we note that if there does not exist any $c_i \in [c_l,c_r)$ such that $c_i$ satisfies the conditions of case 2(a), then the choice $c_i=c_l$ is an optimal choice as the market share of firm $i$ will not increase on the interval $[c_l,c_r]$.
\newline
\newline
Now assume that there exists a $c_i \in (c_l,c_r)$ such that it satisfies the conditions of case 2(a). Note that as $c_l=x_1(c_l)$, Lemma \ref{x2cl} implies that $x_2(c_l)< c_r$ and we are hence in case 2(b). Let $c_i^*$ denote the smallest choice in $(c_l,c_r)$ that satisfies the conditions of case 2(a). Then all choices in $[c_l,c_i^*)$ do not satisfy the conditions of case 2(a), and similarly to the paragraph before, this implies that any marginal increase to the right leads to a lower or constant utility. Hence, an optimal choice in $[c_l,c_i^*)$ is the choice $c_l$. Now consider the choice $c_i^*$. Note that any choice in $[c_i^*,c_r)$ then satisfies case 2(a), as $x_1$ is increasing on the interval $[c_l,c_r]$ and lemma \ref{x2increasingcrlarge} implies that $x_2$ is increasing on $[c_i^*,c_r]$. Lemma \ref{increasingutility} then implies that for all $c_i' \in [c_i^*,c_r)$ and $c_i'' \in (c_i',c_r]$, we have that $s_i(c_i'')> s_i(c_i')$. This implies that the optimal choice in $[c_i^*,c_r]$ is $c_i=c_r$. Hence, if $c_l > 1-c_r$, then Lemma \ref{LEFTGOAT} implies that $c_i^{OPT}$ yields a greater utility than any choice in $[c_r,1]$. Hence, an optimal choice is $c_i^{OPT}$. If $c_l=1-c_r$, then by symmetry, positioning at $c_l$ and $c_r$ yield the same market share, and both are an optimal choice.
\end{proof}

\begin{lemma}\label{nodig}
Consider $(c_l,c_r)$ and let $c_l \geq 1-c_r$ and let $x_1(c_l) \geq c_l$. Then $x_1(c_r)> c_l$ and $x_2(c_r) \leq c_r$ .
\end{lemma}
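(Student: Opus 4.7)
The plan is to solve the indifference equations in closed form at $c_i = c_l$ and at $c_i = c_r$, then reduce the lemma to two algebraic inequalities in $c_l, c_r, a$ that fall out of the two hypotheses by direct cross-multiplication.

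For $c_i = c_l$ the left indifference collapses to $x_2(c_l) = 2 x_1(c_l)$ (the two $|c_l - x_1|$ terms cancel). Since $x_1(c_l) \geq c_l$ by hypothesis and $x_2(c_l) < c_r$ by Lemma \ref{x2cl}, the right indifference then reads $(x_2 - c_l) + a(x_2 - x_1) = (c_r - x_2) + a(1 - x_2)$, and after substituting $x_2 = 2 x_1$ yields $x_1(c_l) = (a + c_l + c_r)/(4 + 3a)$. The hypothesis $x_1(c_l) \geq c_l$ is therefore equivalent to $c_l \leq (a + c_r)/(3 + 3a)$. Combining this with $c_l \geq 1 - c_r$ gives the key consequence $c_r \geq (3 + 2a)/(4 + 3a)$.

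For $c_i = c_r$, symmetrically, the right indifference collapses to $x_1(c_r) = 2 x_2(c_r) - 1$. To resolve the left indifference I would split on the sign of $x_1(c_r) - c_l$. Assuming $x_1(c_r) \leq c_l$ yields $x_1(c_r) = 1/3 + 2(c_r - c_l)/(3a)$, and self-consistency forces $c_l \geq (a + 2c_r)/(2 + 3a)$. A one-line cross-multiplication gives $(a + 2c_r)(3 + 3a) - (a + c_r)(2 + 3a) = a + (3a + 4) c_r > 0$, so $(a + 2c_r)/(2 + 3a) > (a + c_r)/(3 + 3a) \geq c_l$, a contradiction. Hence $x_1(c_r) > c_l$, the correct closed form is $x_1(c_r) = (a + 2c_l + 2c_r)/(4 + 3a)$, and the strict inequality $x_1(c_r) > c_l$ is the same cross-multiplication read in reverse.

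It remains to verify $x_2(c_r) \leq c_r$. Using $x_2(c_r) = (x_1(c_r) + 1)/2 = (c_l + c_r + 2 + 2a)/(4 + 3a)$, this becomes $(3+3a) c_r - c_l \geq 2 + 2a$. Substituting the upper bound $c_l \leq (a + c_r)/(3 + 3a)$ and clearing denominators reduces it to $(8 + 18a + 9a^2) c_r \geq 6 + 13a + 6a^2$; the factorizations $8 + 18a + 9a^2 = (2 + 3a)(4 + 3a)$ and $6 + 13a + 6a^2 = (2 + 3a)(3 + 2a)$ show this is exactly $c_r \geq (3 + 2a)/(4 + 3a)$, the consequence of the two hypotheses already established in the first step. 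The main obstacle is recognizing this algebraic coincidence: both hypotheses are needed in combination, and they produce precisely the cutoff on $c_r$ demanded by the $x_2(c_r) \leq c_r$ inequality; without it, the algebra would not close.
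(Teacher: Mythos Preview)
Your proof is correct and follows essentially the same route as the paper's. Both arguments first extract from $x_1(c_l)\geq c_l$ the bound $c_l\leq (a+c_r)/(3+3a)$ and combine it with $c_l\geq 1-c_r$ to obtain the key threshold $c_r\geq(3+2a)/(4+3a)$; both then reduce the two conclusions to the same pair of algebraic checks, namely $(a+2c_r)/(2+3a)>(a+c_r)/(3+3a)$ and the factorization $(8+18a+9a^2)=(2+3a)(4+3a)$, $(6+13a+6a^2)=(2+3a)(3+2a)$. The only cosmetic difference is that the paper invokes the left--right symmetry of the model to transport the computation at $c_i=c_r$ to one at $c_i'=1-c_r$ for the mirrored belief $(1-c_r,1-c_l)$, whereas you solve the system at $c_i=c_r$ directly and do an explicit case split on the sign of $x_1(c_r)-c_l$; after relabeling, the formulas and inequalities coincide line for line. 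Your explicit elimination of the $x_1(c_r)\leq c_l$ branch is if anything slightly more transparent than the paper's consistency check, and you correctly rely on the uniqueness of the equilibrium partition (established in Section~2) to conclude that ruling out that branch forces $x_1(c_r)>c_l$.
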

\begin{proof}
Lemma \ref{x2cl} implies that $x_2(c_l) \leq c_r$. Hence, we have that $x_1(c_l) \geq c_l$ and $x_2(c_l) \leq c_r$. We can then compute the position of the indifferent consumers $x_1(c_l)$ and $x_2(c_l)$. As $c_i=c_l$, we have that $ax_1(c_l)=a(x_2(c_l)-x_1(c_l))$, which implies that $x_2(c_l)=2x_1(c_l)$. Furthermore, a consumer at $x_2(c_l)$ is indifferent between $r$ and $l$, which implies that $x_2(c_l)-c_l+a(x_1(c_l))=c_r-x_2(c_l)+a(1-x_2(c_l))$. We can now use that $x_2(c_l)=2x_1(c_l)$ and solve for $x_1(c_l)$ and $x_2(c_l)$, which leads to 
\begin{equation}\label{nodig1}
x_1(c_l)=\frac{c_l+c_r+a}{4+3a} \geq c_l
\end{equation}
and 
\begin{equation}\label{nodig2}
x_2(c_l)=\frac{2c_l+2c_r+2a}{4+3a} \leq c_r
\end{equation}
Instead of proving that $x_2(c_r) \leq c_r$ and $x_1(c_r)>c_l$, we will consider the belief $(1-c_r,1-c_l)$, and prove that $x_1'(1-c_r)\geq 1-c_r$ and $x_2'(1-c_r)<1-c_l$. Note that when $x_1'(1-c_r)\geq 1-c_r$ and $x_2'(1-c_r)<1-c_l$, we can can denote the exact position of these two indifferent consumers using the same procedure as before. This leads to 
\begin{equation}\label{nodig3}
x_1'(1-c_r)=\frac{2-c_l-c_r+a}{4+3a}
\end{equation}
and
\begin{equation}\label{nodig4}
x_2'(1-c_r)=\frac{4-2c_l-2c_r+2a}{4+3a}.
\end{equation}
We will now prove that $\frac{2-c_l-c_r+a}{4+3a} \geq 1-c_r$, which is equivalent to proving that $c_l \leq (1+a)(3c_r-2)$. We will do this by proving the following:
$$1-c_r \leq c_l \leq \frac{c_r+a}{3+3a} \leq (1+a)(3c_r-2)$$
Note that the first inequality is true by assumption, and (\ref{nodig1}) implies that the second inequality is true. As $1-c_r \leq c_l \leq \frac{c_r+a}{3a+3}$, it must be that $1-c_r \leq  \frac{c_r+a}{3a+3}$, which implies that 
\begin{equation}\label{nodig5}
c_r \geq \frac{2a+3}{3a+4}.
\end{equation}
We will now prove the third inequality. To prove this, if
$$\frac{c_r+a}{3a+3} \leq  (1+a)(3c_r-2),$$ then this can be rewritten to 
$$c_r+a \leq 3(1+a)^2(3c_r-2),$$ which yields
$$c_r+a \leq 9a^2c_r+18ac_r+9c_r-6a^2-12a-6,$$ which implies that
$$9a^2c_r+18ac_r+8c_r \geq 6a^2+13a+6,$$ which finally leads to 
$$c_r \geq \frac{ 6a^2+13a+6}{9a^2+18a+8}=\frac{(2a+3)(3a+2)}{(3a+4)(3a+2)}=\frac{2a+3}{3a+4}.$$ As (\ref{nodig5}) shows that $c_r \geq \frac{2a+3}{3a+4}$, this proves the third inequality. 
\newline
\newline
We still need to prove that $x_2'(1-c_r)=\frac{4-2c_l-2c_r+2a}{4+3a} < 1-c_l$, which can be rewritten to having to prove that $c_l < \frac{a+2c_r}{3a+2}$. However, note that $c_l \leq \frac{c_r+a}{3a+3} < \frac{a+2c_r}{3a+2}$, which completes the proof.

\end{proof}

\begin{lemma}\label{x1lowerthanc_iproperties}
Consider $(c_l,c_r)$ with $c_l \geq 1-c_r$. Let $c_i \in [c_l,x_1(c_i))$ and consider a $c_i'$ that satisfies $c_i' \in (c_i,x_1(c_i')]$ . Then $x_1(c_i') < x_1(c_i)$ and $x_2(c_i') > x_2(c_i)$.
\end{lemma}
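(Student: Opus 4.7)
The plan is to exploit the algebraic structure of the indifference conditions within the regime $c_l \leq c_i \leq x_1(c_i)$, and then extend the conclusion across $[c_i, c_i']$ by a continuity argument.

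First I would simplify the $x_1$-indifference condition. In the regime $c_l \leq c_i \leq x_1(c_i)$, both $|c_l - x_1|$ and $|c_i - x_1|$ drop their signs, giving $a(2 x_1 - x_2) = c_l - c_i$. Because the hypothesis $c_l \leq c_i' \leq x_1(c_i')$ places us in the same regime under the belief $c_i'$, the same manipulation yields $a(2 x_1' - x_2') = c_l - c_i'$. Subtracting these identities produces the key relation
\[
a(x_2' - x_2) = 2 a (x_1' - x_1) + (c_i' - c_i). \qquad (\star)
\]
Next, I would combine $(\star)$ with the $x_2$-indifference condition, whose form depends on whether $x_2 < c_r$ (call this case A) or $x_2 \geq c_r$ (case B), to obtain explicit linear formulas for $x_1$ and $x_2$ as functions of $c_i$ in each case. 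A direct computation shows $\partial x_1/\partial c_i < 0$ and $\partial x_2/\partial c_i > 0$ in both cases. Hence within any single case, $x_1$ is strictly decreasing and $x_2$ strictly increasing in $c_i$, which already yields the required strict inequalities $x_1(c_i') < x_1(c_i)$ and $x_2(c_i') > x_2(c_i)$. If $c_i$ falls in case A while $c_i'$ falls in case B (or vice versa), continuity of the market shares in $c_i$ (Kohlberg) supplies an intermediate $c_i^{*} \in (c_i, c_i')$ with $x_2(c_i^{*}) = c_r$, and the monotonicity applied to $[c_i, c_i^{*}]$ and $[c_i^{*}, c_i']$ chains across the boundary to give the same conclusion.

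The main obstacle is verifying that every intermediate $\tilde c_i \in [c_i, c_i']$ itself lies in the regime $\tilde c_i \leq x_1(\tilde c_i)$, so that the formulas above remain valid throughout. Since $\partial x_1/\partial c_i < 0$ in that regime, the gap $x_1(\tilde c_i) - \tilde c_i$ strictly decreases as $\tilde c_i$ grows. I would argue by contradiction: suppose the gap first reaches zero at some $\tilde c_i^{*} < c_i'$. Just beyond $\tilde c_i^{*}$ we would enter the regime $\tilde c_i \geq x_1(\tilde c_i)$, where a short computation (or Lemma \ref{x1increasing} applied locally) shows $\partial x_1/\partial c_i \in (0,1)$. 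Consequently the gap $\tilde c_i - x_1(\tilde c_i)$ becomes strictly increasing and stays strictly positive all the way to $c_i'$, which gives $c_i' > x_1(c_i')$ and contradicts the hypothesis $c_i' \leq x_1(c_i')$. With regime preservation established, the monotonicity from the previous paragraph delivers both $x_1(c_i') < x_1(c_i)$ and $x_2(c_i') > x_2(c_i)$.
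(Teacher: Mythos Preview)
Your argument is correct and takes a genuinely different route from the paper. The paper argues both conclusions purely by contradiction: it assumes $x_1'\ge x_1$, compares the costs of the consumer at $x_1'$ under $c_i$ and under $c_i'$ (keeping the absolute values unresolved so that no case split on the position of $x_2$ relative to $c_r$ is needed), and chains inequalities until it forces $1-x_2'>1-x_2$ and simultaneously $1-x_2'<1-x_2$; it then repeats the trick for $x_2$. Your approach instead resolves the absolute values inside the regime $c_l\le c_i\le x_1(c_i)$, obtains the linear identity $a(2x_1-x_2)=c_l-c_i$, combines it with the $x_2$-indifference in the two sub-cases $x_2\lessgtr c_r$, and reads off $\partial x_1/\partial c_i<0$ and $\partial x_2/\partial c_i>0$ explicitly. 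This buys you exact rates of change (for instance $\partial x_1/\partial c_i=-(2+a)/\bigl(a(4+3a)\bigr)$ in the sub-case $x_2<c_r$), at the price of having to manage the A/B boundary and the regime-preservation step; the paper's inequality chain avoids both but gives no quantitative information.

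One small sharpening: in your regime-preservation paragraph, what you need once the gap first hits zero at $\tilde c_i^{*}$ is not merely that $x_1$ becomes increasing (which is what Lemma~\ref{x1increasing} says) but that $\tilde c_i-x_1(\tilde c_i)$ stays strictly positive all the way to $c_i'$. That is exactly the content of Lemma~\ref{c1'morethanx_1'}, so citing it (or carrying out the ``short computation'' you mention, which indeed gives $\partial x_1/\partial c_i=1/(2+a)\in(0,1)$ in the relevant sub-case) closes the step cleanly. Also, the crossing ``vice versa'' cannot occur: since $x_2$ is strictly increasing throughout the regime, only the transition A~$\to$~B is possible.
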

\begin{proof}
By contradiction, assume that $x_1(c_i') \geq x_1(c_i)$. For ease of reading, let $x_1(c_i)=x_1$ and $x_1(c_i')=x_1'$ Consider the choice $c_i$, and the consumer at $x_1'$. As $x_1' \geq x_1$, this consumer prefers $i$ over $l$, which implies that $x_1'-c_i+a(x_2-x_1) \leq x_1'-c_l+ax_1$, which can be rewritten to 
\begin{equation}\label{lowerthanx1line1}
a(x_2-x_1) \leq c_i-c_l+ax_1 
\end{equation}
Now consider the choice $c_i'$ and the consumer at $x_1'$. A consumer at $x_1'$ is indifferent between firm $i$ and $l$, which implies that $x_1'-c_i'+a(x_2'-x_1')=x_1'-c_l+ax_1'$ and can be rewritten to 
\begin{equation}\label{lowerthanx1line2}
a(x_2'-x_1')=ax_1'+(c_i'-c_l) 
\end{equation}
But then we have that 
$$a(x_2'-x_1')\underset{(\ref{lowerthanx1line2})}{=}ax_1'+(c_i'-c_l) > ax_1+(c_i-c_l) \underset{(\ref{lowerthanx1line1})}{\geq} a(x_2-x_1),$$
which implies that $x_2'>x_2$, which in turn implies that $1-x_2'< 1-x_2$. Now consider the choice $c_i$ and the consumer at $x_2'$. As $x_2'>x_2$, a consumer at $x_2'$ prefers firm $r$ over firm $i$, which implies that $\lvert x_2'-c_i \rvert + a(1-x_2) \leq  x_2'-c_i+a(x_2-x_1),$ which can be rewritten to 
\begin{equation}\label{lowerthanx1line3}
a(1-x_2) \leq x_2'-c_i-\lvert x_2'-c_i \rvert +a(x_2-x_1)
\end{equation}
Now consider the choice $c_i'$ and a consumer at $x_2'$, who is indifferent between firm $i$ and firm $r$, which implies that $\lvert x_2'-c_r \rvert +a(1-x_2') = x_2'-c_i'+a(x_2'-x_1')$, which can be rewritten to
\begin{equation}\label{lowerthanx1line4}
a(1-x_2')= x_2'-c_i'-\lvert x_2'-c_r \rvert +a(x_2'-x_1')
\end{equation}
But then we have that
$$a(1-x_2')\underset{(\ref{lowerthanx1line4})}{=}x_2'-c_i'-\lvert x_2'-c_r \rvert +a(x_2'-x_1') > x_2'-c_i-\lvert x_2'-c_i \rvert +a(x_2-x_1) \underset{(\ref{lowerthanx1line3})}{\geq} a(1-x_2),$$
which is a contradiction. Hence, we have proven that $x_1' < x_1$. We will now prove that $x_2' > x_2$. By contradiction, assume that $x_2' \leq x_2$, which implies that $1-x_2' \geq 1-x_2$. Consider the choice $c_i$ and the consumer at $x_1$, who is indifferent between $i$ and $l$. Hence for this consumer we have $x_1-c_i+a(x_2-x_1)=x_1-c_l+ax_1$, which can be rewritten to 
\begin{equation}\label{lowerthanx1line5}
ax_1=a(x_2-x_1)-(c_i-c_l)
\end{equation}
Similarly, for the choice $c_i$, for the consumer $x_2$ we have that $x_2-c_i+a(x_2-x_1)=\lvert c_r-x_2 \rvert +a(1-x_2)$, which can be rewritten to 
\begin{equation}\label{lowerthanx1line6}
a(x_2-x_1)=\lvert c_r-x_2 \rvert -(x_2-c_i) +a(1-x_2)
\end{equation}
Now consider the choice $c_i'$, as $x_1'<x_1$, a consumer at $x_1$ prefers to buy from firm $i$ over $l$. This leads to $x_1-c_i'+a(x_2'-x_1') \leq x_1-c_l+ax_1'$, which can be rewritten to 
\begin{equation}\label{lowerthanx1line6}
a(x_2'-x_1')\leq c_i'-c_l+ax_1'
\end{equation}
For the choice $c_i'$, as $x_2'\leq x_2$ by assumption, a consumer at $x_2$ prefers $r$ over $i$. This leads to $x_2-c_i'+a(x_2'-x_1') \geq \lvert c_r-x_2 \rvert +a(1-x_2')$, which can be rewritten to 
\begin{equation}\label{lowerthanx1line7}
a(x_2'-x_1') \geq \lvert c_r-x_2 \rvert -(x_2-c_i')+a(1-x_2')
\end{equation}
We have that (\ref{lowerthanx1line7}) and (\ref{lowerthanx1line6}) together imply that $\lvert c_r-x_2 \rvert -(x_2-c_i')+a(1-x_2') \leq c_i'-c_l+ax_1'$, which can be rewritten to
$$a(1-x_2') \leq x_2-c_l -\lvert c_r-x_2 \rvert +ax_1' <  x_2-c_l -\lvert c_r-x_2 \rvert +ax_1 \underset{(\ref{lowerthanx1line5})}{=} x_2-c_i - \lvert c_r-x_2 \rvert + a(x_2-x_1)$$
$$\underset{(\ref{lowerthanx1line6})}{=}  x_2-c_i - \lvert c_r-x_2 \rvert + \lvert c_r-x_2 \rvert -(x_2-c_i) +a(1-x_2) =a(1-x_2),$$
which implies that $x_2' > x_2$. This contradicts our initial assumption of $x_2' \leq x_2$, and completes the proof. 
\end{proof}

\begin{lemma}\label{middleoptimalchoices}
Consider $(c_l,c_r)$ with $c_l \geq 1-c_r$ and $x_1(c_l) \geq c_l$. Then there exists a $c_i^* \in [c_l,c_r)$ that satisfies $x_1(c_i^*)=c_i^*$ and a $c_i^{**} \in (c_i^*,c_r]$ that satisfies $x_2(c_i^{**})=c_i^{**}$ . 
\end{lemma}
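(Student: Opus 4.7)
The plan is to obtain both existence claims by two applications of the Intermediate Value Theorem, leveraging the continuity of $x_1$ and $x_2$ in $c_i$ established by Kohlberg together with the endpoint bounds supplied by Lemma \ref{nodig}.

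For the first claim, I would study the continuous function $f(c_i) := x_1(c_i) - c_i$ on the interval $[c_l, c_r]$. At the left endpoint the hypothesis $x_1(c_l) \geq c_l$ gives $f(c_l) \geq 0$. At the right endpoint, Lemma \ref{nodig} yields $x_2(c_r) \leq c_r$, and the unique cost-minimizing choice function from Section 2 forces $x_1(c_r) < x_2(c_r) \leq c_r$ (recall $0 < x_1 < x_2 < 1$), so $f(c_r) < 0$. The IVT then delivers a point $c_i^* \in [c_l, c_r)$ with $x_1(c_i^*) = c_i^*$, and the strict upper bound $c_i^* < c_r$ is enforced by $f(c_r) < 0$.

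For the second claim, I would then consider $g(c_i) := x_2(c_i) - c_i$ on $[c_i^*, c_r]$. At the left endpoint, the strict ordering $x_2(c_i^*) > x_1(c_i^*) = c_i^*$ (again from $x_1 < x_2$) gives $g(c_i^*) > 0$, while Lemma \ref{nodig} supplies $g(c_r) = x_2(c_r) - c_r \leq 0$. A second application of the IVT then yields $c_i^{**} \in (c_i^*, c_r]$ with $x_2(c_i^{**}) = c_i^{**}$, where the strict lower bound $c_i^{**} > c_i^*$ is enforced by $g(c_i^*) > 0$.

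I do not expect any real obstacle here: all of the substantive work is already packaged in Lemma \ref{nodig} and in Kohlberg's continuity result. The only points requiring care are the strict inequalities at the endpoints, which are precisely what pins $c_i^*$ strictly below $c_r$ and $c_i^{**}$ strictly above $c_i^*$, as demanded by the statement.
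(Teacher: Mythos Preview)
Your argument is correct and in fact cleaner than the paper's. You reduce the whole lemma to two applications of the Intermediate Value Theorem, using only Lemma~\ref{nodig} (for the endpoint bounds $x_2(c_r)\le c_r$ and hence $x_1(c_r)<c_r$) together with the strict ordering $x_1<x_2$ and Kohlberg's continuity. The paper instead locates $c_i^*$ more finely inside $(c_l,x_1(c_l))$ by invoking the monotonicity Lemma~\ref{x1lowerthanc_iproperties}, and for $c_i^{**}$ it runs a contradiction argument (via Lemmas~\ref{x2increasingcrlarge} and \ref{x1lowerthanc_iproperties}) to establish the stronger fact that $x_2(c_i)<c_r$ for every $c_i\in(c_l,c_r)$ before appealing to continuity. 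That extra information is not needed for the statement as written---the sharper control the paper gains is packaged separately in Lemma~\ref{zondag} anyway---so your streamlined route loses nothing for the purposes of this lemma while avoiding several auxiliary results.
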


\begin{proof}
By Lemma \ref{x2cl}, we have that $x_2(c_l) < c_r$, which implies that $x_1(c_l) < c_r$. If $x_1(c_l)= c_l$, then we have that $c_i^*=c_l$. If $x_1(c_l)> c_l$, Lemma \ref{x1lowerthanc_iproperties} and continuity of $x_1$ implies that there exists a $c_i^* \in (c_l,x_1(c_l))$ that satisfies $x_1(c_i^*)=c_i^*$ . 
\newline
\newline
By Lemma \ref{x2cl}, we have that $x_2(c_l) < c_r$. Similarly, Lemma \ref{nodig} implies that $x_2(c_r) \leq c_r$. If $x_2(c_r)=c_r$, then we have $c_i^{**}=c_r$. Now assume that $x_2(c_r)< c_r$. By contradiction, assume that $x_2(c_i) \geq c_r$ for some $c_i \in (c_l,c_r)$. As $x_2(c_l) < c_r$ and $x_2(c_r) < c_r$, continuity of the market shares then implies that there exists a $c_i^{c_r} \in (c_l,c_r)$ that satisfies $x_2(c_i^{c_r})=c_r$. If $x_1(c_i^{c_r}) \leq c_i^{c_r}$, Lemma \ref{x2increasingcrlarge} implies that $x_2$ is increasing for all choices in $[c_i^{c_r},c_r]$. Similarly, if $x_1(c_i^{c_r}) > c_i^{c_r}$, then Lemma \ref{x1lowerthanc_iproperties} implies that $x_2$ is increasing on $[x_1(c_i^{c_r}),c_i^*]$. Hence we have that $x_2(c_r)> x_2(c_i^{c_r}) \geq c_r$, which is a contradiction, since we assume that $x_2(c_r)<c_r$. As we have proven that  $x_2(c_i) < c_r$ for all $c_i \in (c_l,c_r)$ and $x_2(c_i^*)>x_1(c_i^*)=c_i^*$, continuity of the market shares implies that there exists a $c_i^{**} \in (c_i^*,c_r)$ such that $x_2(c_i^{**})=c_i^{**}$.
\end{proof}

\begin{theorem}\label{lastdirection}
Consider $(c_l,c_r)$ with $c_l \geq 1-c_r$ and $x_1(c_l)\geq c_l$. Then the optimal choices of firm $i$ satisfy that $c_i^{OPT} \in [c_i^*,c_i^{**}]$, where $c_i^* \in [c_l,c_r)$ satisfies that $x_1(c_i^*)=c_i^*$ and $c_i^{**} \in (c_l,c_r]$ satisfies that $x_2(c_i^{**})=c_i^{**}$.
\end{theorem}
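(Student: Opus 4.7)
The plan is to partition $[0,1]$ into five regions determined by $c_l$, $c_i^*$, $c_i^{**}$, and $c_r$, and show that every optimal choice must lie in $[c_i^*,c_i^{**}]$. The existence of $c_i^*$ and $c_i^{**}$ with $c_l\le c_i^*\le c_i^{**}\le c_r$ is granted by Lemma~\ref{middleoptimalchoices}, whose proof also records two facts I will reuse: $x_2(c_i)<c_r$ for every $c_i\in(c_l,c_r)$, and $x_2(c_l)<c_r$.

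The heart of the argument is the claim that the market share is constant on $[c_i^*,c_i^{**}]$. At $c_i^*$ we have $c_l\le x_1(c_i^*)=c_i^*$ and $c_i^*<x_2(c_i^*)<c_r$, and for any $c_i^R\in(c_i^*,c_i^{**}]$ Lemma~\ref{handig} guarantees $c_i^R\le x_2(c_i^R)$. Applying Lemma~\ref{closeconstantutility} with $c_i=c_i^*$ and endpoint $c_i^R$ then yields $s_i(c_i^R)=s_i(c_i^*)$ for every such $c_i^R$.

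To rule out choices outside $[c_i^*,c_i^{**}]$ I would proceed region by region. On $[c_l,c_i^*)$, which is nonempty only when $x_1(c_l)>c_l$, continuity together with the defining property of $c_i^*$ gives $c_i<x_1(c_i)$ throughout, so Lemma~\ref{x1lowerthanc_iproperties} makes $x_1$ strictly decreasing and $x_2$ strictly increasing; hence $s_i=x_2-x_1$ is strictly less than $s_i(c_i^*)$ on this subinterval. On $(c_i^{**},c_r]$, since $x_2(c_i^{**})=c_i^{**}$, Lemma~\ref{x2decreasing} forces $x_2$ to strictly decrease and Lemma~\ref{x1increasing} forces $x_1$ to strictly increase; hence $s_i$ is strictly less than $s_i(c_i^{**})=s_i(c_i^*)$. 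For $[0,c_l)$ I invoke Lemma~\ref{kohlberg}, which reduces the optimum on $[0,c_l]$ to $c_l$ itself because $x_1(c_l)\ge c_l$ makes $\min[x_1,c_l]=c_l$; the market share at $c_l$ coincides with $s_i(c_i^*)$ when $c_i^*=c_l$ and is strictly smaller otherwise, since the preceding $[c_l,c_i^*)$ analysis shows $s_i$ strictly increases from $c_l$ up to $c_i^*$.

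Finally, for $(c_r,1]$ the case $c_l>1-c_r$ is handled directly by Lemma~\ref{LEFTGOAT}, which strictly dominates every choice in $[c_r,1]$ by the optimum in $[0,c_l]$. The boundary case $c_l=1-c_r$ requires a short symmetry argument: the belief is then invariant under $c\mapsto 1-c$, so $s_i(c_i,c_l,c_r)=s_i(1-c_i,c_l,c_r)$, which matches the $(c_r,1]$ analysis to the already completed $[0,c_l)$ analysis. The main obstacle I anticipate is the constancy step on $[c_i^*,c_i^{**}]$: a single application of Lemma~\ref{closeconstantutility} only compares two points, so I must lean on Lemmas~\ref{handig} and~\ref{x2increasingsmall} to certify that the regime hypotheses $c_l\le x_1(c_i)\le c_i$ and $c_i\le x_2(c_i)\le c_r$ persist for every $c_i$ in the interval so that the lemma can be legitimately chained.
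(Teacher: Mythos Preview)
Your proposal is correct and follows the same three-region decomposition of $[c_l,c_r]$ as the paper---increasing on $[c_l,c_i^*]$ via Lemma~\ref{x1lowerthanc_iproperties}, constant on $[c_i^*,c_i^{**}]$ via Lemma~\ref{closeconstantutility}, and decreasing on $[c_i^{**},c_r]$ via Lemmas~\ref{x1increasing} and~\ref{x2decreasing}. Your additional treatment of $[0,c_l)$ and $(c_r,1]$ via Lemmas~\ref{kohlberg} and~\ref{LEFTGOAT} is in fact more complete than the paper's own proof, which tacitly restricts attention to $[c_l,c_r]$; and your worry about ``chaining'' Lemma~\ref{closeconstantutility} is unnecessary, since a single application with $c_i=c_i^*$ and $c_i^R=c_i^{**}$ already delivers constancy on the whole interval $(c_i^*,c_i^{**}]$.
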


\begin{proof}
Lemma \ref{x1lowerthanc_iproperties} implies $x_1$ is decreasing and $x_2$ is increasing on the interval $[c_l,c_i^*]$. Hence, firm $i'$'s market share is increasing on this interval as well. Lemma \ref{closeconstantutility} implies that firm 1's market share is constant on the interval $[c_i^*,c_i^{**}]$. Finally, Lemma \ref{x2decreasing} and \ref{x1increasing} implies that firm $i$'s market share is decreasing on the interval $[c_i^{**},c_r]$. Hence, the optimal choices of firm $i$ are the choices in the interval $[c_i^*,c_i^{**}]$. 
\end{proof}

\begin{lemma}\label{zondag}
Let $x_1(c_l) \geq c_l$, then $x_2(c_i) < c_r$ for all $c_i \in [c_l,c_r)$
\end{lemma}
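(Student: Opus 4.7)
The plan is to proceed by contradiction, mirroring the argument embedded inside the proof of Lemma \ref{middleoptimalchoices}. Suppose there exists a $\tilde c \in [c_l, c_r)$ with $x_2(\tilde c) \geq c_r$. Lemma \ref{x2cl} (invoked with the standing assumption $c_l \geq 1-c_r$, implicit in this part of the appendix) gives $x_2(c_l) < c_r$, so $\tilde c > c_l$. Continuity of $x_2$ in $c_i$ (established by \citeA{kohlberg1983equilibrium}) combined with the intermediate value theorem then yields a smallest $\hat c \in (c_l, \tilde c]$ with $x_2(\hat c) = c_r$. The aim from here is to show that $x_2$ is strictly increasing on a right neighborhood of $\hat c$ that extends all the way to $c_r$, so that $x_2(c_r) > c_r$, contradicting Lemma \ref{nodig}.

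Next I would split into two cases according to the position of $x_1(\hat c)$ relative to $\hat c$. If $x_1(\hat c) \leq \hat c$, then Lemma \ref{x2increasingcrlarge} applies directly at $\hat c$: its hypotheses $\hat c \in [c_l, c_r)$, $x_1(\hat c) \leq \hat c$, and $x_2(\hat c) = c_r \geq c_r$ are all in place, so $x_2(c_i') > x_2(\hat c) = c_r$ for every $c_i' \in (\hat c, c_r]$, and in particular $x_2(c_r) > c_r$, which contradicts Lemma \ref{nodig}.

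If instead $x_1(\hat c) > \hat c$, I would invoke Lemma \ref{middleoptimalchoices} to produce $c_i^* \in [c_l, c_r)$ with $x_1(c_i^*) = c_i^*$, and then apply Lemma \ref{x1lowerthanc_iproperties} with $c_i = \hat c$ and $c_i' = c_i^*$; the required condition $c_i^* \in (\hat c, x_1(c_i^*)] = (\hat c, c_i^*]$ holds once one verifies $c_i^* > \hat c$, which follows because $x_1$ must strictly decrease from $x_1(\hat c) > \hat c$ to $x_1(c_i^*) = c_i^*$. This yields $x_2(c_i^*) > x_2(\hat c) = c_r$. At $c_i^*$ we now have $x_1(c_i^*) = c_i^* \leq c_i^*$ and $x_2(c_i^*) > c_r$, so Lemma \ref{x2increasingcrlarge} applies and gives $x_2(c_r) > x_2(c_i^*) > c_r$, again contradicting Lemma \ref{nodig}.

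The hard part is the second case: the argument must pass through the ``corner'' where the graph of $x_1(c_i)$ crosses the diagonal, and splice the monotonicity given by Lemma \ref{x1lowerthanc_iproperties} (valid on the regime $c_i \leq x_1(c_i)$) with the monotonicity given by Lemma \ref{x2increasingcrlarge} (valid once $x_1(c_i) \leq c_i$). The gluing works only because Lemma \ref{x1lowerthanc_iproperties} delivers the \emph{strict} inequality $x_2(c_i^*) > c_r$ rather than $\geq c_r$, which is precisely what is needed to make the hypothesis $x_2(c_i^*) \geq c_r$ of Lemma \ref{x2increasingcrlarge} non-degenerate and preserve a strict increase all the way from $c_i^*$ to $c_r$.
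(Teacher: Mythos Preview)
Your argument is correct and follows essentially the same route as the paper's own proof: locate a point where $x_2=c_r$, then use Lemma~\ref{x2increasingcrlarge} (in the regime $x_1\leq c_i$) or Lemma~\ref{x1lowerthanc_iproperties} (in the regime $x_1>c_i$) to push $x_2$ strictly above $c_r$ at $c_r$, contradicting Lemma~\ref{nodig}. Your treatment of the second case is in fact more explicit than the paper's, which is terse (and even has a backwards-written interval) at that step; invoking Lemma~\ref{middleoptimalchoices} to supply the crossing point $c_i^*$ is a clean way to organise the splice.

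One small point: your justification that $c_i^*>\hat c$ (``$x_1$ must strictly decrease from $x_1(\hat c)>\hat c$ to $x_1(c_i^*)=c_i^*$'') presupposes the direction of travel. The cleanest way to nail this down is via Lemma~\ref{c1'morethanx_1'}: if $c_i^*\leq \hat c$, then since $x_1(c_i^*)=c_i^*\leq c_i^*$, that lemma forces $x_1(\hat c)<\hat c$, contradicting the Case~2 hypothesis. This is a one-line fix and does not affect the structure of your proof.
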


\begin{proof}
By Lemma \ref{x2cl}, we have that $x_2(c_l) < c_r$. Similarly, Lemma \ref{nodig} implies that $x_2(c_r) \leq c_r$. By contradiction, assume that $x_2(c_i) \geq c_r$ for some $c_i \in (c_l,c_r)$. As $x_2(c_l) < c_r$ and $x_2(c_r) \leq c_r$, continuity of the market shares then implies that there exists a $c_i^{*} \in (c_l,c_r)$ that satisfies $x_2(c_i^{*})=c_r$. If $x_1(c_i^{*}) \leq c_i^{*}$, Lemma \ref{x2increasingcrlarge} implies that $x_2$ is increasing for all choices in $[c_i^{*},c_r]$. Similarly, if $x_1(c_i^{*}) > c_i^{*}$, then Lemma \ref{x1lowerthanc_iproperties} implies that $x_2$ is increasing on $[x_1(c_i^{*}),c_i^*]$. Hence we have that $x_2(c_r)> x_2(c_i^{*}) \geq c_r$, which is a contradiction, since we assume that $x_2(c_r)\leq c_r$.
\end{proof}

\subsection*{Optimal choices}
In this section, whenever we consider a belief $(c_l,c_r)$ we assume that $c_l \leq c_r$ and $c_l \geq 1-c_r$. Note that by symmetry, for any choice $c_i$ that is optimal for a belief $(c_l,c_r)$, the choice $1-c_i$ is optimal for the belief $(1-c_r,1-c_l)$.
\newline
\newline
First, consider any belief $(c_l,c_r)$ such that $x_1(c_l) \leq c_l$. Theorem \ref{Leftdirection} then implies that the optimal choice of firm $i$ is the choice $c_i$ that satisfies $x_1(c_i)=c_i$. The market share of firm $i$ can be computed if we know the position of $x_2(c_i)$. Note that there are three cases to distinguish. We consider $x_2(c_i) \leq c_l$, $c_l \leq x_2(c_i) \leq c_r$ and $x_2(c_i) \geq c_r$. From now on for ease of notation, we will write $x_1=x_1(c_i)$ and $x_2=x_2(c_i)$. 
\subsubsection*{Case 1: $x_1 \leq c_l$ and $x_2 \leq c_l$}
Consider the first case, where $x_2 \leq c_l$. We can then compute the position of the indifferent consumers $x_1$ and $x_2$ by solving 
$$ax_1=c_l-x_1+a(x_2-x_1) \ \text{and} \ c_l-x_2+a(x_2-x_1)=c_r-x_2+a(1-x_2).$$
This leads to $x_1=\frac{a+c_l+c_r}{2+3a}$ and $x_2=\frac{2ac_r+c_r-c_l-ac_l+a+2a^2}{2a+3a^2}$. So far, we have the following inequalities that should hold:
$$c_l \leq c_r, \  c_l \geq 1-c_r, \  c_l \geq x_1,\  c_l \geq x_2 \ \text{and} \ c_r \geq x_2.$$
As $c_l \geq x_2$ implies that $ c_l \geq x_1$ and $ c_r \geq x_2$, and filling in our found values of $x_2$ and solving for $c_l$ leads to 
$$c_l \leq c_r, \ c_l \geq \frac{2ac_r+c_r+a+2a^2}{1+3a+3a^2} \ \text{and} \ c_l \geq 1-c_r$$
As a result, it should hold that 
$$\max(1-c_r,\frac{2ac_r+c_r+a+2a^2}{1+3a+3a^2}) \leq c_l \leq c_r$$
We consider two cases, which are $\max(1-c_r,\frac{2ac_r+c_r+a+2a^2}{1+3a+3a^2})=1-c_r$ and $\max(1-c_r,\frac{2ac_r+c_r+a+2a^2}{1+3a+3a^2})=\frac{2ac_r+c_r+a+2a^2}{1+3a+3a^2}$. We have that $\max(1-c_r,\frac{2ac_r+c_r+a+2a^2}{1+3a+3a^2})=1-c_r$ if $c_r \leq \frac{1+a}{2+3a}$. Hence, for $c_r \leq \frac{1+a}{2+3a}$ we have that $1-c_r \leq c_l \leq c_r$. However, non-emptiness requires that $c_r \geq 1-c_r$, which implies that $c_r \geq \frac{1}{2} >  \frac{1+a}{2+3a}$. As a result, there is no feasible region when $c_r \leq \frac{1+a}{2+3a}$. Next, consider $c_r \geq \frac{1+a}{2+3a}$. Then we have that $\max(1-c_r,\frac{2ac_r+c_r+a+2a^2}{1+3a+3a^2})=\frac{2ac_r+c_r+a+2a^2}{1+3a+3a^2}$ and hence $\frac{2ac_r+c_r+a+2a^2}{1+3a+3a^2}\leq c_l \leq c_r$. Non-emptiness now requires that $c_r \geq \frac{2ac_r+c_r+a+2a^2}{1+3a+3a^2}$, which implies that $c_r \geq \frac{1+2a}{1+3a}$.
Hence, in this first case, we have that $c_r \in [\frac{1+2a}{1+3a},1]$ and $c_l \in [\frac{2ac_r+c_r+a+2a^2}{1+3a+3a^2},c_r]$ leads to an optimal choice of $\frac{a+c_l+c_r}{2+3a}$. 
\subsubsection*{Case 2: $x_1 \leq c_l$ and $c_l \leq x_2 \leq c_r$}
Consider the second case, where $c_l \leq x_2 \leq c_r$. We can then compute the position of the indifferent consumers $x_1$ and $x_2$ by solving 
$$ax_1=c_l-x_1+a(x_2-x_1) \ \text{and} \ x_2-c_l+a(x_2-x_1)=c_r-x_2+a(1-x_2).$$
This leads to $x_1=\frac{3ac_l+2c_l+ac_r+a^2}{2+6a+3a^2}$ and $x_2=\frac{3ac_l+c_l+2ac_r+c_r+a+2a^2}{2+6a+3a^2}$. So far, we have the following inequalities that should hold:
$$c_l \geq x_1, \ c_l \geq 1-c_r, \  x_2 \geq c_l, \ c_r \geq x_2, \ \text{and} \ c_l\leq c_r$$
Filling in the values for $x_1$ and $x_2$ and solving for $c_l$ leads to 
$$c_l \geq \frac{a+c_r}{3+3a}, \ c_l \geq 1-c_r, \ c_l \leq \frac{2 a c_r+c_r+a + 2a^2 }{1 + 3 a + 3 a^2},\ c_l \leq \frac{3a^2c_r+4ac_r+c_r-a-2a^2}{1+3a}, \ \text{and} \ c_l \leq c_r.$$
As a result, it should hold that
\begin{equation}\label{tweedecase}
 \max( \frac{a+c_r}{3+3a},1-c_r) \leq c_l \leq \min(\frac{2 a c_r+c_r+a + 2a^2 }{1 + 3 a + 3 a^2},\frac{3a^2c_r+4ac_r+c_r-a-2a^2}{1+3a},c_r)   
\end{equation}
We have that $\max( \frac{a+c_r}{3+3a},1-c_r)= 1-c_r$ if $c_r \leq \frac{3+2a}{4+3a}$ and 
$\max( \frac{a+c_r}{3+3a},1-c_r)= \frac{a+c_r}{3+3a}$ if $c_r \geq \frac{3+2a}{4+3a}$. Similarly, we have that $\min(\frac{2 a c_r+c_r+a + 2a^2 }{1 + 3 a + 3 a^2},\frac{3a^2c_r+4ac_r+c_r-a-2a^2}{1+3a},c_r)=\frac{3a^2c_r+4ac_r+c_r-a-2a^2}{1+3a}$ if $c_r \leq \frac{1+2a}{1+3a}$ and $\min(\frac{2 a c_r+c_r+a + 2a^2 }{1 + 3 a + 3 a^2},\frac{3a^2c_r+4ac_r+c_r-a-2a^2}{1+3a},c_r)=\frac{2 a c_r+c_r+a + 2a^2 }{1 + 3 a + 3 a^2}$ if $c_r \geq \frac{1+2a}{1+3a}$. It can be verified that $\frac{3+2a}{4+3a} < \frac{1+2a}{1+3a}$. Hence, we will now consider the cases $c_r \leq \frac{3+2a}{4+3a}$,$ \frac{3+2a}{4+3a} \leq c_r \leq \frac{1+2a}{1+3a}$ and $ \frac{1+2a}{1+3a} \leq c_r \leq 1$.
\newline
\newline
If $c_r \leq \frac{3+2a}{4+3a}$, then $\max(\frac{a+c_r}{3+3a},1-c_r)=1-c_r$ and $\min (\frac{3a^2c_r+4ac_r+c_r-a-2a^2}{1+3a},\frac{2 a c_r+c_r+a + 2a^2 }{1 + 3 a + 3 a^2},c_r)=\frac{3a^2c_r+4ac_r+c_r-a-2a^2}{1+3a}$. Hence, (\ref{tweedecase}) simplifies to $1-c_r \leq c_l \leq \frac{3a^2c_r+4ac_r+c_r-a-2a^2}{1+3a}$. Non emptiness then implies that $c_r \geq \frac{1+4a+2a^2}{2+7a+3a^2}$. Hence, for $c_r \in [\frac{1+4a+2a^2}{2+7a+3a^2},\frac{3+2a}{4+3a}]$ and $c_l \in [1-c_r,\frac{3a^2c_r+4ac_r+c_r-a-2a^2}{1+3a}]$, firm $i$'s optimal choice is $ \frac{3ac_l+2c_l+ac_r+a^2}{2+6a+3a^2}$.
\newline
\newline
If $ \frac{3+2a}{4+3a} \leq c_r \leq \frac{1+2a}{1+3a}$, then $\max(\frac{a+c_r}{3+3a},1-c_r)=\frac{a+c_r}{3+3a}$ and $\min (\frac{3a^2c_r+4ac_r+c_r-a-2a^2}{1+3a},\frac{2 a c_r+c_r+a + 2a^2 }{1 + 3 a + 3 a^2},c_r)=\frac{3a^2c_r+4ac_r+c_r-a-2a^2}{1+3a}$. Hence, (\ref{tweedecase}) simplifies to $\frac{a+c_r}{3+3a} \leq c_l \leq \frac{3a^2c_r+4ac_r+c_r-a-2a^2}{1+3a}$. Non emptiness then implies that $c_r \geq \frac{2a}{1+3a}$, which is satisfied since $c_r \geq \frac{3+2a}{4+3a}$ and $\frac{3+2a}{4+3a} > \frac{2a}{1+3a}$. Hence, for $c_r \in [\frac{3+2a}{4+3a},\frac{1+2a}{1+3a}]$ and $c_l \in [\frac{a+c_r}{3+3a} ,\frac{3a^2c_r+4ac_r+c_r-a-2a^2}{1+3a}]$, firm $i$'s optimal choice is $ \frac{3ac_l+2c_l+ac_r+a^2}{2+6a+3a^2}$.
\newline
\newline
If $ \frac{1+2a}{1+3a} \leq c_r \leq 1$, then $\max(\frac{a+c_r}{3+3a},1-c_r)=\frac{a+c_r}{3+3a}$ and $\min (\frac{3a^2c_r+4ac_r+c_r-a-2a^2}{1+3a},\frac{2 a c_r+c_r+a + 2a^2 }{1 + 3 a + 3 a^2},c_r)=\frac{2 a c_r+c_r+a + 2a^2 }{1 + 3 a + 3 a^2}$. Hence, (\ref{tweedecase}) simplifies to $\frac{a+c_r}{3+3a} \leq c_l \leq \frac{2 a c_r+c_r+a + 2a^2 }{1 + 3 a + 3 a^2}$. Non emptiness then implies that $c_r \geq -a$, which is satisfied since $c_r \geq 0$. Hence, for $c_r \in [\frac{1+2a}{1+3a},1]$ and $c_l \in [\frac{a+c_r}{3+3a} ,\frac{2 a c_r+c_r+a + 2a^2 }{1 + 3 a + 3 a^2}]$, firm $i$'s optimal choice is $ \frac{3ac_l+2c_l+ac_r+a^2}{2+6a+3a^2}$.
\subsubsection*{Case 3: $x_1 \leq c_l$ and $x_2 \geq c_r$}
Consider the third case, where $x_2 \geq c_r$. We can then compute the position of the indifferent consumers $x_1$ and $x_2$ by solving 
$$ax_1=c_l-x_1+a(x_2-x_1) \ \text{and} \ x_2-c_l+a(x_2-x_1)=x_2-c_r+a(1-x_2).$$
This leads to $x_1=\frac{a+3c_l-c_r}{2+3a}$ and $x_2=\frac{3ac_l+c_l-c_r-2ac_r+a+2a^2}{2a+3a^2}$. So far, we have the following inequalities that should hold:
$$c_l \geq 1-c_r, \ c_r \leq x_2 \ c_l \geq x_1, \ c_l \leq x_2, \ \text{and} \ c_l \leq c_r.$$
We have that $c_r \leq x_2$ implies that $ c_l \leq x_2$. Furthermore, how the inequality $c_l \geq x_1$ can be solved depends on the value of $a$. If $a < \frac{1}{3}$, filling in our found values of $x_1$ and $x_2$ and solving for $c_l$ then leads to
$$c_l \geq 1-c_r, \ c_l \geq \frac{3a^2c_r+4ac_r+c_r-a-2a^2}{1+3a}, \ c_l \leq \frac{c_r-a}{1-3a}, \ \text{and} \ c_l \leq c_r.$$
Similarly, if $a=\frac{1}{3}$, then we obtain that $c_l \geq x_1$ is equivalent to $c_l \geq \frac{\frac{1}{3}+3c_l-c_r}{3}$, which can be rewritten to $c_r \geq \frac{1}{3}$. As $1-c_r \leq c_l \leq c_r$, it must be that $c_r \geq \frac{1}{2}$, so the inequality $c_l \geq x_1$ holds. The remaining inequalities to check are 
$$c_l \geq 1-c_r, \ c_l \geq \frac{3a^2c_r+4ac_r+c_r-a-2a^2}{1+3a}, \ \text{and} \ c_l \leq c_r.$$
Lastly, if $a > \frac{1}{3}$, then
$$c_l \geq 1-c_r, \ c_l \geq \frac{3a^2c_r+4ac_r+c_r-a-2a^2}{1+3a}, \ c_l \geq \frac{a-c_r}{3a-1}, \ \text{and} \ c_l \leq c_r.$$
Hence, if $a < \frac{1}{3}$, it should hold that
\begin{equation}\label{lastigecase1}
\max(1-c_r,\frac{3a^2c_r+4ac_r+c_r-a-2a^2}{1+3a}) \leq c_l \leq \min(\frac{c_r-a}{1-3a},c_r).
\end{equation}
Similarly, if $a=\frac{1}{3}$, it should hold that 
\begin{equation}\label{lastigecase2}
\max(1-c_r,\frac{3a^2c_r+4ac_r+c_r-a-2a^2}{1+3a}) \leq c_l \leq c_r.    
\end{equation}
Lastly, if  $a > \frac{1}{3}$, then
\begin{equation}\label{lastigecase3}
\max(\frac{a-c_r}{3a-1},1-c_r,\frac{3a^2c_r+4ac_r+c_r-a-2a^2}{1+3a}) \leq c_l \leq c_r.
\end{equation}
Note that if $a< \frac{1}{3}$, then $\min( \frac{c_r-a}{1-3a},c_r)=c_r$ if $c_r \geq \frac{1}{3}$. As $1-c_r \leq c_l \leq c_r$, it must be that $c_r \geq \frac{1}{2}$, So we have that $\min( \frac{c_r-a}{1-3a},c_r)=c_r$ in this case. Similarly, it can be verified that for $a> \frac{1}{3}$ and $c_r \geq \frac{1}{2}$, that $\max (\frac{a-c_r}{3a-1},1-c_r,\frac{3a^2c_r+4ac_r+c_r-a-2a^2}{1+3a})=\max (1-c_r,\frac{3a^2c_r+4ac_r+c_r-a-2a^2}{1+3a})$. Hence, for any $a$ it should hold that 
$$\max(1-c_r,\frac{3a^2c_r+4ac_r+c_r-a-2a^2}{1+3a}) \leq c_l \leq c_r.$$
We have that $\max (1-c_r,\frac{3a^2c_r+4ac_r+c_r-a-2a^2}{1+3a})=1-c_r$ if $c_r \leq \frac{1+4a+2a^2}{2+7a+3a^2}$. Hence, let $c_r\leq \frac{1+4a+2a^2}{2+7a+3a^2}$, then we have that $1-c_r \leq c_l \leq c_r$. Non-emptiness implies that $c_r \geq \frac{1}{2}$. Hence, for $c_r \in [\frac{1}{2},\frac{1+4a+2a^2}{2+7a+3a^2}]$ and $c_l \in [1-c_r,c_r]$, the optimal choice of firm $i$ is $\frac{a+3c_l-c_r}{2+3a}$. Similarly, if $c_r \geq \frac{1+4a+2a^2}{2+7a+3a^2}$, then we have that $\frac{3a^2c_r+4ac_r+c_r-a-2a^2}{1+3a} \leq c_l \leq c_r$. Non-emptiness implies that $c_r \leq \frac{1+2a}{1+3a}$.  Hence, for $c_r \in [ \frac{1+4a+2a^2}{2+7a+3a^2}, \frac{1+2a}{1+3a}]$ and $c_l \in [\frac{3a^2c_r+4ac_r+c_r-a-2a^2}{1+3a},c_r]$, the optimal choice of firm $i$ is $\frac{a+3c_l-c_r}{2+3a}$.

\subsubsection*{Case 4: $x_1(c_l) \geq c_l$}
Now, we assume that $x_1(c_l) \geq c_l$. We will first compute the region of values of $c_l$ and $c_r$ for which it is satisfied that $x_1(c_l) \geq c_l$. After that, we will compute the region of optimal choices. As $x_1(c_l) \geq c_l$, Lemma \ref{x2cl} then implies that $x_2(c_l) < c_r$. Also, we note that $x_1(c_l) \geq c_l$ implies that $x_2(c_l) > c_l$. For ease of notation, let $x_1(c_l)=x_1$ and $x_2(c_l)=x_2$. We can compute the position of the indifferent consumers $x_1$ and $x_2$ by solving
$$x_1-c_l+ax_1=x_1-c_l+a(x_2-x_1) \ \text{and} \ x_2-c_l+a(x_2-x_1)=c_r-x_2+a(1-x_2)$$
This leads to $x_1=\frac{a+c_l+c_r}{4+3a}$ and $x_2=\frac{2(a+c_l+c_r)}{4+3a}$. Hence, it should hold that 
$$c_l \geq 1-c_r, c_l \leq c_r, x_1 \geq c_l, \ \text{and} \ c_r \geq x_2.$$
As Lemma \ref{x2cl} shows that $x_1 \geq c_l$ implies that $x_2 > c_r$, we disregard the inequality $c_r \geq x_2$ above. Filling in the values of $x_1$ and solving for $c_l$ leads to 
$$c_l \geq 1-c_r, c_l \leq c_r \ \text{and} \ c_l \leq \frac{a+c_r}{3+3a}$$
As a result, it should hold that 
\begin{equation}\label{middlecrclcases}
 1-c_r \leq c_l \leq \min(\frac{a+c_r}{3+3a},c_r)
\end{equation}
We have that $\min(\frac{a+c_r}{3+3a},c_r)=\frac{a+c_r}{3+3a}$ if $c_r \geq \frac{a}{2+3a}$. As $c_r \geq \frac{1}{2}$, (\ref{middlecrclcases}) can be simplified to 
$$1-c_r \leq c_l \leq \frac{a+c_r}{3+3a}.$$
Non emptiness then implies that $c_r \geq \frac{3+2a}{4+3a}$. Hence, for $c_r \in [\frac{3+2a}{4+3a},1]$ and $c_l \in [1-c_r,\frac{a+c_r}{3+3a}]$, it holds that $x_1(c_l) \geq c_l$.
\newline
\newline
Next, we compute the region of optimal choices. As $x_1(c_l) \geq c_l$, Theorem \ref{lastdirection} then implies that any choice $c_i \in [c_i^*,c_i^{**}]$ is an optimal choice. Here $c_i^* \in [c_l,c_r)$ satisfies that $x_1(c_i^*)=c_i^*$ and $c_i^{**} \in (c_l,c_r]$ satisfies that $x_2(c_i^{**})=c_i^{**}$. We consider $c_i^*$. We can compute the position of the indifferent consumers $x_1=x_1(c_i^*)$ and $x_2=x_2(c_i^*)$ by solving 
$$x_1-c_l+ax_1=a(x_2-x_1) \ \text{and} \ x_2-x_1+a(x_2-x_1)=c_r-x_2+a(1-x_2).$$
As $c_i^* \in [c_l,c_r)$, we have that $x_1=x_1(c_i^*)=c_i^* \geq c_l$ and Lemma \ref{zondag} implies that $x_2 \leq c_r$.
This leads to $x_1=c_i^*=\frac{2c_l+2ac_l+ac_r+a^2}{2+5a+3a^2}$ and $x_2=\frac{a + 2 a^2 + c_l + a c_l + c_r + 2 a c_r}{2 + 5 a + 3 a^2}$. 
\newline
\newline
Similarly, for the choice $c_i^{**}=x_2$, we compute the position of the indifferent consumers $x_1=x_1(c_i^{**})$ and $x_2(c_i^{**})$ by solving 
$$x_1-c_l+ax_1=x_2-x_1+a(x_2-x_1) \ \text{and} \ a(x_2-x_1)=c_r-x_2+a(1-x_2).$$
This leads to $x_1=\frac{a + a^2 + c_l + 2 a c_l + c_r + a c_r}{2+5a+3a^2}$ and $x_2=c_i^{**}=\frac{ac_l+2ac_r+2c_r+2a+2a^2}{2+5a+3a^2}$. To summarize, for  $c_r \in [\frac{3+2a}{4+3a},1]$ and $c_l \in [1-c_r, \frac{a+c_r}{3+3a}]$, the region of optimal choices of firm $i$ is given by $[c_i^*,c_i^{**}]=[\frac{2c_l+2ac_l+ac_r+a^2}{2+5a+3a^2},\frac{ac_l+2ac_r+2c_r+2a+2a^2}{2+5a+3a^2}]$.

\subsubsection*{Point Rationalizable choices of three firms}
\begin{figure}[ht]
    \centering
        \begin{tikzpicture} [scale=0.9]
        
        \fill[black!0!white]  (10,10) -- (10,8.57) -- (7.5,7.5);
        \fill[black!10!white]  (7.5,7.5) -- (10,8.57) -- (10,3.33333)--(7.143,2.857)--(5.83,4.167);
        \fill[black!30!white]  (5,5) -- (7.5,7.5) -- (5.83,4.167);
        \fill[black!50!white]  (7.143,2.857) -- (10,3.333333) -- (10,0);
        
        \draw[thick,->] (0,0) -- (10.5,0) node[anchor=north west] {$c_r$};
        \draw[thick,->] (0,0) -- (0,10.5) node[anchor= south east] {$c_l$};
        \draw[thick] (10,0) -- (5,5);
        \draw[thick] (5,5) -- (10,10);
        \draw[thick] (10,0) -- (10,10);
        \draw[thick] (0,10) -- (10,10);
        \draw[dashed] (5,5) -- (5,0);
        \draw[dashed] (5,5) -- (0,5);

        \draw[thick] (7.5,7.5) -- (10,8.57);
        \draw[thick] (5.83,4.167) -- (7.5,7.5);
        \draw[thick] (7.143,2.857) -- (10,3.333333);

        \node at (0,0) [anchor=north] {0};
        \node at (10,0) [anchor=north] {1};
        \node at (0,10) [anchor=east] {1};
        \node at (5,0) [anchor=north] {$\frac{1}{2}$};
        \node at (0,5) [anchor=east] {$\frac{1}{2}$};

        \node at (9.5,9) {\textbf{1}};
        \node at (8.5,6) {\textbf{2}};
        \node at (5.5,5) {\textbf{3}};
        \node at (9,2) {\textbf{4}};

        \draw[dashed] (7.5,7.5) -- (7.5,10) node[anchor=south] {$\frac{1+2a}{1+3a}$};
        \draw[dashed] (7.5,7.5) -- (0,7.5) node[anchor=east] {$\frac{1+2a}{1+3a}$};
        \draw[dashed] (5.83,4.167) -- (5.83,0) node[anchor=north] {$\frac{1+4a+2a^2}{2+7a+3a^2}$};
         \draw[dashed] (5.83,4.167) -- (0,4.167) node[anchor=east] {$\frac{1+3a+a^2}{2+7a+3a^2}$};
         \draw[dashed] (7.143,2.857) -- (7.143,0) node[anchor=north] {$\frac{3+2a}{4+3a}$};
         \draw[dashed] (7.143,2.857) -- (0,2.857) node[anchor=east] {$\frac{1+a}{4+3a}$};
          \node at (10,8.57) [anchor=west] {$\frac{1+3a+2a^2}{1+3a+3a^2}$};
           \node at (10,3.33333) [anchor=west] {$\frac{1}{3}$};

        \end{tikzpicture}
\caption{$a=1$}
\label{a=1}
\begin{tabular}{r@{: }l r@{: }l}
\textbf{1} & $c_i=\frac{c_l+c_r+a}{2+3a}$ & \textbf{2} & $  c_i=\frac{3ac_l+2c_l+ac_r+a^2}{2+6a+3a^2}$\\
\textbf{3}& $c_i= \frac{a+3c_l-c_r}{2+3a}$ & \textbf{4}& $c_i \in [\frac{2c_l+2ac_l+ac_r+a^2}{2+5a+3a^2},\frac{ac_l+2ac_r+2c_r+2a+2a^2}{2+5a+3a^2}]$
\end{tabular}
\label{legend a=1}
\end{figure}
This leads us to the reaction correspondence of firm $i$ for all $(c_l,c_r)$, where $c_l \leq c_r$ and $c_l \geq 1-c_r$:
\[   
c_i(c_l,c_r)=
     \begin{cases}
      \frac{a+c_l+c_r}{2+3a}  & \text{if} \ c_r \in [\frac{1+2a}{1+3a},1] \ \text{and} \ c_l \in [\frac{2ac_r+c_r+a+2a^2}{1+3a+3a^2},c_r]   \\
       \frac{3ac_l+2c_l+ac_r+a^2}{2+6a+3a^2}  & \text{if} \ c_r \in [\frac{1+4a+2a^2}{2+7a+3a^2},\frac{3+2a}{4+3a}] \ \text{and} \ c_l \in [1-c_r,\frac{4ac_r+3a^2c_r+c_r-a-2a^2}{1+3a}]   \\
     \frac{3ac_l+2c_l+ac_r+a^2}{2+6a+3a^2}  & \text{if} \ c_r \in [\frac{3+2a}{4+3a},\frac{1+2a}{1+3a}] \ \text{and} \ c_l \in [\frac{a+c_r}{3+3a},\frac{4ac_r+3a^2c_r+c_r-a-2a^2}{1+3a}]   \\
     \frac{3ac_l+2c_l+ac_r+a^2}{2+6a+3a^2}  & \text{if} \ c_r \in [\frac{1+2a}{1+3a},1] \ \text{and} \ c_l \in [\frac{a+c_r}{3+3a},\frac{c_r+2ac_r+a+2a^2}{1+3a+3a^2}]   \\
      \frac{a+3c_l-c_r}{2+3a}  & \text{if} \ c_r \in [\frac{1}{2},\frac{1+4a+2a^2}{2+7a+3a^2}] \ \text{and} \ c_l \in [1-c_r,c_r]   \\
        \frac{a+3c_l-c_r}{2+3a}  & \text{if} \ c_r \in [\frac{1+4a+2a^2}{2+7a+3a^2},\frac{1+2a}{1+3a}] \ \text{and} \ c_l \in [\frac{4ac_r+3a^2c_r+c_r-a-2a^2}{1+3a},c_r]   \\
        [\frac{2c_l+2ac_l+ac_r+a^2}{2+5a+3a^2},\frac{ac_l+2ac_r+2c_r+2a+2a^2}{2+5a+3a^2}]  & \text{if} \ c_r \in [\frac{3+2a}{4+3a},1] \ \text{and} \ c_l \in [1-c_r,\frac{a+c_r}{3+3a}]   \\
    
     \end{cases}
\]
Figure 8 shows a visual representation of this reaction correspondence of a firm $i$. The first three regions correspond to beliefs such that it is optimal for firm $i$ to position to the left of firm $l$. Region 4 correspond to the beliefs such that it is optimal for firm $i$ to position in between firm $l$ and $r$.
Note that by symmetry, for every choice $c_i$ that is optimal for some belief $(c_l,c_r)$, where $c_l \leq c_r$ and $c_l \geq 1-c_r$, the choice $1-c_i$ is optimal for the belief $(1-c_r,1-c_l)$. This leads us to our first result. 

\begin{lemma}\label{smallestchoice}
Consider $(c_l,c_r)$ with $c_l \leq c_r$ and $c_l \geq 1-c_r$. Then the smallest optimal choice for firm $i$ is $\frac{1+a}{4+3a}$, where $c_l= \frac{1+a}{4+3a}$ and $c_r=\frac{3+2a}{4+3a}$.
\end{lemma}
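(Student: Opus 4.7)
The plan is to exploit the reaction correspondence derived just before the lemma. Each of the seven formulas for the optimal $c_i$ (or the left endpoint $c_i^* = \frac{2(1+a)c_l + ac_r + a^2}{2+5a+3a^2}$ of the optimal interval in region 4) is linear in $(c_l,c_r)$, and each feasible subregion is a convex polyhedron. Hence the minimum of $c_i$ within each subregion is attained at a vertex, which reduces the whole problem to a finite vertex enumeration.

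First I would classify the monotonicity of each formula. The expressions in region 1 ($\frac{a+c_l+c_r}{2+3a}$), all three subregions of region 2 ($\frac{(3a+2)c_l+ac_r+a^2}{2+6a+3a^2}$), and the left endpoint in region 4 are strictly increasing in both $c_l$ and $c_r$, so their minima are attained at the lower-left corner of the respective region. The region 3 formula $\frac{a+3c_l-c_r}{2+3a}$ is increasing in $c_l$ and decreasing in $c_r$; combined with the lower bound $c_l \geq 1-c_r$ in subcase 3.1 (which reduces the objective to the monotone function $\frac{a+3-4c_r}{2+3a}$) and the slanted lower bound in subcase 3.2 (which reduces the objective to $c_r-\frac{a}{1+3a}$ after using $9a^2+9a+2=(3a+1)(3a+2)$), each case becomes a one-dimensional optimization along the active constraint.

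Second I would compute the minimizing vertex in each case and verify that, for regions 2a, 2b, and 4, this vertex is the common corner $(c_l,c_r)=\left(\frac{1+a}{4+3a},\frac{3+2a}{4+3a}\right)$. The key identities to check at this corner are $1-\frac{3+2a}{4+3a}=\frac{1+a}{4+3a}$ and $\frac{a+c_r}{3+3a}\big|_{c_r=(3+2a)/(4+3a)}=\frac{1+a}{4+3a}$, showing that the three regions meet exactly at this point. Substitution then yields $c_i=\frac{1+a}{4+3a}$ in all three cases, relying on the identity $(3a+2)(1+a)^2 = 3a^3+8a^2+7a+2$ together with the factorization $2+5a+3a^2=(1+a)(2+3a)$, which make the numerator collapse against the denominator.

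Finally I would show the minima from the remaining cases strictly exceed $\frac{1+a}{4+3a}$. Region 1 reduces to the vertex $c_l=c_r=\frac{1+2a}{1+3a}$, giving $\frac{1+a}{1+3a}>\frac{1+a}{4+3a}$ trivially. Both subcases of region 3 meet at the value $\frac{(1+a)^2}{(1+3a)(2+a)}$ on their common boundary, and the inequality $(1+a)(4+3a)-(1+3a)(2+a)=2>0$ shows this strictly exceeds $\frac{1+a}{4+3a}$. Region 2c's minimizing corner lies at $c_r=\frac{1+2a}{1+3a}$, where a direct calculation again produces a value strictly above the target. Combining everything, the overall infimum is attained and equals $\frac{1+a}{4+3a}$, at the declared corner. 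The main obstacle is purely organizational: managing the seven subcases so that the shared corner of regions 2 and 4 is clearly identified as the global minimizer; the arithmetic in each subcase is routine once monotonicity pins down the vertex.
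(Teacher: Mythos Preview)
Your approach is essentially the same as the paper's: both proofs run through the seven cases of the reaction correspondence and minimize the (linear) optimal-choice formula over each feasible subregion, then compare the seven minima. You organize this slightly more cleanly by framing it as vertex enumeration on convex polygons and by explicitly identifying the common corner $(c_l,c_r)=\bigl(\tfrac{1+a}{4+3a},\tfrac{3+2a}{4+3a}\bigr)$ shared by regions 2a, 2b, and 4; the paper simply computes each minimum directly. One small caution: the phrase ``lower-left corner'' is misleading for region 2a, where the minimizing vertex has $c_r$ at its \emph{maximum} in that subregion (along the constraint $c_l=1-c_r$ the region-2 formula is decreasing in $c_r$), so you should state the argument as vertex enumeration rather than coordinate-wise minimization --- but your subsequent identification of the correct vertex and the supporting algebra are all sound.
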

\begin{proof}
We will prove this by analyzing the reaction correspondence of firm $i$ and computing the smallest for each of the seven cases. Consider the first case where $c_r \in [\frac{1+2a}{1+3a},1]$ and $c_l \in [\frac{2ac_r+c_r+a+2a^2}{1+3a+3a^2},c_r]$. Then firm $i$'s optimal choice is $\frac{a+c_l+c_r}{2+3a}$. The lowest possible location is then when $c_r=\frac{1+2a}{1+3a}$ and $c_l=\frac{2a\frac{1+2a}{1+3a}+\frac{1+2a}{1+3a}+a+2a^2}{1+3a+3a^2}=\frac{1+2a}{1+3a}$, which leads to a location of $\frac{a+c_l+c_r}{2+3a}=\frac{a+\frac{1+2a}{1+3a}+\frac{1+2a}{1+3a}}{2+3a}=\frac{1+a}{1+3a}$.
\newline
\newline
Consider the second case where $c_r \in [\frac{1+4a+2a^2}{2+7a+3a^2},\frac{3+2a}{4+3a}]$ and $c_l \in [1-c_r,\frac{4ac_r+3a^2c_r+c_r-a-2a^2}{1+3a}]$. Then firm $i$'s optimal choice is $\frac{3ac_l+2c_l+ac_r+a^2}{2+6a+3a^2}$. The lowest possible location is then when $c_r=\frac{3+2a}{4+3a}$ and $c_l=1-c_r=\frac{1+a}{4+3a}$, leading to a choice of  $\frac{3a\frac{1+a}{4+3a}+2\frac{1+a}{4+3a}+a\frac{3+2a}{4+3a}+a^2}{2+6a+3a^2}=\frac{1+a}{4+3a}< \frac{1+a}{1+3a}$. Similarly the third case where $ c_r \in [\frac{3+2a}{4+3a},\frac{1+2a}{1+3a}]$ and $ c_l \in [\frac{a+c_r}{3+3a},\frac{4ac_r+3a^2c_r+c_r-a-2a^2}{1+3a}]$, the optimal location of firm $i$ is $\frac{3ac_l+2c_l+ac_r+a^2}{2+6a+3a^2}$. The lowest possible location is then when $c_r=\frac{3+2a}{4+3a}$ and $c_l=\frac{1+a}{4+3a}$, leading to a location of $\frac{1+a}{4+3a}$. For the fourth case, where $ c_r \in [\frac{1+2a}{1+3a},1]$ and $c_l \in [\frac{a+c_r}{3+3a},\frac{c_r+2ac_r+a+2a^2}{1+3a+3a^2}]$, the optimal choice is $\frac{3ac_l+2c_l+ac_r+a^2}{2+6a+3a^2}$. As $c_r \geq \frac{1+2a}{1+3a} > \frac{3+2a}{4+3a}$ and $c_l > \frac{1+a}{4+3a}$, we have a lowest optimal location strictly greater than $\frac{1+a}{4+3a}$.
\newline
\newline
For the fifth case, consider $c_r \in [\frac{1}{2},\frac{1+4a+2a^2}{2+7a+3a^2}]$ and $c_l \in [1-c_r,c_r]$, which leads to an optimal choice of $ \frac{a+3c_l-c_r}{2+3a}$. The lowest possible location is then when $c_r=\frac{1+4a+2a^2}{2+7a+3a^2}$ and $c_l=1-c_r=\frac{1+3a+a^2}{2+7a+3a^2}$, leading to a choice of  $\frac{a+3(\frac{1+3a+a^2}{2+7a+3a^2})-\frac{1+4a+2a^2}{2+7a+3a^2}}{2+3a}= \frac{(a+1)^2}{2+7a+3a^2} > \frac{1+a}{4+3a}$. Similarly, for the sixth case, where $ c_r \in [\frac{1+4a+2a^2}{2+7a+3a^2},\frac{1+2a}{1+3a}]$ and $ c_l \in [\frac{4ac_r+3a^2c_r+c_r-a-2a^2}{1+3a},c_r]$, the optimal choice is $ \frac{a+3c_l-c_r}{2+3a}$. The lowest possible location is then again when $c_r=\frac{1+4a+2a^2}{2+7a+3a^2}$ and $c_l=\frac{4ac_r+3a^2c_r+c_r-a-2a^2}{1+3a}=\frac{1+3a+a^2}{2+7a+3a^2}$, leading to the same location as the fifth case.
\newline
\newline
For the last case, consider $ c_r \in [\frac{3+2a}{4+3a},1]$ and $c_l \in [1-c_r,\frac{a+c_r}{3+3a}]$, which leads to a set of optimal choices of $ [\frac{2c_l+2ac_l+ac_r+a^2}{2+5a+3a^2},\frac{ac_l+2ac_r+2c_r+2a+2a^2}{2+5a+3a^2}]$. The lowest optimal choice is then when $c_r=\frac{3+2a}{4+3a}$ and $c_l=1-c_r=\frac{1+a}{4+3a}$, which leads to a lowest optimal choice of $\frac{2\frac{1+a}{4+3a}+2a\frac{1+a}{4+3a}+a\frac{3+2a}{4+3a}+a^2}{2+5a+3a^2}=\frac{1+a}{4+3a}$. 
\end{proof}

\subsubsection*{Proof of Theorem 2}
\begin{proof}
We will first prove by induction that for any round $k$ of the iterative procedure, we have that $P_i^k=[1-U^k,U^k]$, where $U^k=\frac{(3+2a)(a+U^{k-1})}{3(1+a)^2}$ and $U^0=1$. We will also prove that $U^k > \frac{3+2a}{4+3a}$ for all $k \geq 0$. Note that we have that $P_i^0=[0,1]$, which is our base case.
\newline
\newline
Next, let the iterative procedure be in round $k \geq 1$, and assume that $P_i^{k-1}=[1-U^{k-1},U^{k-1}]$ and assume that $U^{k-1} > \frac{3+2a}{4+3a}$. We will then prove that $P_i^k=[1-\frac{(3+2a)(a+U^{k-1})}{3(1+a)^2},\frac{(3+2a)(a+U^{k-1})}{3(1+a)^2}]$ and that $U^k > \frac{3+2a}{4+3a}$.  
\newline
\newline
As $U^{k-1}>  \frac{3+2a}{4+3a}$, we have that $[\frac{1+a}{4+3a},\frac{3+2a}{4+3a}] \subseteq P_i^{k-1}$. Lemma \ref{smallestchoice} then implies that for any belief where $c_l \leq c_r$ and $c_l \geq 1-c_r$, the smallest optimal choice of firm $i$ is $\frac{1+a}{4+3a}$, where $c_l=\frac{1+a}{4+3a} \in P_i^{k-1}$ and $c_r=\frac{3+2a}{4+3a} \in P_i^{k-1}$. Hence, by symmetry, for the belief $(1-c_r,1-c_l)$, the greatest optimal choice is $1-\frac{1+a}{4+3a}=\frac{3+2a}{4+3a}$. 
\newline
\newline
We will now show that for the belief $(c_l,c_r)$, where $c_l \leq c_r$ and $c_l \geq 1-c_r$, the greatest optimal choice of firm $i$ is $\frac{(3+2a)(a+U^{k-1})}{3(1+a)^2}$. Consider the reaction correspondence of firm $i$. The first possible optimal choice is $ \frac{a+c_l+c_r}{2+3a}$, which is increasing in $c_l$ and $c_r$. Hence, this choice can be at most $\frac{a+2U^{k-1}}{2+3a}$. It can be verified that $\frac{a+2U^{k-1}}{2+3a} < \frac{(3+2a)(a+U^{k-1})}{3(1+a)^2}$. Next, consider the optimal choice $ \frac{3ac_l+2c_l+ac_r+a^2}{2+6a+3a^2}$, which is increasing in $c_l$ and $c_r$. Hence, this choice can be at most $\frac{3a\frac{U^{k-1}+2aU^{k-1}+a+2a^2}{1+3a+3a^2}+2\frac{U^{k-1}+2aU^{k-1}+a+2a^2}{1+3a+3a^2}+aU^{k-1}+a^2}{2+6a+3a^2}=\frac{(1+a)(a+U^{k-1})}{1+3a+3a^2}$. It can be verified that $\frac{(1+a)(a+U^{k-1})}{1+3a+3a^2} < \frac{(3+2a)(a+U^{k-1})}{3(1+a)^2}$. Next, for the choice $\frac{a+3c_l-c_r}{2+3a}$ we observe that $\frac{a+3c_l-c_r}{2+3a} \leq \frac{a+2c_l}{2+3a} \leq \frac{a+2U^{k-1}}{2+3a} < \frac{(3+2a)(a+U^{k-1})}{3(1+a)^2}$. 
\newline
\newline
Lastly, consider the interval of optimal choices $[\frac{2c_l+2ac_l+ac_r+a^2}{2+5a+3a^2},\frac{ac_l+2ac_r+2c_r+2a+2a^2}{2+5a+3a^2}]$, corresponding to $c_r \in [\frac{3+2a}{4+3a},U^{k-1}]$ and $c_l \in [1-c_r, \frac{a+c_r}{3+3a}]$, and as $U^{k-1}> \frac{3+2a}{4+3a}$, the latter two sets are nonempty. The largest choice is then when $c_r=U^{k-1}$ and $c_l=\frac{a+U^{k-1}}{3+3a}$. This leads to an interval of choices of \newline $[\frac{2\frac{a+U^{k-1}}{3+3a}+2a\frac{a+U^{k-1}}{3+3a}+aU^{k-1}+a^2}{2+5a+3a^2},\frac{a\frac{a+U^{k-1}}{3+3a}+2aU^{k-1}+2U^{k-1}+2a+2a^2}{2+5a+3a^2}]=[\frac{a+U^{k-1}}{3+3a},\frac{(3+2a)(a+U^{k-1})}{3(1+a)^2}]$.  Hence, for this belief, any choice in $[\frac{a+U^{k-1}}{3+3a},\frac{(3+2a)(a+U^{k-1})}{3(1+a)^2}]$ survives round $k$ of the iterative procedure. By symmetry, for the belief $(1-c_r,1-c_l)$, the choices $[1-\frac{(3+2a)(a+U^{k-1})}{3(1+a)^2},\frac{3+2a-U^{k-1}}{3+3a}]$ are optimal. Hence, we have that $P_i^k=[1-\frac{(3+2a)(a+U^{k-1})}{3(1+a)^2},\frac{(3+2a)(a+U^{k-1})}{3(1+a)^2}]$.
\newline
\newline
Next, we will verify that $\frac{(3+2a)(a+U^{k-1})}{3(1+a)^2} > \frac{3+2a}{4+3a}$. By assumption, we have that $U^{k-1} > \frac{3+2a}{4+3a}$. Hence, we have that $U^k=\frac{(3+2a)(a+U^{k-1})}{3(1+a)^2} > \frac{(3+2a)(a+\frac{3+2a}{4+3a})}{3(1+a)^2} = \frac{3+2a}{4+3a}$  
\newline
\newline
Lastly, we are interested in the set of point rationalizable choices of each firm. As $U^k$ and $U^{k-1}$ converge to the same point $p$, it should hold that $p=\frac{(3+2a)(a+p)}{3(1+a)^2}$, which implies that $p=\frac{3+2a}{4+3a}$. Hence, the set of point rationalizable choices is given by $P_i=[1-p,p]=[\frac{1+a}{4+3a},\frac{3+2a}{4+3a}]$
\end{proof}
\newpage
\bibliographystyle{apacite}
\bibliography{references}

\end{document}